\documentclass[final,reqno]{amsart}

\usepackage[colorlinks=true,allcolors=blue]{hyperref}
\usepackage{amsaddr}
\usepackage{amssymb,amsfonts,latexsym,amsmath,wrapfig,graphicx,amsthm}
\usepackage[square,numbers]{natbib}
\usepackage{bm}
\usepackage{enumerate}
\usepackage{chngcntr}
\usepackage[dvipsnames]{xcolor}
\usepackage{tikz}
\usetikzlibrary{patterns,decorations.pathreplacing}
\usepackage{physics}
\usepackage{bbold}
\usepackage[makeroom]{cancel}
\usepackage{soul}

\usepackage{stmaryrd}

\usepackage{mathtools}

\usepackage{mathrsfs}

\undef\ul

\undef\Span
\DeclareMathOperator{\Span}{Span}

\undef\div
\DeclareMathOperator{\div}{div}
\DeclareMathOperator{\supp}{supp}

\DeclareMathOperator{\spec}{spec}

\newtheorem{theorem}{Theorem}[section]
\newtheorem{lemma}[theorem]{Lemma}

\newtheorem*{claim}{Claim}

\theoremstyle{remark}
\newtheorem{remark}{Remark}

\theoremstyle{definition}
\newtheorem{definition}[theorem]{Definition}

\renewcommand{\le}{\leqslant}
\renewcommand{\ge}{\geqslant}

\newcommand{\RR}{\mathbb{R}}
\newcommand{\CC}{\mathbb{C}}
\newcommand{\ZZ}{\mathbb{Z}}
\newcommand{\NN}{\mathbb{N}}

\newcommand{\iden}{\mathbb{1}}

\def\XXint#1#2#3{{\setbox0=\hbox{$#1{#2#3}{\int}$ }
\vcenter{\hbox{$#2#3$ }}\kern-.6\wd0}}

\newcommand{\ul}[1]{\underline{#1}}

\newcommand{\TC}{\mathcal{T}}

\newcommand{\EC}{\mathcal{E}}

\newcommand{\Cc}{\mathcal{C}}

\newcommand{\BC}{\mathcal{B}}

\newcommand{\TF}{\mathrm{TF}}

\newcommand{\UEG}{\mathrm{UEG}}
\newcommand{\Jel}{\mathrm{Jel}}

\newcommand{\Fock}{\mathcal{F}}

\newcommand{\Hil}{\mathfrak{H}}

\let\phi=\varphi

\newcommand{\vGamma}{{\bm{\Gamma}}}

\undef\vb
\newcommand{\vb}{{\mathbf{b}}}

\renewcommand{\epsilon}{\varepsilon}

\newcommand{\tetra}{\mathbb{\Delta}}
\newcommand{\SO}{\mathrm{SO}}

\title[Equality of Jellium and UEG]{Equality of the Jellium and the\\Uniform Electron Gas Energies}

\author{Mih\'aly A. Csirik}
\address{Department of Mathematical Sciences, Chalmers University of
Technology and the University of Gothenburg, G\"oteborg, Sweden}
\email{csirik@gmail.com}
\hypersetup{
  pdftitle={Equality of the Jellium and the Uniform Electron Gas Energies},
  pdfauthor={Mih\'aly A. Csirik}
}

\begin{document}
\raggedbottom

\begin{abstract}
Jellium, a system of electrons in a uniform neutralizing background, and the uniform electron gas (UEG), a system of electrons with constant density and no background, are two standard models of the electron gas. We prove that the ground-state energy per unit volume of Jellium, as defined by Lieb and Narnhofer, coincides with the UEG energy defined by Lewin, Lieb and Seiringer, i.e.
\[
e^\hbar_\Jel(\rho_0)=e^\hbar_\UEG(\rho_0).
\]
This equality was previously established in the classical case by Lewin, Lieb and Seiringer and independently by Cotar and Petrache.
\end{abstract}

\maketitle


\section{Introduction}

The electron gas, an infinite system of interacting electrons spread uniformly over space, is one of the most studied models of many-body quantum mechanics~\cite{giuliani2005quantum}. In the physics literature the terms ``electron gas'', (interacting) ``Fermi gas'' and ``Jellium'' are mostly used interchangeably. Jellium is a system of electrons moving in a uniform background of positive charge which exactly neutralizes them. The uniformity of Jellium is, however, only that of the background. Wigner predicted that the electrons crystallize at low density~\cite{wigner1934interaction}, and so the ground state of Jellium is not expected to have constant density~\cite{lewin2019floating}. Mathematically, the thermodynamic limit of Jellium was established by Lieb and Narnhofer~\cite{lieb1975thermodynamic}, building on the work of Lieb and Lebowitz~\cite{lieb1972constitution}. For the ground-state energy their result reads as follows. Let $\{\Omega_N\}$ be a sequence of bounded domains that grow regularly to $\RR^3$, for instance balls, with $\rho_0|\Omega_N|=N$. Then the limit
\begin{multline*}
e^\hbar_\Jel(\rho_0)=\lim_{N\to\infty}\frac1{|\Omega_N|}\inf\spec\bigg(-\hbar^2\sum_{j=1}^N\Delta_{x_j}+\sum_{1\le j<k\le N}\frac{1}{|x_j-x_k|}\\
-\rho_0\sum_{j=1}^N\int_{\Omega_N}\frac{\mathrm dy}{|x_j-y|}+\frac{\rho_0^2}2\iint_{\Omega_N\times\Omega_N}\frac{\mathrm dx\,\mathrm dy}{|x-y|}\bigg)
\end{multline*}
exists and does not depend on the shape of the domains. Here the Hamiltonian acts on antisymmetric functions of $N$ variables in $\Omega_N$, each carrying $q$ spin states, with homogeneous Dirichlet boundary conditions on $\partial\Omega_N$. The third term is the attraction between the electrons and the background, and the last one is the electrostatic self-energy of the background. The system is thus \emph{exactly neutral} as the number of electrons equals the total charge of the background.

The approach of Lewin, Lieb and Seiringer~\cite{lewin2018statistical} does not involve the ground-state problem. They \emph{define} the uniform electron gas (UEG) as a system of electrons whose density is constant everywhere, without any background. In the quantum case its energy is expressed through the grand-canonical Levy--Lieb functional~\cite{lewin2019universal}, the minimal kinetic and interaction energy of electrons with a prescribed density. Since quantum densities cannot be cut off sharply, the indicator function of the domain has to be smoothed. In the convenient form of~\cite{lewin2020local}, the UEG energy per unit volume is
\begin{multline*}
e^\hbar_\UEG(\rho_0)=\\
\lim_{N\to\infty}\frac1{|\Omega_N|}\Bigg(\inf_{\rho_{\vGamma}=\rho_0\iden_{\Omega_N}*\chi_{\delta_N}}\sum_{n\ge1}\Tr\bigg[\bigg(-\hbar^2\sum_{j=1}^n\Delta_{x_j}+\sum_{1\le j<k\le n}\frac1{|x_j-x_k|}\bigg)\Gamma^{(n)}\bigg]\\
-\frac{\rho_0^2}2\iint_{\RR^3\times\RR^3}\frac{(\iden_{\Omega_N}*\chi_{\delta_N})(x)\,(\iden_{\Omega_N}*\chi_{\delta_N})(y)}{|x-y|}\,\mathrm dx\,\mathrm dy\Bigg),
\end{multline*}
where $\chi_\delta(x)=\delta^{-3}\chi(\delta^{-1}x)$ for a fixed radial function $0\le\chi\in C_c^\infty(\RR^3)$ with $\supp\chi\subset B_1$ and $\int_{\RR^3}\chi=1$. The domains satisfy the hypotheses of Theorem~\ref{thm:UEG}(ii), and the smoothing scales $\delta_N>0$ satisfy $\delta_N/|\Omega_N|^{1/3}\to0$ and $\delta_N|\Omega_N|^{1/3}\to\infty$. The infimum is taken over all grand-canonical fermion states $\vGamma=(\Gamma^{(n)})_{n\ge0}$, satisfying the constraint $\rho_{\vGamma}=\rho_0\iden_{\Omega_N}*\chi_{\delta_N}$. The limit exists~\cite{lewin2018statistical,lewin2020local}, and it has the expected behavior at high density~\cite{lewin2018statistical}.

For the classical UEG (which is obtained by taking $\hbar=0$, dropping $\chi_\delta$ and optimizing over classical $N$-particle probability measures), it was shown that the Jellium and the UEG energies coincide~\cite{lewin2019floating,cotar2017equality}. Averaging a crystal over translations (the ``floating crystal'') gives a constant density, but the charge fluctuations at the boundary of the sample raise the energy by an amount proportional to the volume~\cite[Appendix~B]{lewin2015improved}. This issue was overcome by Lewin, Lieb and Seiringer, who constructed a floating Wigner crystal with no boundary charge fluctuations by introducing a liquid layer at the boundary~\cite{lewin2019floating}, and independently by Cotar and Petrache~\cite{cotar2017equality} by a different method.

In the quantum case, it is believed that the two energies coincide as well~\cite[Sec.~2.2]{lewin2020local}. The easy inequality $e_\UEG^\hbar(\rho_0)\ge e_\Jel^\hbar(\rho_0)$ was proved in~\cite{lewin2018statistical} for a grand-canonical version of Jellium. The reverse inequality is harder, because Jellium states, which are expected to be crystalline, have to be turned into states with an exactly constant density at a cost which is negligible compared with the volume.

In this paper we prove that the Jellium and the UEG energies coincide also in the quantum case. Our main result, Theorem~\ref{thm:main} below, states that
\[
\boxed{e^\hbar_\Jel(\rho_0)=e^\hbar_\UEG(\rho_0).}
\]
In particular, the known properties of the UEG carry over to Jellium (Remark~\ref{rem:main_consequences}). For instance, at fixed $\hbar>0$,
\[
e^\hbar_\Jel(\rho_0)=\frac35\Big(\frac{6\pi^2}{q}\Big)^{2/3}\hbar^2\rho_0^{5/3}-\frac34\Big(\frac{6}{q\pi}\Big)^{1/3}\rho_0^{4/3}+o(\rho_0^{4/3})\qquad\text{as }\rho_0\to\infty.
\]
The first term is the kinetic energy of the free Fermi gas and the second one is the Dirac exchange energy. Moreover, $\rho_0^{-4/3}e^\hbar_\Jel(\rho_0)$ converges to the classical Jellium energy at unit density as $\hbar^2\rho_0^{1/3}\to0$ (see Figure~\ref{fig:energy_shape}).
\begin{figure}[ht]
\centering
\begin{tikzpicture}[>=stealth,xscale=1.75,yscale=1.75,
  declare function={
    fg(\x)=\x^(5/3)-\x^(4/3);
    cl(\x)=-1.9553*\x^(4/3);
    ee(\x)=\x^(5/3)-\x^(4/3)-0.9553*\x^(4/3)/(1+3.443*\x^(1/6));}]
  \draw[->] (0,0) -- (2.75,0) node[below left] {$\rho_0$};
  \draw[->] (0,-0.95) -- (0,1.55);
  \draw[densely dashed,domain=0.0001:2.62,samples=120,smooth] plot (\x,{fg(\x)});
  \draw[dash dot,domain=0.0001:0.62,samples=60,smooth] plot (\x,{cl(\x)});
  \draw[very thick,domain=0.0001:2.62,samples=160,smooth] plot (\x,{ee(\x)});
  \node[right] at (2.62,{fg(2.62)}) {$\hbar^2c_\TF\rho_0^{5/3}-c_{\mathrm D}\rho_0^{4/3}$};
  \node[right] at (2.62,{ee(2.62)}) {$e^\hbar_\UEG(\rho_0)=e^\hbar_\Jel(\rho_0)$};
  \node[below right] at (0.62,{cl(0.62)}) {$e_\UEG\,\rho_0^{4/3}$};
  \node[left] at (0,0) {$0$};
\end{tikzpicture}
\caption{Sketch of $e^\hbar_\UEG(\rho_0)=e^\hbar_\Jel(\rho_0)$ as a function of $\rho_0$.}
\label{fig:energy_shape}
\end{figure}
\vspace{.3em}\\
\noindent\emph{Organization of the paper.} In Section~\ref{sec:setting}, we fix our notation (Section~\ref{sec:notation}) and recall the definitions and the known results for Jellium (Section~\ref{sec:LN}) and for the UEG (Section~\ref{sec:UEG}). Section~\ref{sec:main} contains our main result, Theorem~\ref{thm:main}, its consequences for Jellium (Remark~\ref{rem:main_consequences}) and an outline of its proof (Section~\ref{sec:main_strategy}). The proofs are given in Section~\ref{sec:proofs}: after collecting some general facts in Section~\ref{sec:basic}, we prove the inequality $e^\hbar_\Jel(\rho_0)\le e^\hbar_\UEG(\rho_0)$ in Section~\ref{sec:lower} and the inequality $e^\hbar_\UEG(\rho_0)\le e^\hbar_\Jel(\rho_0)$ in Section~\ref{sec:upper}. Appendix~\ref{sec:LN_proof} sketches the proof of the thermodynamic limit of Jellium on balls at zero temperature, and Appendix~\ref{sec:canonical_completion} contains the neutralization construction used in Section~\ref{sec:lower}.

\subsection*{Acknowledgments}
The author has received funding from the ERC-2021-STG under grant agreement No.~101041487 REGAL.

\subsection*{Use of AI tools}
The author made substantial use of the LLMs ChatGPT (OpenAI) and Claude (Anthropic) in the preparation of this work, including the development of the proofs, writing and making figures. The author has checked all the proofs by hand, and takes full responsibility for the content of the article.

\subsection*{Lean verification}
The formalization in \texttt{lean\_gpt}\footnote{\url{https://github.com/xyzw/jellium_ueg}}, produced with the help of the same AI tools, checks the lower and upper comparisons of Sections~\ref{sec:lower} and~\ref{sec:upper} and the neutralization construction of Appendix~\ref{sec:canonical_completion} in Lean~4 with Mathlib.
The verification is relative to explicit hypotheses on abstract states and charges. The cited literature results, including the physical Lieb--Narnhofer thermodynamic limit, and the general facts of Section~\ref{sec:basic} remain inputs. The representation of quantum states by density operators on Hilbert space and the general identities for their kinetic and Coulomb energies are assumed rather than formalized.

\section{Setting}\label{sec:setting}

\subsection{Notation}\label{sec:notation}

For $\ell>0$ we write $C_\ell=(-\ell/2,\ell/2)^3$ for the open cube of side length $\ell$ centered at the origin. For functions $f,g:\RR^3\to\RR$  we write
\[
D(f,g):=\frac12\int_{\RR^3}\int_{\RR^3}\frac{f(x)g(y)}{|x-y|}\,\mathrm{d}x\,\mathrm{d}y,\qquad D(f):=D(f,f)
\]
for the Coulomb inner product and the direct Coulomb energy. For an integer spin multiplicity $q\ge1$, the Thomas--Fermi and Dirac constants are
\[
c_\TF=\frac35\Big(\frac{6\pi^2}{q}\Big)^{2/3},\qquad c_{\mathrm{D}}=\frac34\Big(\frac{6}{q\pi}\Big)^{1/3}.
\]
Let $\Hil^n:=\bigwedge^nL^2(\RR^3\times\{1,\dots,q\})$, with $\Hil^0=\CC$. To us, a grand-canonical fermion state is a family $\vGamma=(\Gamma^{(n)})_{n\ge0}$ of operators $0\le\Gamma^{(n)}=(\Gamma^{(n)})^*$ on $\Hil^n$ with $\sum_{n\ge0}\Tr\Gamma^{(n)}=1$, that is, a state on the Fock space $\bigoplus_{n\ge0}\Hil^n$ that commutes with the particle-number operator, i.e. a number-conserving state. Writing $\mathbf{x}=(x,s)\in\RR^3\times\{1,\dots,q\}$, the fermionic statistics means that the integral kernel of $\Gamma^{(n)}$ is antisymmetric in each group of variables:
\[
\Gamma^{(n)}(X_\sigma;Y)=\Gamma^{(n)}(X;Y_\sigma)=\operatorname{sgn}(\sigma)\,\Gamma^{(n)}(X;Y)
\]
for all $X=(\mathbf{x}_1,\dots,\mathbf{x}_n)$, $Y=(\mathbf{y}_1,\dots,\mathbf{y}_n)$ and every permutation $\sigma\in\mathfrak{S}_n$, where $X_\sigma:=(\mathbf{x}_{\sigma(1)},\dots,\mathbf{x}_{\sigma(n)})$. The density of $\vGamma$ is $\rho_{\vGamma}:=\sum_{n\ge1}\rho_{\Gamma^{(n)}}$, where
\[
\rho_{\Gamma^{(n)}}(x):=n\sum_{s=1}^q\int\Gamma^{(n)}(x,s,\mathbf{x}_2,\dots,\mathbf{x}_n;x,s,\mathbf{x}_2,\dots,\mathbf{x}_n)\,\mathrm{d}\mathbf{x}_2\cdots\mathrm{d}\mathbf{x}_n,
\]
with the convention $\int\mathrm{d}\mathbf{x}:=\sum_{s=1}^q\int_{\RR^3}\mathrm{d}x$. Thus $\int_{\RR^3}\rho_{\vGamma}=\sum_{n\ge1}n\Tr\Gamma^{(n)}$ is the expected number of particles.

\subsection{Jellium}\label{sec:LN}

Here we summarize the results of Lieb and Narnhofer~\cite{lieb1975thermodynamic} on the thermodynamic limit of Jellium, restricted to what concerns the \emph{quantum ground-state energy per unit volume} $e^\hbar_\Jel(\rho_0)$.

Let $\rho_0>0$. For a bounded measurable set $\Omega\subset\RR^3$ and $N\ge1$, the classical Jellium energy of $N$ points $X=(x_1,\dots,x_N)\in(\RR^3)^N$ in the background $\rho_0\iden_\Omega$ is
\begin{equation}\label{eq:LN_U}
U^{\rho_0}_\Omega(X):=\sum_{1\le j<k\le N}\frac{1}{|x_j-x_k|}-\rho_0\sum_{j=1}^N\int_\Omega\frac{\mathrm{d}y}{|x_j-y|}+D(\rho_0\iden_\Omega).
\end{equation}
Here the points need not lie in $\Omega$, and $N$ need not equal $\rho_0|\Omega|$. Now let $\Omega$ be a bounded open set with $N:=\rho_0|\Omega|\in\NN$ (neutrality). The quantum Jellium Hamiltonian is formally
\begin{equation}\label{eq:LN_H}
H^{\hbar,\rho_0}_\Omega:=\sum_{j=1}^N\big(-\hbar^2\Delta_{x_j}\big)+U^{\rho_0}_\Omega(x_1,\dots,x_N),
\end{equation}
acting on $\bigwedge_{j=1}^NL^2(\Omega\times\{1,\dots,q\})$. More precisely, for $\hbar>0$, $H^{\hbar,\rho_0}_\Omega$ denotes the Friedrichs extension of this operator from $\bigwedge^NC_c^\infty(\Omega;\CC^q)$, whose form domain consists of antisymmetric functions in $H^1_0(\Omega^N;(\CC^q)^{\otimes N})$; this corresponds to \emph{homogeneous Dirichlet boundary conditions} on $\partial\Omega$. Its ground-state energy is
given by the variational principle
\[
\boxed{E^\hbar_\Jel(\rho_0,\Omega):=\inf_{\|\Psi\|=1}\big\langle\Psi,H^{\hbar,\rho_0}_\Omega\Psi\big\rangle.}
\]
For $\hbar=0$ this is the classical energy $E^0_\Jel(\rho_0,\Omega)=\inf_{X\in\Omega^N}U^{\rho_0}_\Omega(X)$, see~\cite{lieb1975thermodynamic,sari1976nu,lewin2018statistical,lewin2022coulomb}. The energy per unit volume is
\[
e^\hbar_\Jel(\rho_0,\Omega):=\frac{E^\hbar_\Jel(\rho_0,\Omega)}{|\Omega|}.
\]
The class of admissible domains $\Omega$ for taking the thermodynamic limit is essentially that of Lieb and Lebowitz~\cite[Sec.~V]{lieb1972constitution} (conditions (A), (B) below, which we paraphrase), together with four extra conditions introduced for the quantum case in~\cite[Sec.~5.2]{lieb1975thermodynamic}. For $h>0$, consider the inner and outer layers of $\Omega_N$ of width $h$,
\begin{align*}
\Sigma^{-1}_{N,h}&:=\{x\in\Omega_N:\ d(x,\partial\Omega_N)\le h\}, & \Sigma^{-2}_{N,h}&:=\{x\in\Omega_N:\ h<d(x,\partial\Omega_N)\le2h\},\\
\Sigma^{+1}_{N,h}&:=\{x\notin\Omega_N:\ d(x,\Omega_N)\le h\}, & \Sigma^{+2}_{N,h}&:=\{x:\ h<d(x,\Omega_N)\le2h\}
\end{align*}
(see Figure~\ref{fig:LN_layers}), and the neutral double layers below, with the inner one defined whenever $|\Sigma^{-2}_{N,h}|>0$,
\[
\sigma^-_{N,h}:=\iden_{\Sigma^{-1}_{N,h}}-\frac{|\Sigma^{-1}_{N,h}|}{|\Sigma^{-2}_{N,h}|}\iden_{\Sigma^{-2}_{N,h}},\qquad
\sigma^+_{N,h}:=\iden_{\Sigma^{+1}_{N,h}}-\frac{|\Sigma^{+1}_{N,h}|}{|\Sigma^{+2}_{N,h}|}\iden_{\Sigma^{+2}_{N,h}}.
\]

\begin{figure}[ht]
\centering
\begin{tikzpicture}[>=stealth,x=0.9cm,y=0.9cm,dom/.style={rounded corners=0.45cm,line join=round}]
  \def\dom{(-2.6,-1.6) -- (2.4,-1.6) -- (2.6,0.4) -- (1.0,0.6) -- (0.6,1.8) -- (-1.8,1.6) -- (-2.8,0.2) -- cycle}
  \draw[dom,red!15,line width=1.08cm] \dom;
  \draw[dom,red!40,line width=0.54cm] \dom;
  \fill[dom,white] \dom;
  \begin{scope}[dom]
    \clip \dom;
    \draw[dom,blue!15,line width=1.08cm] \dom;
    \draw[dom,blue!40,line width=0.54cm] \dom;
  \end{scope}
  \draw[dom,thick] \dom;
  \node at (-1.0,0.4) {$\Omega_N$};
  \draw (-1.0,-2.2) -- (-1.0,-1.0);
  \foreach \y in {-2.2,-1.9,-1.6,-1.3,-1.0} \draw (-1.08,\y) -- (-0.92,\y);
  \foreach \y in {-2.05,-1.75,-1.45,-1.15} \node[left,inner sep=1pt] at (-1.08,\y) {\scriptsize$h$};
  \draw[->] (4.2,1.0) node[right] {$\Sigma^{-2}_{N,h}$} -- (2.087,-0.205);
  \draw[->] (4.2,0.2) node[right] {$\Sigma^{-1}_{N,h}$} -- (2.361,-0.485);
  \draw[->] (4.2,-0.6) node[right] {$\Sigma^{+1}_{N,h}$} -- (2.634,-0.765);
  \draw[->] (4.2,-1.4) node[right] {$\Sigma^{+2}_{N,h}$} -- (2.908,-1.045);
\end{tikzpicture}
\caption{Two-dimensional sketch of the inner layers $\Sigma^{-1}_{N,h},\Sigma^{-2}_{N,h}$ (blue) and the outer layers $\Sigma^{+1}_{N,h},\Sigma^{+2}_{N,h}$ (red) of $\Omega_N$, each of width $h$. The double layer $\sigma^+_{N,h}$ is positive on $\Sigma^{+1}_{N,h}$ and negative on $\Sigma^{+2}_{N,h}$, and similarly for $\sigma^-_{N,h}$.}
\label{fig:LN_layers}
\end{figure}

\begin{definition}\label{def:LN_domains}
A sequence $\{\Omega_N\}_{N\ge1}$ of bounded open sets with $\rho_0|\Omega_N|=N$ for every integer $N\ge1$ is \emph{LN-admissible} if
\begin{enumerate}
\item[(A)] (Van Hove) $\big|\{x:\ d(x,\partial\Omega_N)\le h\}\big|=o(|\Omega_N|)$ for every $h>0$;
\item[(B)] (ball condition) $|\Omega_N|\ge\delta\,|B(\Omega_N)|$ for some $\delta>0$ independent of $N$, where $B(\Omega_N)$ is the smallest ball containing $\Omega_N$;
\item[(i)] $\sup_N|\Sigma^{+1}_{N,h}|/|\Sigma^{+2}_{N,h}|<\infty$ for every $h>0$;
\item[(ii)] the potential $\varphi_{N,h}(x):=\int_{\RR^3}\sigma^+_{N,h}(y)|x-y|^{-1}\,\mathrm dy$ is $o(1)$
as $h\to0$, uniformly in $N\ge1$ and $x\in\Omega_N\cup\Sigma^{+1}_{N,h}\cup\Sigma^{+2}_{N,h}$;
\item[(iii)] $D(\sigma^+_{N,h})=o(|\Omega_N|)$ for every $h>0$;
\item[(iv)] for every $h>0$, $|\Sigma^{-2}_{N,h}|>0$ for all sufficiently large $N$, and $D(\sigma^-_{N,h})=o(|\Omega_N|)$ as $N\to\infty$.
\end{enumerate}
\end{definition}

For comparison, the Fisher $\eta$-regularity used in~\cite{lewin2018statistical} implies (A), and together with connectedness it implies (B) by~\cite[Sec.~V and Appendix~A]{lieb1972constitution}, as used in the proof of~\cite[Thm.~2.6]{lewin2018statistical}. Neutral balls form an LN-admissible sequence (Remark~\ref{rem:LN_balls} in Appendix~\ref{sec:LN_proof}).

\begin{theorem}[Stability {\cite[Appendix]{lieb1975thermodynamic}}]\label{thm:LN_stability}
Let $\rho_0>0$ and let $\Omega\subset\RR^3$ be a bounded measurable set.
\begin{enumerate}[(i)]
\item (Classical bound) For every $N\ge1$ and every $X\in(\RR^3)^N$,
\begin{equation}\label{eq:LN_Onsager}
U^{\rho_0}_\Omega(X)\ge-\frac{9}{10}\,\frac{N}{r_s}=-\frac{9}{10}\Big(\frac{4\pi}{3}\Big)^{1/3}\rho_0^{1/3}N,\qquad r_s=\Big(\frac{3}{4\pi\rho_0}\Big)^{1/3}.
\end{equation}
In particular, if $N=\rho_0|\Omega|$, then $E^\hbar_\Jel(\rho_0,\Omega)\ge E^0_\Jel(\rho_0,\Omega)\ge-c_{\rm LN}\,\rho_0^{4/3}|\Omega|$, where
\[
c_{\rm LN}:=\frac35\Big(\frac{9\pi}{2}\Big)^{1/3}=\frac{9}{10}\Big(\frac{4\pi}{3}\Big)^{1/3}\simeq1.4508.
\]
\item (Quantum bound) If $\Omega$ is open and $N=\rho_0|\Omega|$, then
\begin{equation}\label{eq:LN_quantum_stability}
E^\hbar_\Jel(\rho_0,\Omega)\ge\big(\hbar^2c_\TF\,\rho_0^{5/3}-c_{\rm LN}\,\rho_0^{4/3}\big)|\Omega|.
\end{equation}
\end{enumerate}
\end{theorem}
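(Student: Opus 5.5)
The plan for (i) is Onsager's smearing argument: replace each electron by a uniformly charged ball of radius $r_s$, use positivity of the Coulomb energy, and compute the two error terms exactly. The key fact is that with $\rho_0\cdot\frac{4\pi}{3}r_s^3=1$, the smeared ball carries unit charge and has exactly the background density. Part (ii) then follows by adding the Berezin--Li--Yau lower bound for the Dirichlet kinetic energy.

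\textbf{Step 1 (smearing and positivity).} Set $\mu_j:=\rho_0\iden_{B(x_j,r_s)}$, so $\int\mu_j=1$. Since the Coulomb kernel is positive definite, $D\big(\sum_j\mu_j-\rho_0\iden_\Omega\big)\ge0$. With the paper's convention $2D(f,g)=\iint f(x)g(y)|x-y|^{-1}$, this expands to
\[
\sum_{j<k}2D(\mu_j,\mu_k)+\sum_jD(\mu_j)-\sum_j2D(\mu_j,\rho_0\iden_\Omega)+D(\rho_0\iden_\Omega)\ge0 .
\]

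\textbf{Step 2 (comparing with point charges).} Three facts from Newton's theorem are needed.
\begin{itemize}
\item Pair interactions: $2D(\mu_j,\mu_k)\le|x_j-x_k|^{-1}$, because the potential of a uniform ball is at most that of the point charge.
\item Background attraction: the potential of $\mu_j$ is $V_j(y)=|y-x_j|^{-1}$ outside the ball and $(3r_s^2-|y-x_j|^2)/(2r_s^3)$ inside it, and $V_j\le|\cdot-x_j|^{-1}$ everywhere. The difference is nonnegative and supported in $B(x_j,r_s)$, so enlarging $\Omega$ to contain that ball only increases its integral:
\[
\rho_0\int_\Omega\frac{\mathrm dy}{|x_j-y|}-2D(\mu_j,\rho_0\iden_\Omega)\le\rho_0\int_{B_{r_s}}\Big(\frac1{|y|}-\frac{3r_s^2-|y|^2}{2r_s^3}\Big)\mathrm dy=\rho_0\Big(2\pi-\frac{8\pi}5\Big)r_s^2=\frac{3}{10\,r_s}.
\]
\item Self-energy: $D(\mu_j)=\frac12\cdot\frac{6}{5r_s}=\frac{3}{5r_s}$.
\end{itemize}

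\textbf{Step 3 (conclusion of (i)).} Insert Step 2 into Step 1 to get
\[
U^{\rho_0}_\Omega(X)\ge-\frac{3N}{5r_s}-\frac{3N}{10r_s}=-\frac{9}{10}\frac N{r_s}.
\]
Then $1/r_s=(4\pi\rho_0/3)^{1/3}$ gives the stated constant, and $N=\rho_0|\Omega|$ gives $-c_{\rm LN}\rho_0^{4/3}|\Omega|$. For the inequality $E^\hbar\ge E^0$, note that $-\hbar^2\Delta\ge0$ in the form sense. A normalized $\Psi$ in the form domain is supported in $\Omega^N$, so $\langle\Psi,U\Psi\rangle\ge\inf_{\Omega^N}U$. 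No step here is a real obstacle; the only care needed is the exact constant $3/(10r_s)$, which relies on the sign of $V_j-|\cdot-x_j|^{-1}$ to replace $\Omega$ by the full ball.

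\textbf{Step 4 (part (ii)).} For $\Psi$ in the form domain, bound the Coulomb term pointwise by part (i): $\langle\Psi,U\Psi\rangle\ge-c_{\rm LN}\rho_0^{4/3}|\Omega|$. Bound the kinetic term by the Berezin--Li--Yau inequality for the Dirichlet Laplacian on $\Omega$. By antisymmetry and $q$ spin states, the kinetic energy is at least the sum of the lowest $N$ Dirichlet one-body eigenvalues counted with multiplicity $q$. Li--Yau gives this sum as at least
\[
\hbar^2\,\frac35\Big(\frac{6\pi^2}{q}\Big)^{2/3}\frac{N^{5/3}}{|\Omega|^{2/3}}=\hbar^2c_\TF\rho_0^{5/3}|\Omega| .
\]
Adding the two bounds yields \eqref{eq:LN_quantum_stability}. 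The main point to verify is that the Li--Yau constant with spin degeneracy $q$ matches $c_\TF$ exactly. It does: each spatial eigenvalue is used $q$ times, which replaces $N$ by $N/q$ in the spinless bound and multiplies by $q$, giving the factor $q^{-2/3}$ in $c_\TF$.
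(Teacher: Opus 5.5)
Your proposal is correct and follows essentially the same route as the paper. Part~(i) is exactly the Onsager smearing argument of Lemma~\ref{lem:basic_onsager} with $a=r_s$ and $\bar\rho=\rho_0$, where $\frac3{5r_s}+\frac{2\pi}5\rho_0r_s^2=\frac9{10\,r_s}$, and part~(ii) is the standard addition of the Berezin--Li--Yau kinetic bound to~(i), which the paper takes from Lieb--Narnhofer. The only detail to tidy is the case $q\nmid N$: there, use Berezin's form of the bound for density matrices $0\le\gamma\le1$ with $\Tr\gamma=N$, or the convexity of $t\mapsto t^{5/3}$ applied to the interpolated eigenvalue sums, rather than the integer-$n$ Li--Yau statement.
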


By Theorem~\ref{thm:LN_stability}, the energy per unit volume $e^\hbar_\Jel(\rho_0,\Omega)$ is bounded below uniformly in $\Omega$. The main result of~\cite{lieb1975thermodynamic} is that it converges along LN-admissible sequences, to a limit which does not depend on the shape of the domains.

\begin{theorem}[Jellium {\cite[Secs.~5.1--5.4]{lieb1975thermodynamic}}]\label{thm:LN_main}
Let $\hbar>0$ and $\rho_0>0$.
\begin{enumerate}[(i)]
\item (Balls) Along the standard sequence of balls $\{\mathcal B_K\}$ of Lemma~\ref{lem:LN_packing} (which depends on $\rho_0$), the limit $\lim_{K\to\infty}e^\hbar_\Jel(\rho_0,\mathcal B_K)$ exists.
\item (General domains) For every LN-admissible sequence $\{\Omega_N\}$, the limit
\[
\boxed{e^\hbar_\Jel(\rho_0):=\lim_{N\to\infty}e^\hbar_\Jel(\rho_0,\Omega_N)}
\]
exists and is equal to the limit in item~(i). In particular, it does not depend on the shape of the domains.
\item (Scaling) $e^\hbar_\Jel(\rho_0)=\rho_0^{4/3}\,\tilde e_\Jel\big(\hbar^2\rho_0^{1/3}\big)$, where $\tilde e_\Jel(\lambda):=e^{\sqrt\lambda}_\Jel(1)$, $\lambda>0$, is the energy at unit density.
\item (Regularity) $\lambda\mapsto \tilde e_\Jel(\lambda)$ is concave and non-decreasing on $(0,\infty)$, hence continuous. Therefore $\rho_0\mapsto e^\hbar_\Jel(\rho_0)$ is continuous on $(0,\infty)$. Moreover, for a fixed LN-admissible sequence at unit density, the convergence $e^{\sqrt\lambda}_\Jel(1,\Omega_N)\to \tilde e_\Jel(\lambda)$ is locally uniform in $\lambda\in(0,\infty)$.
\end{enumerate}
\end{theorem}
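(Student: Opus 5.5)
We would follow the Lieb--Lebowitz ``Swiss cheese'' strategy, adapted by Lieb and Narnhofer to the quantum setting. We would prove the four items in the order (iii), (i), (iv), (ii).

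\emph{Scaling.} Item (iii) comes from the dilation $x\mapsto \rho_0^{-1/3}x$. It maps a neutral domain at density $\rho_0$ to a neutral domain at density $1$. The Coulomb terms scale like $\rho_0^{1/3}$ per particle. The kinetic term scales like $\rho_0^{2/3}$, which effectively replaces $\hbar^2$ by $\hbar^2\rho_0^{1/3}$. Hence $e^\hbar_\Jel(\rho_0,\Omega)=\rho_0^{4/3}e^{\hbar\rho_0^{1/6}}_\Jel(1,\rho_0^{1/3}\Omega)$, and (iii) follows once the limit exists.

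\emph{Balls, item (i).} Take the standard sequence $\mathcal B_K$ of Lemma~\ref{lem:LN_packing}. These are neutral balls whose radii grow geometrically. The lemma packs $\mathcal B_K$ with $n_{K,j}$ copies of $\mathcal B_j$, $j<K$, and the uncovered volume fraction tends to $0$ suitably fast. As a trial state on $\mathcal B_K$ we take the antisymmetrized product of ground states of the small balls. With Dirichlet conditions their supports are disjoint, so antisymmetrization changes neither the kinetic energy nor the Coulomb energy.

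To kill the interactions between different balls, we average each factor independently over rotations about the center of its ball. Each averaged ball then carries a radial, neutral charge distribution (electrons minus background). By Newton's theorem it generates no potential outside itself, so all cross terms vanish. What remains is the leftover background in the uncovered region. We fill it with further small neutral balls down to a fixed scale. The residual unfilled background is handled by the packing lemma: its self-energy and its interaction with the averaged radial balls are controlled, giving an error that is $o(|\mathcal B_K|)$.

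This yields the subadditivity
\[
E^\hbar_\Jel(\rho_0,\mathcal B_K)\le \sum_{j<K} n_{K,j}E^\hbar_\Jel(\rho_0,\mathcal B_j)+o(|\mathcal B_K|).
\]
Combined with the lower bound of Theorem~\ref{thm:LN_stability}, the standard Lieb--Lebowitz argument then gives that $e^\hbar_\Jel(\rho_0,\mathcal B_K)$ converges.

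\emph{Regularity, item (iv).} At unit density we have $e^{\sqrt\lambda}_\Jel(1,\Omega_N)=\inf_\Psi\big(\lambda T(\Psi)+\langle\Psi,U\Psi\rangle\big)/|\Omega_N|$, where $T(\Psi)$ is the kinetic energy at $\hbar=1$. This is an infimum of affine functions of $\lambda$ with nonnegative slope, hence concave and non-decreasing in $\lambda$. Both properties pass to the pointwise limit. Concave functions are continuous on open intervals, and pointwise convergence of concave functions is locally uniform. Continuity in $\rho_0$ then follows from (iii).

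\emph{General domains, item (ii).} This is the main obstacle. For an upper bound on $E^\hbar_\Jel(\rho_0,\Omega_N)$, we pack $\Omega_N$ with rotation-averaged standard balls as above. Condition (A) makes the unpacked region negligible. The boundary mismatch between background and electrons is corrected with the inner double layer $\sigma^-_{N,h}$, at cost $D(\sigma^-_{N,h})=o(|\Omega_N|)$ by (iv).

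For the lower bound, we place $\Omega_N$ inside a large standard ball, using (B), and fill the complement with balls. Since $\Omega_N$ uses Dirichlet conditions and is not neutralizing at its boundary as a radial object, the outer layer $\sigma^+_{N,h}$ is used to make $\Omega_N$ together with a thin neutral shell electrically inert. Conditions (i)--(iii) ensure that the potential $\varphi_{N,h}$ and the energy $D(\sigma^+_{N,h})$ contribute $o(|\Omega_N|)+o_h(1)|\Omega_N|$. Sending $N\to\infty$ and then $h\to0$ gives the common limit.

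The delicate point, which we expect to take most of the work, is controlling the interaction of a non-radial Jellium state with these boundary layers uniformly. We would first try to bound it by Cauchy--Schwarz in the Coulomb norm, using $D(\sigma^\pm_{N,h})$, together with the smallness of $\varphi_{N,h}$ given by condition (ii).
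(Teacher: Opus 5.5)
Your arguments for (iii) and (iv) are fine and agree with what the paper uses: dilation for (iii), and for (iv) concavity in $\lambda=\hbar^2$ of an infimum of affine functions. The gap is in (i), and it recurs in (ii). You leave the part of $\mathcal B_K$ not covered by packed balls without electrons, and you assert that its background is controlled by the packing lemma with an $o(|\mathcal B_K|)$ error. This fails for two reasons.

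\emph{Particle number.} The resulting state has fewer than $N_K=\rho_0|\mathcal B_K|$ electrons (exactly $N_K-M_K$ for the standard packing). So it is not a trial state for $E^\hbar_\Jel(\rho_0,\mathcal B_K)$, which is defined with exactly $N_K$ electrons.

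\emph{Electrostatics.} The packing lemma controls only the \emph{volume} of $\mathcal D_K$, whereas the self-energy of uncompensated charge is superextensive. By the bound $D(\sigma)\ge(\int\sigma)^2/(2r)$ for $\sigma$ supported in $\overline{B_r}$ (Step~1 of Appendix~\ref{sec:canonical_completion}), the background $\rho_0\iden_{\mathcal D_K}$, of charge $\rho_0\gamma^K|\mathcal B_K|$, has self-energy at least $\frac{2\pi}{3}\rho_0^2r_s^2\,26^{2K}|\mathcal B_K|$ when $R_K=r_s27^K$. This is not $o(|\mathcal B_K|)$; it even diverges relative to $|\mathcal B_K|$.

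Filling further with neutral balls cannot repair this. Neutral balls contain at least one electron, hence have radius $\ge r_s$. Disjoint balls with radii between $r_s$ and $R_K/2$ inside $\mathcal B_K$ leave most of the layer of width $r_s/20$ along $\partial\mathcal B_K$ uncovered. That layer alone carries charge of order $\rho_0r_sR_K^2$, hence self-energy of order $\rho_0^{4/3}|\mathcal B_K|$. This is exactly the rigid-background obstruction that separates Jellium from the Lieb--Lebowitz setting, where the leftover region is genuinely empty.

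The missing ingredient is Lemma~\ref{lem:LN_trial}: one must put exactly $M_K=\rho_0|\mathcal D_K|$ electrons into $\mathcal D_K$, with Dirichlet conditions, at energy at most $C|\mathcal D_K|$. Lieb and Narnhofer do this in two steps.
\begin{enumerate}[(a)]
\item \emph{Sampling.} Average the classical energy, plus the short-range penalty $8\pi^2|y_i-y_j|^{-2}\iden_{\{|y_i-y_j|\le1\}}$, against $\prod_if(y_i)/|\mathcal D_K|$. Here $f$ vanishes on shells of width $h$ along $\partial\mathcal D_K$ and compensates on the adjacent shells. Then $\rho_0(\iden_{\mathcal D_K}-f)$ is a sum of neutral radial double layers, invisible to everything outside by Newton's theorem, with self-energy $O(\rho_0^2h^2|\mathcal D_K|)$. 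This is why the quantum packing of Lemma~\ref{lem:LN_packing} uses radii $R_0'(27^k-\frac12)$, leaving room for these shells.
\item \emph{Orbitals.} Place Dirichlet ground-state orbitals on disjoint small balls around a good configuration. Their kinetic energy is paid by the penalty, their mutual Coulomb interactions reduce to point charges by Newton's theorem, and the loss of background attraction is $O(\rho_0d_i^2)$ per electron.
\end{enumerate}
This gives an error $O(\gamma^K)$ per unit volume in~\eqref{eq:LN_recursion}. That error is summable, which is what Lemma~\ref{lem:LN_sequences} uses; an error that is merely $o(|\mathcal B_K|)$ is not enough for that recursion argument in general.

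The same construction is what (ii) requires. The unpacked part of $\Omega_N$, and of the large ball outside $\Omega_N$, must be populated in this way. The layers $\sigma^-_{N,h}$ and $\sigma^+_{N,h}$ enter as the defect $\rho_0(\iden-f)$ of the sampling weight that keeps these electrons away from $\partial\Omega_N$. Conditions (i)--(iv) control the self-energy of this defect and its interaction with the non-radial state in $\Omega_N$. They do not enter by condition~(A) simply making the unpacked region negligible.
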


For $\hbar=0$ we write $e^0_\Jel(\rho_0)$ for the limit in item~(ii) whenever it exists for all LN-admissible sequences and is independent of them, and $e_\Jel:=e^0_\Jel(1)$. (The Jellium energy of~\cite{lewin2018statistical} is defined through a slightly different grand-canonical problem; see Remark~\ref{rem:UEG_vs_Jel}.) By Theorem~\ref{thm:LN_stability}, $e_\Jel\ge-c_{\rm LN}$, $\tilde e_\Jel(\lambda)\ge c_\TF\lambda-c_{\rm LN}$ for all $\lambda>0$, and $e^\hbar_\Jel(\rho_0)\ge-\rho_0\,c_{\rm LN}^2/(4\hbar^2c_\TF)$ for all $\rho_0>0$.

Items (i)--(ii) are proved in~\cite[Secs.~5.1--5.2]{lieb1975thermodynamic} for the free energy at $T>0$; the ground-state version is stated in~\cite[Sec.~5.4(iii)]{lieb1975thermodynamic}. A sketch of the proof of item~(i) at $T=0$ is given in Appendix~\ref{sec:LN_proof}. Item (iii) is the zero-temperature form of~\cite[Eqs.~(62)--(64)]{lieb1975thermodynamic}. Item (iv) is the zero-temperature analog of~\cite[Secs.~3.4.1 and~5.3]{lieb1975thermodynamic}: at $T>0$, continuity in $\rho_0$ is derived there from the joint convexity of $(\beta,\gamma)\mapsto\log\Tr e^{-\beta K-\gamma U}$ and uniformity from Dini's theorem. At $T=0$ the same conclusions follow from concavity in $\lambda=\hbar^2$, which is the argument of~\cite{lewin2018statistical} for $\tilde e_\UEG(\lambda)$.

\subsection{The Uniform Electron Gas}\label{sec:UEG}

We briefly recall the definition of the quantum UEG given by Lewin, Lieb and Seiringer, in two versions. The first is from~\cite[Sec.~5]{lewin2018statistical}, where the density is optimized over a transition region near the boundary. The second is from~\cite[Sec.~2]{lewin2020local}, where the density is a fixed smoothing of $\rho_0\iden_\Omega$. In contrast to Jellium there is no background; instead, the one-particle density is prescribed.

Let $\rho\ge0$ with $\sqrt\rho\in H^1(\RR^3)$. The grand-canonical Levy--Lieb functional (\cite[Eq.~(5.1)]{lewin2018statistical}, \cite[Sec.~2.1]{lewin2020local}) is
\[
F^\hbar_{\rm LL}(\rho):=\inf_{\rho_{\vGamma}=\rho}\sum_{n\ge1}\Tr_{\Hil^n}\Bigg(-\hbar^2\sum_{j=1}^n \Delta_{x_j}+\sum_{1\le j<k\le n}\frac{1}{|x_j-x_k|}\Bigg)\Gamma^{(n)},
\]
and the indirect energy of $\rho$ is
\begin{equation}\label{eq:UEG_E}
\boxed{E^\hbar(\rho):=F^\hbar_{\rm LL}(\rho)-D(\rho).}
\end{equation}
\begingroup\emergencystretch=1em
The number $\int_{\RR^3}\rho$ need not be an integer.
\par\endgroup

Let $\chi\ge0$ be radial with compact support, $\int_{\RR^3}\chi=1$ and $\int_{\RR^3}|\nabla\sqrt\chi|^2<\infty$, and set $\chi_\delta:=\delta^{-3}\chi(\cdot/\delta)$ for $\delta>0$. For a bounded set $\Omega\subset\RR^3$, the UEG energy per unit volume with smoothing $\chi_\delta$ is
\[
e^\hbar_\UEG(\rho_0,\Omega,\chi_\delta):=\frac{E^\hbar\big(\rho_0\iden_\Omega\ast\chi_\delta\big)}{|\Omega|}.
\]

The following theorem gives the common thermodynamic limit of the two UEG constructions, together with its scaling, regularity and limiting properties.

\begin{theorem}[Quantum UEG {\cite[Thm.~5.1]{lewin2018statistical}, \cite[Thm.~1]{lewin2020local}}]\label{thm:UEG}
Let $\rho_0>0$ and $\hbar>0$.
\begin{enumerate}
\item[(i)]  Let $\{\Omega_N\}$ be bounded connected domains with $|\Omega_N|\to\infty$ which have an $\eta$-regular boundary in the sense of Fisher~\cite{fisher1964free}, for a fixed $\eta$:
\[
\big|\{x:\ d(x,\partial\Omega_N)\le|\Omega_N|^{1/3}t\}\big|\le|\Omega_N|\,\eta(t),\qquad0\le t\le t_0,
\]
with $\eta$ continuous and $\eta(0)=0$. Let $\ell_N\to\infty$ with $\ell_N/|\Omega_N|^{1/3}\to0$, and
\begin{align*}
&\Omega_N^-:=\{x\in\Omega_N:\ d(x,\partial\Omega_N)\ge\ell_N\},\\
&\Omega_N^+:=\Omega_N\cup\{x:\ d(x,\partial\Omega_N)\le\ell_N\}.
\end{align*}
Then the limit
\begin{equation}\label{eq:UEG_LLS18}
\boxed{e^\hbar_\UEG(\rho_0):=\lim_{N\to\infty}\ \inf_{\substack{\sqrt\rho\in H^1(\RR^3)\\ \rho_0\iden_{\Omega_N^-}\le\rho\le\rho_0\iden_{\Omega_N^+}}}\frac{E^\hbar(\rho)}{|\Omega_N|}}
\end{equation}
exists and is independent of $\{\Omega_N\}$ and $\{\ell_N\}$.
\item[(ii)] Let $\{\Omega_N\}$ be bounded connected domains with $|\Omega_N|\to\infty$ and
\[
|\partial\Omega_N+B_r|\le Cr|\Omega_N|^{2/3}\qquad\text{for all }0<r\le|\Omega_N|^{1/3}/C,
\]
for some $C>0$. Let $\delta_N>0$ with $\delta_N/|\Omega_N|^{1/3}\to0$ and $\delta_N|\Omega_N|^{1/3}\to\infty$. Then
\begin{equation}\label{eq:UEG_LDA}
\boxed{e^\hbar_\UEG(\rho_0)=\lim_{N\to\infty}e^\hbar_\UEG(\rho_0,\Omega_N,\chi_{\delta_N}),}
\end{equation}
and the limit is independent of $\{\Omega_N\}$, $\{\delta_N\}$ and $\chi$.
\item[(iii)] (Scaling) $e^\hbar_\UEG(\rho_0)=\rho_0^{4/3}\,\tilde e_\UEG\big(\hbar^2\rho_0^{1/3}\big)$, where $\tilde e_\UEG(\lambda):=e^{\sqrt\lambda}_\UEG(1)$, $\lambda>0$, is the energy at unit density.
\item[(iv)] (Regularity) $\lambda\mapsto \tilde e_\UEG(\lambda)$ is concave and non-decreasing on $(0,\infty)$, hence continuous. Therefore $\rho_0\mapsto e^\hbar_\UEG(\rho_0)$ is continuous on $(0,\infty)$.
\item[(v)] (Classical limit) $\lim_{\lambda\to0}\tilde e_\UEG(\lambda)=e_\UEG$, the classical UEG energy of~\cite[Thm.~2.6]{lewin2018statistical}.
\item[(vi)] (High-density limit) $\tilde e_\UEG(\lambda)=c_\TF\lambda-c_{\mathrm{D}}+o(1)$ as $\lambda\to\infty$.
\end{enumerate}
\end{theorem}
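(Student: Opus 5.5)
We would prove Theorem~\ref{thm:UEG} by following the scheme of~\cite{lewin2018statistical,lewin2020local}. The first step is to establish the elementary structural facts about the indirect energy $E^\hbar$ of~\eqref{eq:UEG_E}. After that we obtain the thermodynamic limit~(i) by a subadditivity argument and transfer it to the smoothed version~(ii). The properties (iii)--(vi) then follow from the limit together with separate semiclassical inputs.

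\emph{Step 1: structural properties.} We need three facts about $E^\hbar$.
\begin{enumerate}[(a)]
\item \emph{Upper subadditivity.} If $\rho_1,\rho_2$ have disjoint supports at positive distance, take the antisymmetrized tensor product of two grand-canonical states. Their exchange terms vanish, and the cross Coulomb term equals exactly $2D(\rho_1,\rho_2)$. Hence $E^\hbar(\rho_1+\rho_2)\le E^\hbar(\rho_1)+E^\hbar(\rho_2)$.
\item \emph{Lower bound.} The Lieb--Oxford inequality together with the Lieb--Thirring inequality gives
\[
E^\hbar(\rho)\ge \hbar^2 c\int\rho^{5/3}-C\int\rho^{4/3},
\]
so the energies per unit volume are bounded below.
\item \emph{Crude upper bound.} Using quasi-free (Slater-type) states in small cubes one shows
\[
E^\hbar(\rho)\le C\int\big(\hbar^2\rho^{5/3}+\hbar^2|\nabla\sqrt\rho|^2+\rho^{4/3}\big).
\]
This bound controls the cost of boundary layers.
\end{enumerate}
Scaling~(iii) then follows from the dilation $x\mapsto\rho_0^{-1/3}x$. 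Under this dilation kinetic terms scale like $\rho_0^{2/3}$ and Coulomb terms like $\rho_0^{1/3}$, which gives $E^\hbar(\rho_0\iden_\Omega)$ in terms of $E^{\hbar\rho_0^{1/6}}$ at unit density.

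\emph{Step 2: existence of the limit~(i) and independence of the shape.} For cubes, Step~1(a) gives approximate subadditivity. The cost of gluing is confined to the transition layers of width $\ell_N$, and Step~1(c) bounds that cost by $O(\ell_N|\Omega_N|^{2/3})=o(|\Omega_N|)$. This yields an upper limit along cubes and, by packing, along every $\eta$-regular domain. The matching lower bound is, in my view, the main obstacle. Here I would use the Graf--Schenker inequality: averaged over rotations and translations of a tiling by simplices, the Coulomb interaction is bounded below by the sum of the interactions within the tiles, minus a short-range error that is controlled by $\int\rho^{4/3}$. Localizing the state to each tile, via IMS localization for the kinetic energy, bounds the energy of $\Omega_N$ from below by the sum of the cube or simplex energies up to $o(|\Omega_N|)$. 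A difficulty is that the localized densities are not exactly uniform. To handle this, the tiles must be taken with sides $\gg\ell_N$, and one must also prove that the simplex and cube limits coincide. Concavity and monotonicity~(iv) follow because $E^\hbar$, seen as a function of $\lambda=\hbar^2$, is an infimum of affine, non-decreasing functions of $\lambda$, and these properties pass to the limit.

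\emph{Step 3: the smoothed definition~(ii) and the limits (v)--(vi).} The density $\rho_0\iden_{\Omega_N}*\chi_{\delta_N}$ satisfies the constraint of~(i) with $\ell_N=\delta_N$. This gives ``$\ge$'' immediately. For ``$\le$'', Step~1(c) and the condition $|\partial\Omega_N+B_r|\le Cr|\Omega_N|^{2/3}$ show that the boundary cost is
\[
O\big(\delta_N|\Omega_N|^{2/3}+\delta_N^{-1}|\Omega_N|^{2/3}\big)=o(|\Omega_N|),
\]
which is exactly why both hypotheses $\delta_N/|\Omega_N|^{1/3}\to0$ and $\delta_N|\Omega_N|^{1/3}\to\infty$ are needed. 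For~(vi), the upper bound comes from the free Fermi gas Slater determinant, whose exchange term gives $-c_{\rm D}$. The lower bound follows from a correlation inequality of Graf--Solovej type, which bounds the indirect energy from below by its exchange term plus lower-order corrections. For~(v), the upper bound is obtained by smearing a near-optimal classical configuration with coherent states. The lower bound follows from $\tilde e_\UEG(\lambda)\ge e_\UEG$, which holds because the kinetic energy is nonnegative and the quantum Coulomb energy is bounded below by the classical one after averaging. Monotonicity then gives the limit.
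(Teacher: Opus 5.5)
\textbf{Main gap: the upper bounds.} The paper does not reprove this theorem. It cites \cite{lewin2018statistical,lewin2020local} and only remarks that (i) and (ii) coincide because both reduce to the limit for tetrahedra with a fixed smearing. Your Step~1, the scaling (iii), the concavity (iv) and the Graf--Schenker lower bound of Step~2 agree with that literature (compare Lemma~\ref{lem:upper_bounded_layer}). The problem is in every \emph{upper} bound you use: the subadditivity along cubes in Step~2 and the inequality ``$\le$'' in Step~3.

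Step~1(a) needs disjoint supports, and admissible densities cannot be cut into disjointly supported pieces. Suppose $\rho_1\rho_2=0$, $\sqrt{\rho_1},\sqrt{\rho_2}\in H^1$ and $\rho_1+\rho_2=\rho_0$ on a connected open set $U$. Then $\sqrt{\rho_1}$ takes only the values $0$ and $\sqrt{\rho_0}$ on $U$, so it is constant there, and one of the two pieces vanishes on $U$. The density must equal $\rho_0$ on all of $\Omega_N^-$, and in (ii) it is prescribed exactly. Hence your cube pieces must overlap in the bulk. There fermionic states cannot be tensorized, and Step~1(a) gives nothing. Step~1(c) bounds the energy of a layer density on its own, but it gives no way to superpose a layer state on a bulk state. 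So $\mathcal O(\ell_N|\Omega_N|^{2/3})$ and $\mathcal O((\delta_N+\delta_N^{-1})|\Omega_N|^{2/3})$ are the right size for the cost of a boundary layer, but nothing shows that this is all one pays.

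\textbf{What is actually missing.} Existence of a limit does not need subadditivity. Apply Theorem~\ref{thm:basic_localization} to $\rho_0\iden_{H\tetra}*\eta_\delta$ with tiles of size $h\ll H$. This gives $\liminf_{H\to\infty}\ge\limsup_{h\to\infty}$ for $E^\hbar(\rho_0\iden_{h\tetra}*\eta_\delta)/|h\tetra|$, so with the bounds of Step~1 this sequence converges at fixed $\delta$. The same argument bounds every other admissible family from below by this limit.

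What you lack is the reverse inequality for general domains, smearings and $\chi$, that is, a trial state with exactly the prescribed density and energy close to the tetrahedral limit. Averaging disjointly supported configurations over translations does restore a constant density. By convexity of $F^\hbar_{\rm LL}$, $E^\hbar$ of the average is then at most the average of $E^\hbar(\rho_a)$ plus the average of $D(\rho_a-\bar\rho)$. But this fluctuation term is extensive in general; this is the floating-crystal obstruction. The pieces must be designed so that it is $o(|\Omega_N|)$, as this paper does in a related setting with neutral radial charges and~\eqref{eq:conj_identity} in Section~\ref{sec:upper}.

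\textbf{Smaller points.}
\begin{enumerate}
\item In Step~3 you cannot take $\ell_N=\delta_N$, since (i) requires $\ell_N\to\infty$ while $\delta_N$ may tend to zero. The choice $\ell_N=\max\{\delta_N,|\Omega_N|^{1/6}\}$ works, because admissibility is monotone in $\ell_N$.
\item Tiles with sides $\gg\ell_N$ are unnecessary. Interior tiles of fixed size already see exactly $\rho_0$, and boundary tiles are handled by Lieb--Oxford.
\item For the upper bounds in (v) and (vi), neither a Slater determinant of overlapping coherent states nor the free Fermi gas has the prescribed density. You need constructions with exact density, for instance the semiclassical limit of the fermionic Levy--Lieb functional on a fixed smeared tetrahedron. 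These must be combined with the identification of $e^\hbar_\UEG$ as the tetrahedral limit, which again rests on the missing upper bound.
\end{enumerate}
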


Items (i) and (iii)--(vi) are~\cite[Thm.~5.1]{lewin2018statistical}, and the limit in item~(ii) is~\cite[Thm.~1]{lewin2020local}. The two limits coincide, since both reduce to the limit for tetrahedra with a fixed smearing, namely~\cite[Rem.~5.2 and Lemma~5.6]{lewin2018statistical} and~\cite[Prop.~17]{lewin2020local} at fixed $\delta$.

In item~(i), the density is held at the fixed constant $\rho_0$ well inside $\Omega_N$ and to $0$ far outside. In the transition layer of width $2\ell_N$ around $\partial\Omega_N$ it is only required that $0\le\rho\le\rho_0$, and the energy is minimized over all admissible $\rho$.

In item~(ii), the condition $\delta_N/|\Omega_N|^{1/3}\to0$ makes the smeared region around $\partial\Omega_N$ negligible in volume. The condition $\delta_N|\Omega_N|^{1/3}\to\infty$ makes the kinetic energy of the transition region negligible, since $\int_{\RR^3}|\nabla\sqrt{\iden_{\Omega_N}\ast\chi_{\delta_N}}|^2\le C|\Omega_N|^{2/3}/\delta_N$. In particular $\delta_N$ may tend to zero, provided $\delta_N\gg|\Omega_N|^{-1/3}$. Under the weaker assumption that $\Omega_N$ is $\eta$-regular, the same holds with the second condition replaced by $\delta_N^{-2}\eta(\delta_N|\Omega_N|^{-1/3})\to0$~\cite[Rem.~2]{lewin2020local}.

\begin{remark}[UEG versus Jellium]\label{rem:UEG_vs_Jel}
It is noted in~\cite[Sec.~2.2]{lewin2020local} that $e_\UEG$ ``is believed to be the same as the ground state energy of Jellium''. In the classical case, the equality $e_\UEG=e_\Jel$ has been proved~\cite{lewin2019floating,cotar2017equality}; in the quantum case the equality was open. The inequality $\tilde e_\UEG(\lambda)\ge\tilde e^{\rm GC}_\Jel(\lambda)$ is proved in~\cite[proof of Thm.~5.1, Step~4]{lewin2018statistical}, where $\tilde e^{\rm GC}_\Jel(\lambda)$ is the thermodynamic limit of grand-canonical Jellium, with free particle number and without confinement to the support of the background. This energy is attributed there to~\cite{lieb1975thermodynamic}, but~\cite{lieb1975thermodynamic} treats the canonical Jellium with Dirichlet boundary conditions and does not prove the identification of the two limits. Theorem~\ref{thm:lower} proves the inequality for the Jellium energy of Theorem~\ref{thm:LN_main}.
\end{remark}

\section{Main result: the Jellium and UEG energies coincide}\label{sec:main}

Sections~\ref{sec:LN} and~\ref{sec:UEG} define two energies per unit volume at density $\rho_0$. The Jellium energy $e^\hbar_\Jel(\rho_0)$ is canonical and exactly neutral, with Dirichlet boundary conditions and a fixed background, while the particle density is free. On the other hand, the UEG energy $e^\hbar_\UEG(\rho_0)$ is grand-canonical, with a prescribed density and no background. Our main result is that the two coincide.

\begin{theorem}[Jellium $=$ UEG]\label{thm:main}
For every integer spin multiplicity $q\ge1$, every $\rho_0>0$ and every $\hbar>0$,
\begin{equation}\label{eq:main}
e^\hbar_\Jel(\rho_0)=e^\hbar_\UEG(\rho_0).
\end{equation}
Equivalently, $\tilde e_\Jel(\lambda)=\tilde e_\UEG(\lambda)$ for all $\lambda>0$.
\end{theorem}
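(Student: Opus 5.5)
We prove the two inequalities separately: $e^\hbar_\Jel(\rho_0)\le e^\hbar_\UEG(\rho_0)$, following the strategy of~\cite[Thm.~5.1, Step~4]{lewin2018statistical} but for canonical Dirichlet Jellium, and $e^\hbar_\UEG(\rho_0)\le e^\hbar_\Jel(\rho_0)$, by a quantum floating-crystal construction. By the scaling properties (Theorem~\ref{thm:LN_main}(iii) and Theorem~\ref{thm:UEG}(iii)) it suffices to work at unit density and general $\lambda=\hbar^2$. The continuity in $\lambda$ from Theorem~\ref{thm:LN_main}(iv) and Theorem~\ref{thm:UEG}(iv) allows us to absorb small changes of $\hbar$ caused by localization errors.

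\textbf{Jellium $\le$ UEG.} Take a near-optimal grand-canonical state $\vGamma$ with density $\rho=\iden_{\Omega}*\chi_\delta$ on a large ball $\Omega$. For every $n$, the energy $\Tr[(K+V)\Gamma^{(n)}]$ equals the Jellium energy of $\Gamma^{(n)}$ in the background $\rho$, because the cross term $-2D(\rho_{\Gamma^{(n)}},\rho)$ averages over $n$ to $-2D(\rho)$. The Jellium energy also contains the term $D(\rho_{\Gamma^{(n)}}-\rho)$, and the bound must absorb it. This yields a grand-canonical Jellium energy with a smeared background, and the task is to turn it into a canonical, exactly neutral, Dirichlet problem with no loss.

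\begin{enumerate}[(a)]
\item Confine particles to a slightly larger ball $\Omega'$ using an IMS localization. The kinetic cost is $O(|\Omega|^{2/3}/\delta)$, and the mass outside $\Omega'$ is controlled by the exact density.
\item Replace the smeared background by $\iden_{\Omega'}$. The Coulomb error is $O(\delta|\Omega|^{2/3})$ times a potential bound.
\item Handle the fluctuating particle number $n\ne|\Omega'|$. I would decompose $\vGamma$ by $n$ and, for each $n$, neutralize by adding or removing background in a thin outer shell. This is the ``neutralization construction''. The key input is that the Jellium energy of a non-neutral configuration is at least the neutral energy plus the positive charging energy $\sim (n-N)^2/R$, up to errors $o(|\Omega|)$. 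This follows from Theorem~\ref{thm:LN_stability} together with the superadditivity and screening arguments of Lieb--Narnhofer.
\end{enumerate}

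The Lieb--Narnhofer limit is shape-independent, so the result transfers to balls.

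\textbf{UEG $\le$ Jellium.} This is the main obstacle. Take a near-optimal Jellium ground state $\Psi$ on a cube (or on balls packed via Lemma~\ref{lem:LN_packing}) of side $L$. Tile a large region $\Omega_N$ with copies of it and average over translations of the tiling in $C_L$ (the floating crystal). The averaged state has density $\iden_{C_L}\ast$ (uniform average), which is constant in the bulk. Its energy per volume is $e_\Jel$ plus inter-cell Coulomb terms. These vanish by neutrality only up to multipole terms, and the translation average makes them controllable as in~\cite{lewin2019floating}.

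The difficulty is the boundary of $\Omega_N$. There the averaged density is not the prescribed $\iden_{\Omega_N}*\chi_\delta$, and in the classical case the boundary charge fluctuations cost energy of order the volume. Following~\cite{lewin2019floating}, I would place near $\partial\Omega_N$ a liquid layer of width $\ell\ll|\Omega_N|^{1/3}$, namely a quantum state with prescribed density. Its density would be chosen to exactly complement the floating crystal's density to $\iden_{\Omega_N}*\chi_\delta$. This is possible using the Levy--Lieb representability of densities with $\sqrt\rho\in H^1$, for instance quasi-free (Slater) states for the layer. Its energy is bounded by Lieb--Oxford plus kinetic terms, giving $O(\ell|\Omega_N|^{2/3})$. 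The remaining task is to show that the interaction between the layer and the randomly cut boundary cells is $o(|\Omega_N|)$. To do this I would exploit that each whole cell is neutral, and take the layer's density to be the translation average, so that the cross term reduces to $D$ of averaged quantities. The bulk variance term, which is the source of the classical obstruction, is then controlled because only cells meeting the layer contribute, and their number is $O(|\Omega_N|^{2/3}/L^2)$ with energy $O(L^5)$ each. The choice $L\ll\ell$ then suffices.

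Finally, the resulting state is grand-canonical and number-conserving with exactly the required density. Hence $e^\hbar_\UEG\le e^\hbar_\Jel+o(1)$, and letting $L\to\infty$ concludes the proof.
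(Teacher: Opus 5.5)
\textbf{Upper bound.} The genuine gap is in your proof of $e^\hbar_\UEG\le e^\hbar_\Jel$.

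\emph{Bulk cross terms are not controlled.} If you tile with copies of a Dirichlet ground state on a cube, the cell charges $\tau_z=\rho_\Psi(\cdot-Lz)-\rho_0\iden_{Lz+C_L}$ are neutral but not radial. Adjacent cells touch, so no multipole expansion applies to them. The available a priori bound is $D(\tau)\le CL^3$ (from Lemma~\ref{lem:basic_onsager} or Lieb--Oxford). By Cauchy--Schwarz, the nearest-neighbour terms $D(\tau_z,\tau_{z'})$ may then be of order $L^3$ per cell, which is an error of the order of the volume.

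\emph{Averaging does not help.} Averaging over the shift $a$ moves the whole tiling rigidly, so every bulk cross term is independent of $a$. The appeal to~\cite{lewin2019floating} therefore does not close this step.

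\emph{The liquid layer does not help either.} For the antisymmetrized product (Lemma~\ref{lem:basic_product}) to add densities and kinetic energies, the layer must be supported off the retained cells. So it depends on $a$ and must vanish at the cell boundaries. Its charge relative to the background again interacts with the non-radial cell charges, with no cancellation.

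\emph{What is missing} is a way to make all cross terms vanish exactly. The paper does this as follows.
\begin{itemize}
\item It packs balls instead of cubes (Lemma~\ref{lem:upper_packing}) and puts rotation-averaged canonical ground states in them.
\item It fills the gaps with a quasi-free state whose density is brought to zero at each ball by a compensated radial shell (Lemma~\ref{lem:upper_collar}). Every charge relative to the background is then neutral and radial, and Newton's theorem yields the exact identity~\eqref{eq:upper_exact_identity}.
\item It averages over translations with a \emph{fixed} background and uses~\eqref{eq:conj_identity} only to bound $\EC$ of the average. Hence it never needs to hit $\iden_{\Omega_N}*\chi_\delta$ exactly. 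Lemma~\ref{lem:upper_bounded_layer}, based on the tetrahedral localization of~\cite{lewin2020local}, shows that a density equal to $\rho_0$ in the bulk and bounded in a layer of width $o(\ell)$ cannot have energy per volume asymptotically below $e_\UEG$.
\item Because the shell costs scale like the total surface area of the balls, the minimal radius $r_*$ is fixed before the packing is chosen.
\end{itemize}

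\textbf{Lower bound.} Your route is the paper's: decompose by particle number using the affinity of $\mathcal J_\zeta$, neutralize, and use the canonical limit on balls. Two steps need repair.

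\emph{The smearing scale.} With a fixed smearing, replacing $\iden_\Omega*\chi_\delta$ by the indicator of a ball containing its support creates a net charge of order $\delta\ell^2$. Its self-energy is of order $\delta^2\ell^3$, not $\mathcal O(\delta\ell^2)$. You need $\delta_\ell\to0$, as Theorem~\ref{thm:UEG}(ii) allows. The paper takes $\delta_\ell=\ell^{-1/2}$ and the density $\rho_0\iden_{B_{\ell-2\delta_\ell}}*\chi_{\delta_\ell}$. A minimizer for this density is supported in $B_{\ell-\delta_\ell}$, so no IMS step is needed.

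\emph{The neutralization.} Neutralizing by ``removing background'' fails for sectors with a particle deficit. Shrinking the ball leaves particles outside the new domain, so you no longer have a Dirichlet trial state. You must add particles together with background in an annulus. This is Lemma~\ref{lem:lower_completion}, which uses three ingredients:
\begin{itemize}
\item an a priori bound $|Q|=\mathcal O(\ell^2)$, from Onsager with the positive term $D(\sigma_X)\ge Q^2/(2(\ell+1))$ retained;
\item random centers in the annulus, with a pair penalty that controls the kinetic energy of disjoint Dirichlet orbitals;
\item an exact computation of the interaction with the non-radial charge in $B_\ell$, via harmonicity of its potential and Newton's theorem for the radial shell.
\end{itemize}
Your charging inequality is true; this construction yields it, including the term $Q^2/(2\ell')$. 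But it comes from this explicit construction, not from stability plus ``superadditivity and screening''. Finally, it suffices to treat a single sector whose Jellium energy is below the average.
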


The two forms are equivalent by the scaling relations $e^\hbar_\bullet(\rho_0)=\rho_0^{4/3}\tilde e_\bullet(\hbar^2\rho_0^{1/3})$ (Theorems~\ref{thm:LN_main}(iii) and~\ref{thm:UEG}(iii)). Theorem~\ref{thm:main} is the quantum analog of the classical equality $e_\Jel=e_\UEG$~\cite{lewin2019floating,cotar2017equality}, and it confirms the expectation stated in~\cite[Sec.~2.2]{lewin2020local} (Remark~\ref{rem:UEG_vs_Jel}). In the quantum case, the literature only contained the inequality $\tilde e^{\rm GC}_\Jel(\lambda)\le\tilde e_\UEG(\lambda)$ for the grand-canonical Jellium of~\cite{lewin2018statistical} (see Remark~\ref{rem:UEG_vs_Jel}). The proof is given in Section~\ref{sec:proofs}, and its strategy is outlined in Section~\ref{sec:main_strategy}.

\begin{remark}[Consequences for Jellium]\label{rem:main_consequences}
By Theorem~\ref{thm:main}, the properties of $\tilde e_\UEG$ in Theorem~\ref{thm:UEG}(iii)--(vi) hold for the Jellium energy of Theorem~\ref{thm:LN_main}. In particular, $\tilde e_\Jel(\lambda)=c_\TF\lambda-c_{\mathrm D}+o(1)$ as $\lambda\to\infty$, and $\tilde e_\Jel(\lambda)\to e_\UEG$ as $\lambda\to0$, where $e_\UEG$ is the classical UEG energy of~\cite[Thm.~2.6]{lewin2018statistical}. The latter coincides with the classical Jellium energy~\cite{lewin2019floating,cotar2017equality}. The general shape of $e^\hbar_\UEG(\rho_0)=e^\hbar_\Jel(\rho_0)$ as a function of $\rho_0$ is sketched in Figure~\ref{fig:energy_shape}.
\end{remark}

\subsection{Strategy of the proof}\label{sec:main_strategy}

Section~\ref{sec:proofs} proves the two inequalities separately, each involving the canonical neutral Dirichlet energy of Theorem~\ref{thm:LN_main}.

\medskip
\noindent\emph{The lower bound $e^\hbar_\Jel(\rho_0)\le e^\hbar_\UEG(\rho_0)$} (Theorem~\ref{thm:lower}).
We start from minimizers of the Levy--Lieb functional at the density $\rho_0\iden_{B_{\ell-2\delta_\ell}}*\chi_{\delta_\ell}$, where $\supp\chi\subset B_1$ and $\delta_\ell=\ell^{-1/2}$. These are states supported in $B_\ell$ satisfying homogeneous Dirichlet boundary conditions. Their indirect energy is $e^\hbar_\UEG(\rho_0)|B_\ell|+o(\ell^3)$ by Theorem~\ref{thm:UEG}(ii), and the self-energy of their discrepancy from $\rho_0\iden_{B_\ell}$ is $o(\ell^3)$. Thus~\eqref{eq:conj_identity} gives the same upper bound on their Jellium energy.

Since Jellium energy is affine, one particle-number sector has energy no greater than this average. Its excess charge is $\mathcal O(\ell^2)$ by the Onsager bound with the positive Coulomb term retained. Lemma~\ref{lem:lower_completion} then adds the missing background and particles in an annulus of fixed thickness, with an upper energy cost $\mathcal O(\ell^2)$. The completed state is exactly neutral and canonical in a ball of radius $\ell+\mathcal O(1)$. The canonical limit on balls of~\cite{lieb1975thermodynamic} proves the lower bound. The shell construction is given in Appendix~\ref{sec:canonical_completion} and does not require the density of the selected sector to be radial.

\medskip
\noindent\emph{The upper bound $e^\hbar_\UEG(\rho_0)\le e^\hbar_\Jel(\rho_0)$} (Theorem~\ref{thm:upper}).
Here Jellium states must be turned into states with a prescribed constant density, at no extensive cost in energy. The Jellium ground state is not expected to have constant density; at low density it should form a Wigner crystal. Averaging a crystal over translations (the ``floating crystal'') gives a constant density, but because of boundary charge fluctuations, the indirect energy per unit volume of a floating crystal can be strictly larger than its Jellium energy~\cite[Appendix~B]{lewin2015improved}. In the classical case this obstruction was overcome by constructing a floating Wigner crystal without boundary charge fluctuations~\cite{lewin2019floating}.

The proof in Section~\ref{sec:upper} does not use a crystal. A large cube is filled, up to a fraction $\varepsilon$ of its volume, by a periodic family of balls with room between them (Lemma~\ref{lem:upper_packing}). Each ball carries a canonical Dirichlet Jellium ground state, averaged over rotations, so that its charge relative to the background is neutral and radial. The rest of the cube is filled by a quasi-free state with prescribed density, taken from~\cite{lewin2020local} (Theorem~\ref{thm:basic_fluid}). Its density must vanish at the boundary of each Dirichlet ball. A compensated shell (Lemma~\ref{lem:upper_collar}) lets this density vanish smoothly near the ball and returns the missing mass in its outer part. The shell charge is neutral and radial, and its kinetic and Coulomb energies are proportional to the surface area of the ball. Consequently the cross Coulomb terms between the resulting neutral charge distributions vanish, and the Jellium energy of the product state is given by the exact identity~\eqref{eq:upper_exact_identity}.

The density of each such state is not constant. It becomes constant in the interior after averaging over translations of the packing, with the outer background held fixed. The average has Jellium energy equal to the average of the Jellium energies, and~\eqref{eq:conj_identity} bounds its indirect energy by this quantity. Its density equals $\rho_0$ except in a boundary layer of fixed width, where it is bounded by a fixed multiple of $\rho_0$. Lemma~\ref{lem:upper_bounded_layer}, which relies on the localization bound of~\cite{lewin2020local}, shows that such a layer does not lower the UEG energy per unit volume. The parameters are chosen in a fixed order. For a given accuracy, a minimum ball radius is fixed first, and only then a finite packing whose balls all have at least this radius. A finer packing therefore needs more balls and a larger period cell, but never smaller balls: it is obtained by dilating a packing of the unit cube until its smallest ball has the minimum radius. This is crucial because the kinetic and Coulomb energies of the shells are proportional to the total surface area of the balls, and this area per unit volume is at most a fixed multiple of the inverse minimum radius, independently of the packing. In the thin gaps between the shells the quasi-free state has constant density, so the gradient term in the kinetic energy bound of Theorem~\ref{thm:basic_fluid} vanishes there, and the remaining Thomas--Fermi term is small because the uncovered volume is small. The quantities that do depend on the packing, such as the period and the width of the boundary layer, are harmless, because the size of the cube tends to infinity with the packing held fixed. Only afterwards is the accuracy improved, with the minimum radius sent to infinity.

\section{Proofs}\label{sec:proofs}

In this section we prove Theorem~\ref{thm:main}, following the outline in Section~\ref{sec:main_strategy}. Section~\ref{sec:basic} collects the general facts used in the proofs. The lower bound is Theorem~\ref{thm:lower} and the upper bound is Theorem~\ref{thm:upper}. Both concern exactly the canonical neutral Dirichlet energy of Theorem~\ref{thm:LN_main}, so together they prove~\eqref{eq:main} for every integer $q\ge1$ and $\rho_0>0$ at $\hbar=1$, which suffices by the reduction below. No identification of unconfined grand-canonical Jellium with the canonical Dirichlet problem is assumed.

By the scaling relations $e^\hbar_\bullet(\rho_0)=\hbar^{-8}e^1_\bullet(\hbar^6\rho_0)$ for both energies (Theorems~\ref{thm:LN_main}(iii) and~\ref{thm:UEG}(iii)), it suffices to prove~\eqref{eq:main} for $\hbar=1$ and all $\rho_0>0$. To reduce clutter, in this section and in Appendices~\ref{sec:LN_proof} and~\ref{sec:canonical_completion} we set $\hbar=1$ and drop it from the notation, writing $E(\rho)$, $F_{\rm LL}(\rho)$, $H^{\rho_0}_\Omega$, $E_\Jel(\rho_0,\Omega)$, $e_\Jel(\rho_0)$ and $e_\UEG(\rho_0)$.

\subsection{Basic facts}\label{sec:basic}

For a grand-canonical state $\vGamma$, write
\[
\TC(\vGamma)=\sum_{n\ge1}\Tr\left(\sum_{i=1}^n-\Delta_{x_i}\right)\Gamma^{(n)},
\qquad
\Cc(\vGamma)=\sum_{n\ge2}\Tr\left(\sum_{1\le i<j\le n}\frac1{|x_i-x_j|}\right)\Gamma^{(n)},
\]
and define its indirect energy
\[
\EC(\vGamma)=\TC(\vGamma)+\Cc(\vGamma)-D(\rho_{\vGamma}).
\]
By~\eqref{eq:UEG_E}, $E(\rho)=\inf_{\rho_{\vGamma}=\rho}\EC(\vGamma)$. We collect the general facts used in this section and in Appendix~\ref{sec:LN_proof}. An $N$-particle state, that is, a density matrix $\Gamma$ on $\Hil^N$, is identified with the grand-canonical state whose only nonzero sector is $\Gamma^{(N)}=\Gamma$. We keep non-bold letters for such canonical states.

We use the Lieb--Oxford inequality to bound the energy of localized densities near the boundary from below.

\begin{theorem}[{Lieb--Oxford~\cite{lieb1981improved}, \cite[Sec.~3.1]{lewin2020local}}]\label{thm:basic_lieb_oxford}
There is a universal constant $c_{\rm LO}>0$ such that every grand-canonical state $\vGamma$ with $\rho_{\vGamma}\in L^1\cap L^{4/3}(\RR^3)$ satisfies
\[
\Cc(\vGamma)-D(\rho_{\vGamma})\ge-c_{\rm LO}\int_{\RR^3}\rho_{\vGamma}^{4/3}.
\]
\end{theorem}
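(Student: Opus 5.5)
The plan is to avoid applying the canonical Lieb--Oxford inequality sector by sector. That route fails: if $\rho_{\vGamma}=\sum_n\rho_{\Gamma^{(n)}}$, then both $D$ and $\int\rho^{4/3}$ are convex, so summing the $N$-particle bounds over $n$ goes in the wrong direction on both sides. Instead I will prove a \emph{pointwise classical} inequality that depends only on a fixed external density $\rho$ (which will later be $\rho_{\vGamma}$). I then integrate it against the $n$-particle densities of each sector. Since the right-hand side is then affine in the state, summation over $n$ is harmless.

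\textbf{Step 1 (pointwise bound).} For a fixed $0\le\rho\in L^1\cap L^{4/3}$, every $n\ge 0$ and every $x_1,\dots,x_n\in\RR^3$, I aim to show
\[
\sum_{1\le i<j\le n}\frac1{|x_i-x_j|}\ \ge\ \sum_{i=1}^n\int_{\RR^3}\frac{\rho(y)}{|x_i-y|}\,\mathrm dy\;-\;D(\rho)\;-\;c\sum_{i=1}^n\rho^{*}(x_i)^{1/3}.
\]
Here $\rho^{*}$ is a suitable local average of $\rho$ (a maximal function, or the original Lieb--Oxford choice of radii). The argument is the Lieb--Oxford/Onsager one:
\begin{itemize}
\item Smear each point $x_i$ into a uniform unit charge $\mu_i$ on a ball of radius $R_i\sim\rho^{*}(x_i)^{-1/3}$, chosen so that the ball carries roughly one unit of $\rho$-mass.
\item Use positivity of the Coulomb kernel: $D\bigl(\sum_i\mu_i-\rho\bigr)\ge0$.
\item Newton's theorem gives $\int\mu_i\mu_j/|x-y|\le1/|x_i-x_j|$ for the cross terms. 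The self-energies of the $\mu_i$ are $\lesssim R_i^{-1}\sim\rho^{*}(x_i)^{1/3}$.
\item The replacement of $\int\mu_i\,(\rho*|\cdot|^{-1})$ by $(\rho*|\cdot|^{-1})(x_i)$ costs at most $\int_{B_{R_i}}\rho/|x_i-y|$. With the chosen radius this is again $\lesssim\rho^{*}(x_i)^{1/3}$.
\end{itemize}

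\textbf{Step 2 (taking expectations).} Integrate the bound against $\Gamma^{(n)}$ and sum over $n$, using $\sum_n\Tr\Gamma^{(n)}=1$ and $\sum_n\rho_{\Gamma^{(n)}}=\rho_{\vGamma}$. With $\rho=\rho_{\vGamma}$ this yields
\[
\Cc(\vGamma)\ \ge\ \int\rho_{\vGamma}\,(\rho_{\vGamma}*|x|^{-1})-D(\rho_{\vGamma})-c\int\rho_{\vGamma}\,(\rho_{\vGamma}^{*})^{1/3}
=D(\rho_{\vGamma})-c\int\rho_{\vGamma}\,(\rho_{\vGamma}^{*})^{1/3},
\]
because the first integral equals $2D(\rho_{\vGamma})$ with the paper's normalization of $D$. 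Finally, Hölder's inequality and the $L^{4/3}$-boundedness of the maximal operator give $\int\rho(\rho^{*})^{1/3}\le C\int\rho^{4/3}$, which is the claimed bound.

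\textbf{Main obstacle.} The delicate point is Step 1: the radii must be chosen so that the self-energy and the near-field error both scale like $\rho^{1/3}$ at the point $x_i$. A naive choice $R_i=\rho(x_i)^{-1/3}$ fails where $\rho$ is small at $x_i$ but large nearby. I would therefore first try the maximal-function choice
\[
R(x)=\sup\Bigl\{r:\ \int_{B_r(x)}\rho\le1\Bigr\},
\]
for which $\int_{B_{R(x)}(x)}\rho/|x-y|\lesssim R(x)^{-1}$ holds by a layer-cake argument. The only other requirement is that $R(x)^{-3}\lesssim M\rho(x)$, where $M$ is the Hardy--Littlewood maximal function. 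Alternatively, one can simply cite~\cite{lieb1981improved}, whose proof is exactly of this pointwise, fixed-$\rho$ type and hence transfers verbatim to grand-canonical states, as noted in~\cite[Sec.~3.1]{lewin2020local}.
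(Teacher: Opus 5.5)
Your architecture is right, and it is exactly why the canonical inequality transfers to grand-canonical states. The paper gives no proof of its own; it only cites Lieb--Oxford and Lewin--Lieb--Seiringer. Your plan is to freeze $\rho=\rho_{\vGamma}$ and prove the variable-radius version of the paper's Onsager lemma, $\sum_{i<j}|x_i-x_j|^{-1}\ge\sum_i(\rho*|\cdot|^{-1})(x_i)-D(\rho)-\sum_i e(x_i)$ with $e(x)=\frac{3}{5R(x)}+\int_{B_{R(x)}(x)}\rho(y)|x-y|^{-1}\,\mathrm dy$. You then take expectations, so that the error $\int\rho_{\vGamma}e$ is affine in the state and $\sum_n\Tr\Gamma^{(n)}=1$ collects the $D(\rho)$ terms. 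This is essentially Lieb's maximal-function argument; its constant is worse than the cited one, which does not matter here. One correction to your motivation: summing the canonical bounds fails only because of $\int\rho^{4/3}$. For the direct term, convexity gives $\sum_np_nD(\rho_n)\ge D(\rho_{\vGamma})$, which is the favorable direction.

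The gap is in the step you yourself call delicate. For $R(x)=\sup\{r:\int_{B_r(x)}\rho\le1\}$, the claim $\int_{B_{R(x)}(x)}\rho(y)|x-y|^{-1}\,\mathrm dy\lesssim R(x)^{-1}$ is false. With $m(s)=\int_{B_s(x)}\rho$, the integral equals $m(R)/R+\int_0^R m(s)s^{-2}\,\mathrm ds$, and $m\le1$ says nothing about the second term. If $\int\rho\le1$, then $R\equiv\infty$ while the left side is positive. If instead a mass $\varepsilon<1$ sits uniformly in $B_\eta(z)$ and the rest of $\rho$ lies at distance $\ge D$ from $z$, then every $x\in B_\eta(z)$ has $R(x)\ge D-\eta$ but a near-field term $\ge\varepsilon/(2\eta)$. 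So the chain $e(x)\lesssim R(x)^{-1}\lesssim(M\rho(x))^{1/3}$ breaks, and your main-text assertion that the near-field cost is $\lesssim\rho^*(x_i)^{1/3}$ is left unproved. The missing ingredient is the dyadic-shell estimate
\[
\int_{B_R(x)}\frac{\rho(y)}{|x-y|}\,\mathrm dy\le\sum_{k\ge0}\frac{2^{k+1}}{R}\int_{B_{2^{-k}R}(x)}\rho\le\frac{32\pi}{9}R^2\,M\rho(x),
\]
where $M$ is the centered Hardy--Littlewood maximal function. Choosing $R(x)=(M\rho(x))^{-1/3}$ then gives $e(x)\le C(M\rho(x))^{1/3}$. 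This also rescues your unit-mass radius: it is at least $(|B_1|M\rho(x))^{-1/3}$, and the part of the ball beyond that smaller radius contributes at most its inverse times $\int_{B_{R(x)}(x)}\rho\le1$. After that, your H\"older and $L^{4/3}$ maximal-inequality step, $\int\rho\,(M\rho)^{1/3}\le\|\rho\|_{4/3}\|M\rho\|_{4/3}^{1/3}\le C\int\rho^{4/3}$, completes the proof.
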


The following lemma gives the gradient bounds needed to localize densities and control the kinetic energy
of a smeared indicator.

\begin{lemma}\label{lem:basic_elementary}
The following hold true.
\begin{enumerate}[(i)]
\item Let $\Omega\subset\RR^3$ be measurable, and let $\eta\ge0$ satisfy $\int_{\RR^3}\eta=1$ and $\sqrt\eta\in H^1(\RR^3)$. Then $0\le\iden_\Omega*\eta\le1$ and
\begin{equation}\label{eq:basic_sqrt_cutoff}
|\nabla\sqrt{\iden_\Omega*\eta}|^2
\le\iden_\Omega*|\nabla\sqrt\eta|^2
\le\int_{\RR^3}|\nabla\sqrt\eta|^2
\quad\hbox{a.e.}
\end{equation}
In particular, $\sqrt{\iden_\Omega*\eta}\in W^{1,\infty}(\RR^3)$. If $\rho\ge0$ and $\sqrt\rho\in H^1(\RR^3)$, then $\sqrt{(\iden_\Omega*\eta)\rho}\in H^1(\RR^3)$ and
\[
\int_{\RR^3}|\nabla\sqrt{(\iden_\Omega*\eta)\rho}|^2
\le2\int_{\RR^3}|\nabla\sqrt\rho|^2
+2\|\nabla\sqrt{\iden_\Omega*\eta}\|_\infty^2\int_{\RR^3}\rho.
\]
\item Let $\Omega\subset\RR^3$ be a bounded Lipschitz domain, and let $\eta$ satisfy the assumptions in~(i), with $\supp\eta\subset B_1$. Set $\eta_\delta(x)=\delta^{-3}\eta(x/\delta)$. Then
\[
\int_{\RR^3}|\nabla\sqrt{\iden_\Omega*\eta_\delta}|^2
\le C\frac{|\partial\Omega|}{\delta}
\]
for all sufficiently small $\delta>0$, where $|\partial\Omega|=\mathcal H^2(\partial\Omega)$ is the surface area and $C$ depends only on $\eta$ and the Lipschitz geometry of $\Omega$.
\end{enumerate}
\end{lemma}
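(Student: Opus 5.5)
For part~(i), I set $f:=\iden_\Omega*\eta$. The bounds $0\le f\le1$ are immediate from $0\le\iden_\Omega\le1$ and $\int\eta=1$. For the gradient I write $\eta=(\sqrt\eta)^2$, so $\nabla\eta=2\sqrt\eta\,\nabla\sqrt\eta$ and
\[
\nabla f=\iden_\Omega*\nabla\eta=2\,\iden_\Omega*(\sqrt\eta\,\nabla\sqrt\eta).
\]
Applying Cauchy--Schwarz to the measure $\iden_\Omega(x-y)\,\mathrm dy$ gives
\[
|\nabla f|^2\le4\,(\iden_\Omega*\eta)\,(\iden_\Omega*|\nabla\sqrt\eta|^2)=4f\,\big(\iden_\Omega*|\nabla\sqrt\eta|^2\big).
\]
Since $|\nabla\sqrt f|^2=|\nabla f|^2/(4f)$ on $\{f>0\}$, this yields the first inequality in~\eqref{eq:basic_sqrt_cutoff} there. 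On $\{f=0\}$ the nonnegative $W^{1,1}_{\rm loc}$ function $\sqrt f$ has vanishing gradient a.e. To make this rigorous I would work with $\sqrt{f+\epsilon}$ and let $\epsilon\to0$. The second inequality follows from $\iden_\Omega\le1$. This gives $\sqrt f\in W^{1,\infty}$.

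For the product, I use $\sqrt{f\rho}=\sqrt f\sqrt\rho$ and the Leibniz rule for a $W^{1,\infty}$ function times an $H^1$ function. Then
\[
|\nabla(\sqrt f\sqrt\rho)|^2\le2f|\nabla\sqrt\rho|^2+2\rho|\nabla\sqrt f|^2.
\]
Using $f\le1$ and integrating gives the stated bound.

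For part~(ii), I use the pointwise bound from~(i) applied to $\eta_\delta$: $|\nabla\sqrt{f_\delta}|^2\le\iden_\Omega*|\nabla\sqrt{\eta_\delta}|^2\le\|\nabla\sqrt{\eta_\delta}\|_{L^1}\le\delta^{-2}\int|\nabla\sqrt\eta|^2$, by scaling. The key observation is that $\nabla f_\delta$ vanishes outside the tube $\partial\Omega+B_\delta$. Indeed, if $d(x,\partial\Omega)>\delta$, then $B_\delta(x)$ lies entirely in $\Omega$ or entirely in its complement, and $f_\delta$ is locally constant (equal to $1$ or $0$) near $x$. Hence
\[
\int|\nabla\sqrt{f_\delta}|^2\le\frac{C_\eta}{\delta^2}\,|\partial\Omega+B_\delta|.
\]

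The remaining geometric step is the only nonroutine point. I need $|\partial\Omega+B_\delta|\le C\,\delta\,|\partial\Omega|$ for small $\delta$, with $C$ depending on the Lipschitz character. I would cover $\partial\Omega$ by finitely many coordinate cylinders in which $\partial\Omega$ is the graph of a Lipschitz function $g$ with constant $L$. In each cylinder, the $\delta$-neighbourhood of the graph is contained in $\{|x_3-g(x')|\le\delta\sqrt{1+L^2}\}$ over a slightly enlarged base. Its volume is therefore at most $2\delta\sqrt{1+L^2}$ times the base area, which is at most $2\delta\sqrt{1+L^2}$ times the surface area of the graph piece, since the area element is at least $1$. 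Summing over the finitely many cylinders, with bounded overlap, gives the claim for $\delta$ smaller than the cylinder size. Combining this with the previous display gives $\int|\nabla\sqrt{f_\delta}|^2\le C|\partial\Omega|/\delta$.
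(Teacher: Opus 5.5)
The paper gives no proof of this lemma. It is one of the general facts of Section~\ref{sec:basic}, which are taken as inputs, so there is no argument of the paper to compare with. Your proof is correct and is the standard one:
\begin{itemize}
\item for~\eqref{eq:basic_sqrt_cutoff}, Cauchy--Schwarz applied to $\nabla(\iden_\Omega*\eta)=2\,\iden_\Omega*(\sqrt\eta\,\nabla\sqrt\eta)$;
\item for~(ii), the uniform bound $|\nabla\sqrt{\iden_\Omega*\eta_\delta}|^2\le\delta^{-2}\int_{\RR^3}|\nabla\sqrt\eta|^2$, the vanishing of the gradient off the $\delta$-tube around $\partial\Omega$, and the tube estimate $|\partial\Omega+B_\delta|\le C\delta|\partial\Omega|$.
\end{itemize}
Your graph-chart proof of the tube estimate is sound. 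The resulting constant, $2\sqrt{1+L^2}$ times the number of charts, is exactly the dependence on the Lipschitz geometry that the statement allows. The estimate is also invariant under the dilation $(\Omega,\delta)\mapsto(\ell\Omega,\ell\delta)$, since both sides scale like $\ell$. So ``sufficiently small $\delta$'' for $C_1$ covers the paper's use on $C_\ell$ with fixed $\delta$ and $\ell\to\infty$, with a constant independent of $\ell$.

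Two small corrections.

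First, in your display for~(ii), $\|\nabla\sqrt{\eta_\delta}\|_{L^1}$ should read $\|\nabla\sqrt{\eta_\delta}\|_{L^2}^2$.

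Second, your remark on $\{f=0\}$ presupposes $\sqrt f\in W^{1,1}_{\rm loc}$, which is what is being proved. The $\epsilon$-regularization you mention handles this without treating $\{f=0\}$ separately. Your Cauchy--Schwarz bound gives, everywhere,
\[
|\nabla\sqrt{f+\epsilon}|^2=\frac{|\nabla f|^2}{4(f+\epsilon)}\le\frac{f}{f+\epsilon}\,\iden_\Omega*|\nabla\sqrt\eta|^2\le\iden_\Omega*|\nabla\sqrt\eta|^2 .
\]
Hence:
\begin{itemize}
\item the functions $\sqrt{f+\epsilon}$ are uniformly Lipschitz and converge uniformly to $\sqrt f$;
\item their gradients converge weak-$*$ in $L^\infty$ to $\nabla\sqrt f$;
\item the pointwise constraint $|g|^2\le\iden_\Omega*|\nabla\sqrt\eta|^2$ defines a convex, weak-$*$ closed set, so it survives the limit.
\end{itemize}
The facts you use implicitly are standard: $\eta=(\sqrt\eta)^2\in W^{1,1}$ with $\nabla\eta=2\sqrt\eta\,\nabla\sqrt\eta$, and $\nabla(\iden_\Omega*\eta)=\iden_\Omega*\nabla\eta$ for $\iden_\Omega\in L^\infty$.
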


We use Newton's theorem to compute radial Coulomb potentials and show that neutral radial charges do not interact with charges outside their supporting balls.

\begin{theorem}[{Newton's theorem~\cite[Thm.~5.2]{lieb2010stability}}]\label{thm:basic_newton}
Let $f\in L^1(\RR^3)$ be radial about $a\in\RR^3$ and supported in $\overline{B_r(a)}$, and write $f(s)$ for its value at distance $s$ from $a$. Then, for $x\ne a$,
\begin{equation}\label{eq:basic_newton}
\frac{1}{4\pi}\int_{\RR^3}\frac{f(y)}{|x-y|}\,\mathrm dy=\frac1{|x-a|}\int_0^{|x-a|}f(s)s^2\,\mathrm ds+\int_{|x-a|}^\infty f(s)s\,\mathrm ds.
\end{equation}
Consequently:
\begin{enumerate}[(i)]
\item For $|x-a|\ge r$,
\[
\int_{\RR^3}\frac{f(y)}{|x-y|}\,\mathrm dy=\frac1{|x-a|}\int_{\RR^3}f.
\]
If $\int_{\RR^3}f=0$, the left side vanishes outside $B_r(a)$, and $D(f,g)=0$ for every $g\in L^1(\RR^3)$ supported outside $B_r(a)$.
\item If $f\ge0$, then $\int_{\RR^3}f(y)|x-y|^{-1}\,\mathrm dy\le|x-a|^{-1}\int_{\RR^3}f$ for all $x\ne a$.
\item If $g\in L^1(\RR^3)$ is radial about $b\in\RR^3$ and supported in a closed ball $\overline{B_{r'}(b)}$ disjoint from $\overline{B_r(a)}$, then
\begin{equation}\label{eq:basic_radial_cross}
2D(f,g)=\frac1{|a-b|}\int_{\RR^3}f\int_{\RR^3}g.
\end{equation}
\item If $f$ is bounded, then
\[
D(f)=\frac12\int_0^\infty\frac1{s^2}\Big(\int_{B_s(a)}f\Big)^2\,\mathrm ds.
\]
\end{enumerate}
\end{theorem}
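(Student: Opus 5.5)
The plan is to prove the shell-average identity first and then read off (i)--(iv) from it. Translating, I assume $a=0$.

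\textbf{Shell average.} The key input is the elementary fact that for $x\ne0$ and $s>0$, the average of $y\mapsto|x-y|^{-1}$ over the sphere $\{|y|=s\}$ equals $1/\max(|x|,s)$. This is the classical computation: integrate in the polar angle, or use that $|x-\cdot|^{-1}$ is harmonic inside the smaller sphere and apply the mean value property in the appropriate variable.

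\textbf{The formula~\eqref{eq:basic_newton}.} Writing the integral in polar coordinates and using that $f$ is radial, Fubini gives
\[
\int f(y)|x-y|^{-1}\,\mathrm dy=4\pi\int_0^\infty f(s)s^2\max(|x|,s)^{-1}\,\mathrm ds.
\]
Splitting at $s=|x|$ yields~\eqref{eq:basic_newton}. Fubini is justified because $f\in L^1$ and the shell averages are bounded by $1/|x|$, so the integrand is absolutely integrable.

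\textbf{Consequences (i)--(iii).}
\begin{enumerate}[(i)]
\item For $|x|\ge r$, the second integral in~\eqref{eq:basic_newton} vanishes because $\supp f\subset\overline{B_r}$. The first becomes $\int f/(4\pi)$, which gives the point-charge form. If $\int f=0$, the potential of $f$ vanishes outside $B_r$. Then $D(f,g)=\tfrac12\int g\,(f*|\cdot|^{-1})=0$ for $g$ supported outside $B_r$.
\item If $f\ge0$, use $\max(|x|,s)\ge|x|$ in the polar-coordinate expression.
\item By (i), the potential of $f$ on $\supp g$ equals $(\int f)/|x-a|$. Hence $2D(f,g)=(\int f)\int g(x)|x-a|^{-1}\,\mathrm dx$. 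Since $a\notin\overline{B_{r'}(b)}$, applying (i) to $g$ at the point $a$ gives $\int g(x)|x-a|^{-1}\,\mathrm dx=(\int g)/|a-b|$.
\end{enumerate}

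\textbf{Consequence (iv).} Set $M(s)=\int_{B_s}f=4\pi\int_0^s f(t)t^2\,\mathrm dt$. This is bounded, constant for $s\ge r$, and $O(s^3)$ near $0$ because $f$ is bounded. An integration by parts, with the boundary term at infinity vanishing since $M(s)/s\to0$, rewrites~\eqref{eq:basic_newton} as
\[
V(x):=\int f(y)|x-y|^{-1}\,\mathrm dy=\int_{|x|}^\infty M(s)s^{-2}\,\mathrm ds.
\]
Then
\[
D(f)=\tfrac12\int f(x)V(x)\,\mathrm dx=\tfrac12\int_0^\infty s^{-2}M(s)\Big(\int_{B_s}f\Big)\,\mathrm ds=\tfrac12\int_0^\infty M(s)^2s^{-2}\,\mathrm ds,
\]
where the middle equality is Fubini over $\{|x|<s\}$. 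Absolute integrability holds because $|f|\,|M|/s^2$ is integrable: it is bounded near $0$ and supported in $s\le r$ after integrating in $x$.

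The only step needing real care is the sphere-average formula and the justification of Fubini and of the boundary terms in (iv). Everything else is bookkeeping.
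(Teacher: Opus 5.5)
Your proof is correct and is the standard argument behind the cited Theorem~5.2 of Lieb--Seiringer: the spherical average $1/\max(|x|,s)$ of $|x-\cdot|^{-1}$, followed by Fubini--Tonelli. The paper does not reprove the theorem and treats (i)--(iv) as immediate consequences, exactly as you derive them. The only slip is in your justification for (iv): after integrating in $x$, the integrand $|M(s)|\,s^{-2}\int_{B_s}|f|$ is not supported in $s\le r$, but equals $\bigl|\int_{\RR^3} f\bigr|\,\|f\|_1\, s^{-2}$ for $s\ge r$, which is still integrable at infinity, so your Fubini step goes through unchanged.
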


The following form of Onsager's inequality retains the positive Coulomb term needed to control the excess charge in Appendix~\ref{sec:canonical_completion}.

\begin{lemma}[{Onsager~\cite{onsager1939electrostatic}; cf.~\cite[Lemma~6.1]{lieb2010stability}}]\label{lem:basic_onsager}
Let $\rho\in L^1(\RR^3)$ satisfy $0\le\rho\le\bar\rho$, and let $x_1,\dots,x_N\in\RR^3$. For $a>0$, let $\mu_j:=|B_a|^{-1}\iden_{B_a(x_j)}$. Then
\begin{multline*}
\sum_{1\le j<k\le N}\frac1{|x_j-x_k|}-\sum_{j=1}^N\int_{\RR^3}\frac{\rho(y)}{|x_j-y|}\,\mathrm dy+D(\rho)\\
\ge D\Big(\sum_{j=1}^N\mu_j-\rho\Big)-N\Big(\frac3{5a}+\frac{2\pi}5\bar\rho a^2\Big).
\end{multline*}
In particular, the left side is at least $-N\big(\frac3{5a}+\frac{2\pi}5\bar\rho a^2\big)$.
\end{lemma}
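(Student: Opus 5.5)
The idea is to smear each point charge over a small ball and compare point-point and point-background interactions with the corresponding smeared ones. By Newton's theorem the smeared interactions can only be smaller, and the errors are explicit self-energy constants.

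\textbf{Expanding the square.} Set $\mu:=\sum_j\mu_j$. Since $D$ is bilinear,
\[
D(\mu-\rho)=\sum_{j<k}2D(\mu_j,\mu_k)+\sum_jD(\mu_j)-\sum_j2D(\mu_j,\rho)+D(\rho).
\]
It therefore suffices to prove three comparisons, one for each type of term.

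\textbf{Self-energy of one smeared charge.} Each $\mu_j$ is the uniform unit charge on a ball of radius $a$. A standard computation, via Theorem~\ref{thm:basic_newton}(iv), gives
\[
D(\mu_j)=\frac{3}{5a}\cdot\frac12\cdot2,
\]
which, with the convention $D=\tfrac12\iint$, means $2D(\mu_j)=\frac{6}{5a}$ and $D(\mu_j)=\frac{3}{5a}$.

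\textbf{Pairwise terms.} Let $\Phi_k:=|\cdot|^{-1}*\mu_k$ denote the potential of $\mu_k$. By Theorem~\ref{thm:basic_newton}(ii) applied to the nonnegative radial charge $\mu_k$, we have $\Phi_k(x)\le|x-x_k|^{-1}$. Integrating this against $\mu_j$, and applying Newton's theorem once more (the potential of $\mu_j$ is at most $|x_k-y|^{-1}$), gives
\[
2D(\mu_j,\mu_k)\le|x_j-x_k|^{-1}.
\]
So each pairwise interaction of the smeared charges is bounded by the corresponding point interaction.

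\textbf{Interaction with the background.} This is the only step that is not immediate. We need to show
\[
\int\frac{\rho(y)}{|x_j-y|}\,\mathrm dy-2D(\mu_j,\rho)\le\frac{2\pi}{5}\bar\rho a^2.
\]
The left side equals $\int\rho(y)\,\big(|x_j-y|^{-1}-\Phi_j(y)\big)\,\mathrm dy$. By Newton's theorem the weight $|x_j-y|^{-1}-\Phi_j(y)$ is nonnegative and vanishes for $|y-x_j|\ge a$. Since $0\le\rho\le\bar\rho$, the integral is at most
\[
\bar\rho\int_{B_a}\big(|y|^{-1}-\Phi(y)\big)\,\mathrm dy,
\]
where $\Phi$ is the potential of the normalized ball charge centered at the origin. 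Inside the ball this potential is $\Phi(y)=\frac{3}{2a}-\frac{|y|^2}{2a^3}$. Computing the integral explicitly:
\[
\int_{B_a}|y|^{-1}\,\mathrm dy=2\pi a^2,\qquad \int_{B_a}\Phi\,\mathrm dy=4\pi a^2\Big(\frac12-\frac1{10}\Big)=\frac{8\pi a^2}{5}.
\]
The difference is $\frac{2\pi}{5}a^2$, which is the claimed bound.

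\textbf{Conclusion.} Substituting the three comparisons into the expansion of $D(\mu-\rho)$ gives
\[
D(\mu-\rho)\le\sum_{j<k}|x_j-x_k|^{-1}+\frac{3N}{5a}-\sum_j\int\frac{\rho(y)}{|x_j-y|}\,\mathrm dy+\frac{2\pi}{5}\bar\rho a^2N+D(\rho).
\]
Rearranging yields the stated inequality. The "in particular" statement follows from $D(\mu-\rho)\ge0$, which holds because the Coulomb kernel is positive definite (its Fourier transform is $4\pi/|k|^2$); here we use that $\mu-\rho\in L^1\cap L^{6/5}$, since $\rho$ is bounded and integrable.
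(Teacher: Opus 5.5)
Your proposal is correct and follows the paper's proof essentially step for step. You expand $D\big(\sum_j\mu_j-\rho\big)$, use $D(\mu_j)=\frac{3}{5a}$, bound the smeared pair and background interactions via Newton's theorem (nonnegative kernel difference vanishing outside $B_a$, integrating to $\frac{2\pi}{5}a^2$), and conclude with $D\ge0$. The only blemish is the garbled intermediate expression $\frac{3}{5a}\cdot\frac12\cdot2$; the value $D(\mu_j)=\frac{3}{5a}$ you actually use is right.
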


\begin{proof}
For $\nu=\sum_j\mu_j-\rho$,
\[
D(\nu)=\sum_jD(\mu_j)+2\sum_{j<k}D(\mu_j,\mu_k)-2\sum_jD(\mu_j,\rho)+D(\rho).
\]
The uniform probability measure on a ball of radius $a$ has $D(\mu_j)=\frac3{5a}$. By Theorem~\ref{thm:basic_newton}(ii), the potential of $\mu_j$ is at most $|x-x_j|^{-1}$, so $2D(\mu_j,\mu_k)\le|x_j-x_k|^{-1}$. For the background term, let $\mu_a:=|B_a|^{-1}\iden_{B_a(0)}$. The function $|z|^{-1}-\int_{\RR^3}\mu_a(y)|z-y|^{-1}\,\mathrm dy$ is nonnegative by Theorem~\ref{thm:basic_newton}(ii) and vanishes for $|z|\ge a$ by Theorem~\ref{thm:basic_newton}(i). Hence
\begin{align*}
\int_{\RR^3}\frac{\rho(y)}{|x_j-y|}\,\mathrm dy-2D(\mu_j,\rho)
&=\int_{\RR^3}\rho(x_j-z)\Big(\frac1{|z|}-\int_{\RR^3}\frac{\mu_a(y)}{|z-y|}\,\mathrm dy\Big)\,\mathrm dz\\
&\le\bar\rho\int_{B_a}\Big(\frac1{|z|}-\int_{\RR^3}\frac{\mu_a(y)}{|z-y|}\,\mathrm dy\Big)\,\mathrm dz
=\frac{2\pi}5\bar\rho a^2.
\end{align*}
Inserting these three facts into the expansion of $D(\nu)$ gives the claim, and $D(\nu)\ge0$ gives the last assertion.
\end{proof}

The proofs of the lower bound (Theorem~\ref{thm:lower}) and the upper bound (Theorem~\ref{thm:upper}) both use grand-canonical states. Let $\zeta\ge0$ be a background with $\zeta\in L^1\cap L^\infty(\RR^3)$. Consider a grand-canonical state $\vGamma=(\Gamma^{(n)})$ as in Section~\ref{sec:notation}, with finite kinetic energy and $\int_{\RR^3}\rho_{\vGamma}<\infty$. Its Jellium energy is
\begin{multline*}
\mathcal{J}_\zeta(\vGamma):=\sum_{n\ge1}\Tr_{\Hil^n}\bigg[\sum_{j=1}^n\Big(-\Delta_{x_j}-\int_{\RR^3}\frac{\zeta(y)}{|x_j-y|}\,\mathrm dy\Big)+\sum_{1\le j<k\le n}\frac{1}{|x_j-x_k|}\bigg]\Gamma^{(n)}\\
+D(\zeta).
\end{multline*}
Here, the second term is simply $-2D(\rho_{\vGamma},\zeta)$, so
\begin{equation}\label{eq:conj_identity}
\mathcal{J}_\zeta(\vGamma)=\EC(\vGamma)+D(\rho_{\vGamma})-2D(\rho_{\vGamma},\zeta)+D(\zeta)=\EC(\vGamma)+D(\rho_{\vGamma}-\zeta).
\end{equation}
For a fixed background $\zeta$, the left side of~\eqref{eq:conj_identity} is affine in $\vGamma$. On the right, only the sum is affine: the indirect energy $\EC(\vGamma)$ is concave in $\vGamma$, because $D(\rho_{\vGamma})$ is convex, while the direct term $D(\rho_{\vGamma}-\zeta)$ is convex. Averaging trial states with a common background therefore gives an exact formula for the Jellium energy of the average, and dropping the direct term in~\eqref{eq:conj_identity} turns it into an upper bound on the indirect energy of the average. This will determine how we average the final trial states.

Let $A_1,\ldots,A_J\subset\RR^3$ be pairwise disjoint measurable sets. Put $\mathfrak h_j=L^2(A_j;\CC^q)$ and let $\Fock$ denote fermionic Fock space. Since the spaces $\mathfrak h_j$ are mutually orthogonal, there is a canonical unitary
\[
\mathcal U:\bigotimes_{1\le j\le J}\Fock(\mathfrak h_j)\to\Fock\Bigl(\bigoplus_{1\le j\le J}\mathfrak h_j\Bigr),
\qquad
\mathcal U(\Psi_1\otimes\cdots\otimes\Psi_J)=\Psi_1\wedge\cdots\wedge\Psi_J,
\]
where $\Psi_j\in\bigwedge^{n_j}\mathfrak h_j$ and $\Psi_1\wedge\cdots\wedge\Psi_J$ is the antisymmetrized product, normalized to have norm one when every $\Psi_j$ does. For number-conserving fermionic states $\vGamma_j$ supported in $A_j$, define the antisymmetrized product state
\[
\vGamma=\vGamma_1\wedge\cdots\wedge\vGamma_J=\mathcal U(\vGamma_1\otimes\cdots\otimes\vGamma_J)\,\mathcal U^*,
\]
with the vacuum on the orthogonal complement of $\bigoplus_{1\le j\le J}\mathfrak h_j$ in $L^2(\RR^3;\CC^q)$. For pure states $\vGamma_j=|\Psi_j\rangle\langle\Psi_j|$, it is $|\Psi_1\wedge\cdots\wedge\Psi_J\rangle\langle\Psi_1\wedge\cdots\wedge\Psi_J|$.

\begin{lemma}[Antisymmetrized product of disjoint fermionic states]\label{lem:basic_product}
Let the states $\vGamma_1,\ldots,\vGamma_J$ be number-conserving, fermionic and disjointly supported, with $\vGamma_j$ supported in $A_j$, finite mean particle number and finite kinetic and interaction energies after extension to $\RR^3$. Then $\vGamma=\vGamma_1\wedge\cdots\wedge\vGamma_J$ is a number-conserving fermionic state and
\begin{align}
\rho_{\vGamma}&=\sum_{1\le j\le J}\rho_{\vGamma_j},\label{eq:basic_product_density}\\
\TC(\vGamma)&=\sum_{1\le j\le J}\TC(\vGamma_j),\label{eq:basic_product_kinetic}\\
\Cc(\vGamma)&=\sum_{1\le j\le J}\Cc(\vGamma_j)+2\sum_{1\le j<k\le J}D(\rho_{\vGamma_j},\rho_{\vGamma_k}).
\label{eq:basic_product_interaction}
\end{align}
\end{lemma}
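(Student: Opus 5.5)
The plan is to reduce everything to pure product states and then extend by linearity. Both sides of \eqref{eq:basic_product_density}--\eqref{eq:basic_product_interaction} are affine in each $\vGamma_j$ separately. So it suffices to treat pure states $\vGamma_j=|\Psi_j\rangle\langle\Psi_j|$ with $\Psi_j\in\bigwedge^{n_j}\mathfrak h_j$ of fixed particle number, together with the off-diagonal terms $|\Psi_j\rangle\langle\Psi_j'|$ inside the same number sector. Number conservation of each $\vGamma_j$ means no cross terms between different $n_j$ appear. Positivity, normalization and number conservation of $\vGamma$ are immediate from unitarity of $\mathcal U$ and the fact that $\mathcal U$ maps $\bigotimes_j\bigwedge^{n_j}\mathfrak h_j$ into $\bigwedge^{\sum n_j}$. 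Fermionic statistics holds because the range of $\mathcal U$ lies in the antisymmetric Fock space.

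For the pure case, set $n=\sum_jn_j$. The antisymmetrized product is
\[
\Psi=\Bigl(\tfrac{\prod_j n_j!}{n!}\Bigr)^{1/2}\sum_{\sigma}\operatorname{sgn}(\sigma)\,(\Psi_1\otimes\cdots\otimes\Psi_J)\circ\sigma,
\]
where $\sigma$ ranges over coset representatives of $\mathfrak S_n/(\mathfrak S_{n_1}\times\cdots\times\mathfrak S_{n_J})$. Because the $A_j$ are disjoint, distinct terms have disjoint supports in configuration space. So at each point $X$ at most one term is nonzero. This gives $|\Psi|^2$ as a symmetrization of $\prod_j|\Psi_j|^2$, and computing the one-body density yields \eqref{eq:basic_product_density}.

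The same argument applies to the two-body density. For $x\in A_j$, $y\in A_k$ it equals $\rho_2^{(j)}(x,y)$ if $j=k$ and $\rho_{\Psi_j}(x)\rho_{\Psi_k}(y)$ if $j\ne k$. Integrating against $|x-y|^{-1}$ gives \eqref{eq:basic_product_interaction}, with the factor $2D$ coming from the ordered pairs $(j,k)$ and $(k,j)$. For the off-diagonal/mixed case the same support computation applies verbatim to the kernels $\Psi(X)\overline{\Psi'(X)}$. For kinetic energy, $\nabla_{x_i}$ acts locally, so $|\nabla_{x_i}\Psi|^2$ also splits over disjoint supports, which gives \eqref{eq:basic_product_kinetic}.

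The main obstacle is the kinetic energy. The extension by zero of a function in $H^1(A_j)$ need not be in $H^1(\RR^3)$, and the pointwise disjoint-support argument needs $\Psi\in H^1$ with weak gradient equal to the sum of the gradients of the pieces. I would use the hypothesis that each $\vGamma_j$ has finite kinetic energy after extension to $\RR^3$. This means the extended $\Psi_j$ lie in $H^1(\RR^{3n_j})$, so each term $(\Psi_1\otimes\cdots\otimes\Psi_J)\circ\sigma$ is in $H^1(\RR^{3n})$ with gradient supported in its own support, up to null sets. Since the supports are disjoint a.e., the gradients of distinct terms are a.e. orthogonal pointwise and the norms add. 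For mixed states, I would diagonalize each $\Gamma_j^{(n_j)}$ and use monotone convergence to pass the identities from finite to infinite sums, using the finiteness of mean particle number and energies.
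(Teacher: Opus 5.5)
The paper gives no proof of this lemma. It is stated as a standard fact, and the Lean note lists the general facts of Section~\ref{sec:basic}, including such kinetic and Coulomb identities, among the assumed inputs. So there is no argument in the paper to compare yours with.

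Your configuration-space argument is the standard one, and it is correct. Distinct coset terms $(\Psi_1\otimes\cdots\otimes\Psi_J)\circ\sigma$ live on disjoint subsets of configuration space, so $|\Psi|^2$, the pair density and $|\nabla\Psi|^2$ all split term by term. The prefactor $\prod_j n_j!/n!$ exactly cancels the number of cosets, which gives the three identities for pure states.

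Two small remarks.

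First, the off-diagonal operators $|\Psi_j\rangle\langle\Psi_j'|$ are unnecessary. They also do not satisfy the identities as literally written: for them the density of the product picks up factors $\prod_{k\ne j}\langle\Psi_k',\Psi_k\rangle$. In other words, the affine identities on states become multilinear identities with trace factors. What you do at the end suffices. Diagonalizing each number sector writes $\vGamma_j$ as a countable convex combination of pure fixed-$n$ states, and the antisymmetrized products of the eigenvectors form an orthonormal eigenbasis of $\vGamma$. The densities, the kinetic and Coulomb energies and the cross terms $D(\rho_{\vGamma_j},\rho_{\vGamma_k})$ are all nonnegative, so Tonelli yields the three identities in $[0,\infty]$ directly. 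The finiteness hypotheses only ensure that both sides are finite.

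Second, your step ``gradient supported in its own support'' is precisely the fact that $\nabla f=0$ a.e.\ on $\{f=0\}$ for $f\in H^1$. It would be worth citing explicitly, since it is what makes the argument work for merely measurable $A_j$. Together with the hypothesis that the zero extensions lie in $H^1$, it is all the kinetic part requires.
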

Consequently, for every decomposition $\zeta=\sum_{1\le j\le J}\zeta_j$ into $\zeta_j\ge0$ in $L^1\cap L^\infty(\RR^3)$,
\begin{equation}\label{eq:basic_product_J}
\mathcal J_\zeta(\vGamma)
=\sum_{1\le j\le J}\mathcal J_{\zeta_j}(\vGamma_j)
+2\sum_{1\le j<k\le J}D(\rho_{\vGamma_j}-\zeta_j,\rho_{\vGamma_k}-\zeta_k).
\end{equation}
In fact, by definition $\mathcal J_\zeta(\vGamma)=\TC(\vGamma)+\Cc(\vGamma)-2D(\rho_{\vGamma},\zeta)+D(\zeta)$. Inserting~\eqref{eq:basic_product_density}--\eqref{eq:basic_product_interaction} and $\zeta=\sum_{1\le j\le J}\zeta_j$, the terms with a single index $j$ add up to $\mathcal J_{\zeta_j}(\vGamma_j)$, while for $j<k$ the cross terms $2D(\rho_{\vGamma_j},\rho_{\vGamma_k})-2D(\rho_{\vGamma_j},\zeta_k)-2D(\rho_{\vGamma_k},\zeta_j)+2D(\zeta_j,\zeta_k)$ combine to $2D(\rho_{\vGamma_j}-\zeta_j,\rho_{\vGamma_k}-\zeta_k)$.

We recall the following upper bound of Lewin, Lieb and Seiringer \cite[Thm.~8, Rem.~9 and Cor.~11]{lewin2020local}.

\begin{theorem}\label{thm:basic_fluid}
Let $\rho\ge0$ be such that $\sqrt\rho\in H^1(\RR^3)$, and let $0<\varepsilon\le1$. There is a number-conserving grand-canonical quasi-free fermionic state $\vGamma$ with $\rho_\vGamma=\rho$ such that
\begin{equation}\label{eq:basic_fluid_bound}
\EC(\vGamma)\le\TC(\vGamma)
\le\Big(1+\frac\varepsilon{15}\Big)c_\TF\int_{\RR^3}\rho^{5/3}
+\frac{19}{\varepsilon}
\int_{\RR^3}|\nabla\sqrt\rho|^2.
\end{equation}
\end{theorem}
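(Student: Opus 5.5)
This result is quoted from \cite[Thm.~8, Rem.~9 and Cor.~11]{lewin2020local}, so in the paper it is an input rather than something to prove. If I had to reconstruct it, I would work at the level of one-particle density matrices.

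\emph{Reduction to a one-body operator.} Every operator $0\le\gamma\le1$ on $L^2(\RR^3;\CC^q)$ with $\Tr\gamma<\infty$ and $\Tr(-\Delta)\gamma<\infty$ is the one-particle density matrix of a unique quasi-free state $\vGamma$. That state commutes with the number operator, so it is number-conserving in the sense of Section~\ref{sec:notation}. Its density is $\rho_\gamma$ and its kinetic energy is $\TC(\vGamma)=\Tr(-\Delta)\gamma$. By Wick's theorem its two-body density is $\rho(x)\rho(y)-\sum_{s,s'}|\gamma(x,s;y,s')|^2$. Hence
\[
\Cc(\vGamma)-D(\rho)=-\frac12\iint\frac{\sum_{s,s'}|\gamma(x,s;y,s')|^2}{|x-y|}\,\mathrm dx\,\mathrm dy\le0 .
\]
This gives the first inequality $\EC(\vGamma)\le\TC(\vGamma)$ at once. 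The theorem therefore reduces to constructing $0\le\gamma\le1$ with $\rho_\gamma=\rho$ exactly and $\Tr(-\Delta)\gamma$ bounded by the right side of~\eqref{eq:basic_fluid_bound}.

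\emph{Construction of $\gamma$.} I would use a superposition of modulated plane waves in the spirit of the Lieb--Thirring/Harriman constructions:
\[
\gamma=\frac{1}{(2\pi)^3}\int_{\RR^3}\big|e^{ip\cdot x}\sqrt{g_p}\big\rangle\big\langle e^{ip\cdot x}\sqrt{g_p}\big|\,\mathrm dp ,
\]
diagonal in spin with weight $1/q$ per spin state. The nonnegative functions $g_p(x)$ are chosen as follows. Take the local Fermi momentum $p_F(x)=(6\pi^2\rho(x)/q)^{1/3}$. Let $g_p(x)$ be a smoothed version of $\iden(|p|\le p_F(x))$, normalized so that $(2\pi)^{-3}q\int g_p(x)\,\mathrm dp=\rho(x)$ exactly. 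The density constraint then holds by construction.

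The kinetic energy splits cleanly because $e^{ip\cdot x}$ is a phase:
\[
\int|\nabla(e^{ip\cdot x}\sqrt{g_p})|^2=\int p^2 g_p+\int|\nabla\sqrt{g_p}|^2 .
\]
The first term integrates to at most $(1+\varepsilon/15)\,c_\TF\int\rho^{5/3}$, the factor $1+\varepsilon/15$ being the cost of the smoothing. The second term is controlled by $\int|\nabla\sqrt\rho|^2$ times an inverse power of the smoothing width, which should produce the $19/\varepsilon$ coefficient.

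\emph{Main obstacle.} The hard step is the operator bound $\gamma\le1$, i.e. the Pauli principle, while keeping the density exact. For a sharp cutoff, $\gamma$ is not obviously bounded by one when $\rho$ varies. I would first try an IMS/partition-of-unity argument in momentum space. Write $g_p(x)=h(|p|/p_F(x))\cdot(\text{normalization})$ with a profile $0\le h\le1$ that is smoothed over relative width $\sim\varepsilon$. Then for any $\phi$, Plancherel gives $\langle\phi,\gamma\phi\rangle\le\sup_{x,p}g_p(x)\,\|\phi\|^2\le\|\phi\|^2$, provided the normalization keeps $\sup g_p\le1$. The widening of the momentum shell needed for this is exactly what trades the factor $1+\varepsilon/15$ in the Thomas--Fermi term against the $\varepsilon^{-1}$ in the gradient term. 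Optimizing the profile $h$ should yield the explicit constants $1/15$ and $19$.
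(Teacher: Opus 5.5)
As you note, the paper takes this result from \cite{lewin2020local} and does not prove it. Your reduction is correct and standard. For the gauge-invariant quasi-free state with one-particle density matrix $0\le\gamma\le1$, Wick's theorem gives $\Cc(\vGamma)-D(\rho_\gamma)=-\frac12\iint|\gamma(x,y)|^2|x-y|^{-1}\le0$. Hence $\EC(\vGamma)\le\TC(\vGamma)=\Tr(-\Delta)\gamma$, and everything reduces to finding $\gamma$ with $0\le\gamma\le1$, $\rho_\gamma=\rho$ exactly, and the kinetic bound. Your density constraint and the splitting $\int p^2g_p+\int|\nabla\sqrt{g_p}|^2$ of the kinetic energy are also correct.

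\textbf{The gap is the Pauli bound $\gamma\le1$, and the argument you give for it is false.} Write $\gamma=A^*A$ with $(A\phi)(p)=(2\pi)^{-3/2}\widehat{\sqrt{g_p}\,\phi}(p)$. Plancherel gives $\langle\phi,\gamma\phi\rangle=\int g|\phi|^2$ only when $g_p(x)$ does not depend on $p$. When the cutoff depends jointly on $(x,p)$, $A$ is a pseudodifferential operator, and its norm is not bounded by the supremum of its symbol.

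Here is a concrete failure, per spin state. Let the local Fermi momentum be $k_1$ on $\{x_1<0\}$ and $k_2>k_1$ on $\{x_1>0\}$, with a sharp cutoff. Then $\gamma(x,y)=(2\pi)^{-3}\int_{|p|\le\min\{p_F(x),p_F(y)\}}e^{ip\cdot(x-y)}\,\mathrm dp$, that is, $\gamma=P_{k_1}+\iden_+(P_{k_2}-P_{k_1})\iden_+$ with $P_k=\iden(|p|\le k)$ and $\iden_+=\iden_{\{x_1>0\}}$. For $0\ne\phi=P_{k_1}\phi$ this gives $\langle\phi,\gamma\phi\rangle=\|\phi\|^2+\|(P_{k_2}-P_{k_1})\iden_+\phi\|^2>\|\phi\|^2$. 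The inequality is strict because the Fourier transform of the half-space truncation of a band-limited function cannot vanish on the shell $k_1<|p|\le k_2$.

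A smooth profile $h$ does not help. If $\hat\phi$ is supported where $h(|p|/k_1)=1$, the excess is $(2\pi)^{-3}\int|\sqrt{h(|p|/k_2)}-\sqrt{h(|p|/k_1)}|^2|\widehat{\iden_+\phi}|^2\,\mathrm dp>0$. By approximation (Fatou) the violation persists for smooth, steep admissible densities. In general such constructions only give $\gamma\le1+{}$(errors from derivatives of $g$). Rescaling or truncating $\gamma$ to restore $\gamma\le1$ destroys the exact constraint $\rho_\gamma=\rho$.

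\textbf{What is missing is the actual content of the quoted theorem.} You need a construction in which $0\le\gamma\le1$ holds for a structural reason, while $\rho_\gamma=\rho$ stays exact and the cost is as stated. Examples of such mechanisms are $\gamma=B^*mB$ with $B$ an isometry and $0\le m\le1$ (coherent states, spatial partitions of unity), or a unitary change of variables as in Harriman's orbitals.

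\textbf{Your account of the constants is also inconsistent with your own construction.} Smooth $h$ over a relative width $a$, with $3\int_0^\infty h(t)t^2\,\mathrm dt=1$. The Thomas--Fermi factor is then $5\int_0^\infty h(t)t^4\,\mathrm dt=1+\mathcal O(a^2)$, because the density constraint kills the first-order term. The gradient term is a fixed multiple of $\int_0^\infty|(\sqrt h)'(t)|^2t^4\,\mathrm dt\int|\nabla\sqrt\rho|^2$, which is of order $a^{-1}$. This would give an $\varepsilon^{-1/2}$ trade-off, better than the quoted $19/\varepsilon$. So the constants $1/15$ and $19$ cannot come from optimizing $h$. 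That is a further sign that the real cost lies in the Pauli step your argument skips.
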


We also recall the localization lower bound of~\cite{lewin2020local}. Choose a nonnegative radial function $\eta\in C_c^\infty(\RR^3)$ with $\int_{\RR^3}\eta=1$, $\supp\eta\subset B_1$, and $\sqrt\eta\in C_c^\infty(\RR^3)$. Such a function is obtained by normalizing the square of a nonnegative radial test function. For a smearing scale $\delta>0$ put $\eta_\delta=\delta^{-3}\eta(\cdot/\delta)$. Then $\eta_\delta$ is radial, $\int_{\RR^3}\eta_\delta=1$, $\supp\eta_\delta\subset B_\delta$.

We now describe the tetrahedral tiling of Graf and Schenker~\cite{graf1995molecular}, in the notation of~\cite[Sec.~4.2]{lewin2020local}. The unit cube $C_1$ is the union, up to null sets, of 24 disjoint congruent tetrahedra $\mu_j\tetra$, where $j=1,\ldots,24$. Here $\tetra$ is a reference tetrahedron centered at the origin and $|\tetra|=1/24$, and $\mu_j$ is an isometry $\mu_jx=R_jx-z_j$ for some $R_j\in\mathrm{SO}(3)$ and $z_j\in C_1$. Fix a scale $h>0$. For a rotation $Q\in\mathrm{SO}(3)$ and a translation $a\in C_h$, the tetrahedra
\[
Q^{-1}(h\mu_j\tetra+h z+a),
\qquad
z\in\ZZ^3,\ 1\le j\le24,
\]
tile $\RR^3$ up to null sets. These tetrahedra are all congruent to $h\tetra$, so they have volume $|h\tetra|=h^3/24$ and diameter $d_0h$, where $d_0:=\operatorname{diam}\tetra$. Convolution of the tiling identity with $\eta_\delta$ gives
\begin{equation}\label{eq:basic_partition}
\sum_{z\in\ZZ^3}\sum_{j=1}^{24}\iden_{Q^{-1}(h\mu_j\tetra+h z+a)}*\eta_\delta=1
\quad\hbox{a.e.\ in }\RR^3.
\end{equation}
We use normalized Haar measure $\mathrm dQ$ on $\mathrm{SO}(3)$. Using the IMS localization formula and the smeared Graf--Schenker inequality~\cite[Lemma~6]{graf1995molecular} one obtains
\begin{theorem}[Localization on a tetrahedral tiling {\cite[Prop.~16]{lewin2020local}}]\label{thm:basic_localization}
There is a constant $C>0$ such that for every $\rho\ge0$ with $\sqrt\rho\in H^1(\RR^3)$, every $\delta>0$ and every $h>C\delta$,
\begin{multline}\label{eq:basic_localization}
E(\rho)\ge
\left(1-\frac{C\delta}h\right)
\int_{\mathrm{SO}(3)}\frac1{h^3}\int_{C_h}
\sum_{z\in\ZZ^3}\sum_{j=1}^{24}
E\big((\iden_{Q^{-1}(h\mu_j\tetra+h z+a)}*\eta_\delta)\rho\big)
\,\mathrm da\,\mathrm dQ
\\
-\frac{C}{h}\int_{\RR^3}(\rho+\delta^3\rho^2)
-\frac{C}{h\delta}\int_{\RR^3}\rho.
\end{multline}
\end{theorem}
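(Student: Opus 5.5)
The plan is to localize an almost-minimizing state for $E(\rho)$ with respect to the smeared tetrahedral partition of unity~\eqref{eq:basic_partition}, averaged over $Q\in\SO(3)$ and $a\in C_h$.
\begin{itemize}
\item The kinetic energy is handled by the IMS formula.
\item The Coulomb part is handled by a smeared Graf--Schenker inequality.
\item Each localized piece is then bounded below by $E$ of its own density.
\end{itemize}
Fix $\vGamma$ with $\rho_{\vGamma}=\rho$ and $\EC(\vGamma)\le E(\rho)+\epsilon$. For fixed $(Q,a)$, label the tetrahedra $T_k=Q^{-1}(h\mu_j\tetra+hz+a)$ and set $\chi_k:=\sqrt{\iden_{T_k}*\eta_\delta}$. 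By Lemma~\ref{lem:basic_elementary}(i), $\chi_k\in W^{1,\infty}$ and $\sum_k\chi_k^2=1$. Let $\vGamma_k$ be the (number-conserving, fermionic) localization of $\vGamma$ to $\chi_k$, i.e.\ the state whose $n$-particle density matrices are the $\chi_k$-localized reduced density matrices. Then $\rho_{\vGamma_k}=\chi_k^2\rho$, so $\EC(\vGamma_k)\ge E(\chi_k^2\rho)$.

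\emph{Kinetic energy.} The IMS formula gives
\[
\TC(\vGamma)=\sum_k\TC(\vGamma_k)-\int_{\RR^3}\Big(\sum_k|\nabla\chi_k|^2\Big)\rho .
\]
Each $|\nabla\chi_k|^2\le C\delta^{-2}$ by~\eqref{eq:basic_sqrt_cutoff}. It vanishes unless $x$ lies within $\delta$ of $\partial T_k$, and only boundedly many $k$ contribute at a given $x$. After averaging over $a\in C_h$ and $Q$, the probability that a fixed $x$ lies within $\delta$ of the tiling skeleton is $\le C\delta/h$. The averaged IMS error is therefore at most $\frac{C}{h\delta}\int\rho$, which is the last term in~\eqref{eq:basic_localization}.

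\emph{Coulomb energy.} I would use the smeared Graf--Schenker inequality~\cite[Lemma~6]{graf1995molecular} in the following form. For any points $x_1,\dots,x_n$ and any density $\rho$, the averaged localized Jellium-type functional satisfies
\[
\sum_{i<j}\frac1{|x_i-x_j|}-\sum_i\int\frac{\rho(y)\,\mathrm dy}{|x_i-y|}+D(\rho)
\ \ge\ \Big(1-\frac{C\delta}h\Big)\!\iint\!\sum_k\Big[\sum_{i<j}\frac{\chi_k^2(x_i)\chi_k^2(x_j)}{|x_i-x_j|}-\sum_i\int\frac{\chi_k^2(x_i)\chi_k^2\rho(y)}{|x_i-y|}\,\mathrm dy+D(\chi_k^2\rho)\Big]-\frac{C}{h}\,n-R(\rho),
\]
where $\iint$ denotes the average over $(Q,a)$. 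The factor $(1-C\delta/h)$ and the error $R$ come from replacing sharp tetrahedra by $\delta$-smeared ones. The diagonal self-interaction of the smeared background over pairs $|x-y|\lesssim\delta$ gives $R(\rho)\le\frac Ch\int\delta^3\rho^2$. Taking expectations in $\vGamma$ (via its two-body density) and noting that $\int\rho=\sum_n n\Tr\Gamma^{(n)}$, the $n$-term becomes $\frac Ch\int\rho$.

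\emph{Assembly.} Combining the two steps gives
\[
\EC(\vGamma)\ge\iint\sum_k\Big[\TC(\vGamma_k)+\Big(1-\tfrac{C\delta}h\Big)\big(\Cc(\vGamma_k)-D(\chi_k^2\rho)\big)\Big]-\text{errors}.
\]
To pull out the common factor $(1-C\delta/h)$, I use $\TC(\vGamma_k)\ge(1-C\delta/h)\TC(\vGamma_k)$, since $\TC\ge0$. This leaves $(1-C\delta/h)\,\EC(\vGamma_k)\ge(1-C\delta/h)E(\chi_k^2\rho)$. Letting $\epsilon\to0$ yields~\eqref{eq:basic_localization}.

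If instead the factor $(1-C\delta/h)$ multiplies only the interaction part, I would need to move the discrepancy $\frac{C\delta}{h}\big(\Cc-D\big)$ back. The Lieb--Oxford inequality (Theorem~\ref{thm:basic_lieb_oxford}) bounds it below by $-\frac{C\delta}h\int\rho^{4/3}$. Young's inequality
\[
\delta\rho^{4/3}=\rho^{2/3}(\delta^3\rho^2)^{1/3}\le C(\rho+\delta^3\rho^2)
\]
then absorbs this into $\frac Ch\int(\rho+\delta^3\rho^2)$.

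\emph{Main obstacle.} The main obstacle is the smeared Graf--Schenker step. One needs the averaged kernel $\iint\sum_k\chi_k^2(x)\chi_k^2(y)|x-y|^{-1}$ to be bounded by $(1-C\delta/h)^{-1}|x-y|^{-1}$, up to a positive-type correction. One also needs a short-range error with a controlled $1/h$ constant. I would follow~\cite[Lemma~6]{graf1995molecular} and~\cite[Prop.~16]{lewin2020local} and verify the required positive definiteness through the Fourier transform of the averaged tetrahedral kernel.
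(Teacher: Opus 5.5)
Your plan follows the same route as the paper, which does not reprove this statement but quotes it from \cite[Prop.~16]{lewin2020local}. The paper names exactly your two ingredients: the IMS formula for the smeared tetrahedral partition, which gives the $C(h\delta)^{-1}\int\rho$ term, and the smeared Graf--Schenker inequality \cite[Lemma~6]{graf1995molecular} for the Coulomb part, which gives the factor $1-C\delta/h$ and the $Ch^{-1}\int(\rho+\delta^3\rho^2)$ errors. So deferring the kernel estimate to those references matches the paper's level of detail.

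One point to tidy up is where the $\delta^3\int\rho^2$ error comes from: it is cleanest after taking the expectation in $\vGamma$, not pointwise in $x_1,\dots,x_n$ as in your displayed inequality. Write $S\ge0$ for the singular short-range part of the kernel difference near the diagonal. The cross term $\sum_i\int S(x_i-y)\rho(y)\,\mathrm dy$ depends on the positions. Only because the one-body density of $\vGamma$ equals the background does it average to $\iint\rho(x)S(x-y)\rho(y)\,\mathrm dx\,\mathrm dy\le\|S\|_1\int\rho^2$, with $\|S\|_1\le C\delta^3/h$.
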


\subsection{The lower bound}\label{sec:lower}

The following lemma allows us to complete a confined state to neutrality by adding particles and background charge in a shell of bounded thickness, with an energy increase bounded by a constant times the surface area.
\begin{lemma}[Neutralization in a slightly larger ball]\label{lem:lower_completion}
Fix $\rho_0>0$ and $C_0>0$. There exist constants $h,C_1>0$, depending only on $\rho_0$ and $C_0$, such that
whenever  $\ell\ge1$ and a normalized $N$-particle state $\Gamma_N$ supported in $B_\ell$ with homogeneous Dirichlet boundary conditions verifies the bound
\[
\mathcal J_{\rho_0\iden_{B_\ell}}(\Gamma_N)\le C_0\ell^3,
\]
there exist a radius $\ell'\in[\ell,\ell+h]$ with $M=\rho_0|B_{\ell'}|\in\NN$ (neutrality) and a normalized $M$-particle state $\Gamma_M$ supported in $B_{\ell'}$ satisfying homogeneous Dirichlet boundary conditions such that
\begin{equation}\label{eq:lower_completion}
\mathcal J_{\rho_0\iden_{B_{\ell'}}}(\Gamma_M)
\le\mathcal J_{\rho_0\iden_{B_\ell}}(\Gamma_N)+C_1\ell^2.
\end{equation}
\end{lemma}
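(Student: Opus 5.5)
The plan has three steps: bound the excess charge, build the completed state as an antisymmetrized product with a shell state, and estimate the energy through the cross term. The construction is done in a shell of bounded thickness $h$ around $B_\ell$.

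\emph{Excess charge.} Write $Q:=N-\rho_0|B_\ell|$ for the excess charge. Since $\Gamma_N$ is canonical with $N$ particles, its Jellium energy is the expectation of the classical energy $U^{\rho_0}_{B_\ell}$ plus the nonnegative kinetic energy. Apply Lemma~\ref{lem:basic_onsager} pointwise with $\rho=\rho_0\iden_{B_\ell}$ and $a=1$, keeping the term $D(\sum_j\mu_j-\rho_0\iden_{B_\ell})$, and take the expectation. This gives
\[
\mathbb E\, D\Big(\sum_j\mu_j-\rho_0\iden_{B_\ell}\Big)\le C_0\ell^3+CN.
\]
The smeared charge $\nu=\sum_j\mu_j-\rho_0\iden_{B_\ell}$ is supported in $B_{\ell+1}$ and has total charge $Q$. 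By Theorem~\ref{thm:basic_newton}(iv) applied to its radial average (averaging over rotations decreases $D$ by convexity), we get $D(\nu)\ge Q^2/(2(\ell+1))$ up to constants. We also have $N\le\rho_0|B_\ell|+|Q|$. Hence $Q^2\lesssim\ell(\ell^3+|Q|)$, so $|Q|\le C\ell^2$, with $C$ depending only on $\rho_0$ and $C_0$.

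\emph{Shell construction.} First consider $Q<0$, i.e.\ too few particles. Choose $\ell'\in[\ell+h/2,\ell+h]$ with $\rho_0|B_{\ell'}|\in\NN$. This is possible once $h$ is fixed large enough, since the volume of a spherical shell of width $h/2$ grows like $\ell^2h\ge1/\rho_0$. The number $K:=M-N=\rho_0|B_{\ell'}\setminus B_\ell|-Q$ is then of order $\ell^2$. Put $K$ fermions in the annulus $A=B_{\ell'}\setminus\overline{B_\ell}$ as a Dirichlet state on $A$, for instance a Slater determinant of $K$ low Dirichlet eigenfunctions of $A$. Its kinetic energy is at most $C K(K/|A|)^{2/3}\le C\ell^2$ by Weyl-type counting, since $K/|A|$ is bounded. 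It should be chosen rotation-averaged, so that its density is radial; averaging the pure state over rotations is affine and costs nothing. Form $\Gamma_M=\Gamma_N\wedge\vGamma_A$ via Lemma~\ref{lem:basic_product} with $\zeta_1=\rho_0\iden_{B_\ell}$ and $\zeta_2=\rho_0\iden_A$. Then \eqref{eq:basic_product_J} gives
\[
\mathcal J_{\rho_0\iden_{B_{\ell'}}}(\Gamma_M)=\mathcal J_{\rho_0\iden_{B_\ell}}(\Gamma_N)+\mathcal J_{\rho_0\iden_A}(\vGamma_A)+2D(\rho_{\Gamma_N}-\rho_0\iden_{B_\ell},\rho_A-\rho_0\iden_A).
\]
Since $\vGamma_A$ is a canonical $K$-particle state, it is indeed a genuine $M$-particle state in $\Gamma_M$. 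The shell self-energy $\mathcal J_{\rho_0\iden_A}(\vGamma_A)$ is kinetic energy plus Coulomb terms. Its charge $\rho_A-\rho_0\iden_A$ is radial with total charge $-Q$, so its direct energy is $O(Q^2/\ell+\ell^2)$, and the indirect part is bounded by Lieb--Oxford-type estimates, $O(\ell^2)$. Everything is therefore $O(\ell^2)$.

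\emph{Cross term (main obstacle).} The shell charge is radial and supported outside $B_\ell$. By Newton's theorem (formula \eqref{eq:basic_newton}), its potential inside $B_\ell$ is the constant $\int(\rho_A-\rho_0\iden_A)(s)s\,\mathrm ds\cdot 4\pi$, which is of order $|Q|/\ell$. The charge $\rho_{\Gamma_N}-\rho_0\iden_{B_\ell}$ is supported in $B_\ell$ with total charge $Q$. So the cross term equals $Q$ times that constant, hence $O(Q^2/\ell)=O(\ell^3)$. This is too large unless the constant is arranged to vanish or have a favorable sign. This is where I expect the real difficulty, and I would first try to shape the radial profile of the shell density. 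For example, take $\rho_A$ proportional to a radial profile supported in two sub-shells, so that the interior potential of $\rho_A-\rho_0\iden_A$ is $\le0$ times the appropriate sign of $Q$, or exactly zero. One constraint fixes the total mass and another kills the interior potential; two sub-shells of bounded width give enough freedom, at kinetic cost $O(\ell^2)$. For $Q>0$ (too many particles) particles cannot be removed, so instead enlarge the background: pick $\ell'$ with the annulus carrying charge $\rho_0|A|\ge Q$. The same two-parameter profile of added electrons, now totaling $\rho_0|A|-Q\ge0$, is tuned so the interior potential of the net shell charge cancels. This makes the cross term vanish and yields \eqref{eq:lower_completion} with $C_1$ depending only on $\rho_0,C_0$.
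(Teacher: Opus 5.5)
There is a genuine gap, in the Coulomb bookkeeping for the shell.

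First, your claim that the shell's own energy is $O(\ell^2)$ is wrong. You correctly find a direct energy $O(Q^2/\ell+\ell^2)$, but Step~1 only gives $|Q|\le C\ell^2$. Any charge supported in $\overline{B_{\ell'}}$ with total charge $-Q$ has $D\ge Q^2/(2\ell')$, so this term can be of order $\ell^3$.

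Second, the cross term is not an obstacle: you did not track its sign. Write $\nu=\rho_{\Gamma_N}-\rho_0\iden_{B_\ell}$ (total charge $Q$ in your convention), $\mu=\rho_A-\rho_0\iden_A$ (radial, total charge $-Q$) and $m(t)=\int_{\ell<|x|<t}\mu$. Newton's formula gives the constant interior potential
\[
4\pi\int_\ell^{\ell'}\mu(t)\,t\,\mathrm dt=-\frac{Q}{\ell'}+\int_\ell^{\ell'}\frac{m(t)}{t^2}\,\mathrm dt .
\]
Hence $2D(\nu,\mu)=-Q^2/\ell'+Q\int_\ell^{\ell'}m(t)t^{-2}\,\mathrm dt$ is \emph{negative} and of order $Q^2/\ell$. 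Together with Theorem~\ref{thm:basic_newton}(iv) this gives
\[
D(\mu)+2D(\nu,\mu)=-\frac{Q^2}{2\ell'}+\frac12\int_\ell^{\ell'}\frac{m(t)^2+2Qm(t)}{t^2}\,\mathrm dt\le C\ell^2,
\]
since $|m|,|Q|=O(\ell^2)$ and $\ell'-\ell=O(1)$. This is the paper's identity \eqref{eq:completion_shell_identity}. Neutralizing the ball lowers the electrostatic energy, and the large attractive cross term is exactly what pays for the large self-energy of the shell. A plain uniform shell density already works.

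Your proposed remedy, shaping the profile so that the interior potential of the shell charge vanishes, fails for two independent reasons.

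\emph{It is impossible.} For any radial $\mu$ supported in $\{\ell\le|x|\le\ell+h\}$,
\[
\Bigl|4\pi\int_\ell^{\ell+h}\mu(t)\,t\,\mathrm dt-\frac1\ell\int_{\RR^3}\mu\Bigr|
=\Bigl|4\pi\int_\ell^{\ell+h}\mu(t)\,t\Bigl(1-\frac t\ell\Bigr)\mathrm dt\Bigr|
\le\frac{h}{\ell^2}\,\|\mu\|_1 ,
\]
and $\|\mu\|_1\le M-N+\rho_0|A|=O(\ell^2)$. So the interior potential equals $-Q/\ell+O(1)$ for every profile with this mass, and it cannot vanish when $|Q|\gg\ell$. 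Step~1 does not exclude $|Q|\sim\ell^2$.

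\emph{It would be counterproductive.} With the cross term removed you would be left with $D(\mu)\ge Q^2/(2\ell')$, of order $\ell^3$, and nothing else in the shell energy offsets it. So \eqref{eq:lower_completion} would fail. The same applies to your case $Q>0$.

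Two smaller points. The Lieb--Oxford inequality only bounds the indirect energy from below, so your upper bound must come from elsewhere, e.g.\ nonpositivity of the exchange term of a Slater determinant. Since rotation averaging leaves $\mathcal J_{\rho_0\iden_A}$ unchanged, that bound applies to the unaveraged state. It therefore costs the angular fluctuation energy $D(\rho_\Psi-\bar\rho_\Psi)$, which you would still need to show is $O(\ell^2)$.

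The paper avoids needing a radial shell density altogether. It samples the centers uniformly in a thinner annulus and places disjoint orbitals around them. Because $v_\nu$ is harmonic on each orbital ball, the interaction with the non-radial $\nu$ is that of point charges. It then picks a configuration beating the average, which is computed through the identity above.

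Your excess-charge step (rotation averaging plus Theorem~\ref{thm:basic_newton}(iv) instead of the surface-measure argument) is fine.
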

The proof can be found in Appendix~\ref{sec:canonical_completion}.

\begin{theorem}\label{thm:lower}
$e_\Jel(\rho_0)\le e_\UEG(\rho_0)$
\end{theorem}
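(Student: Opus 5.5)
We follow the outline of Section~\ref{sec:main_strategy}. Fix $\rho_0>0$ and $\varepsilon>0$. For large $\ell$, put $\delta_\ell=\ell^{-1/2}$ and $\rho_\ell:=\rho_0\iden_{B_{\ell-2\delta_\ell}}*\chi_{\delta_\ell}$, with $\supp\chi\subset B_1$; then $\supp\rho_\ell\subset\overline{B_{\ell-\delta_\ell}}$. The balls $B_{\ell-2\delta_\ell}$ satisfy the hypotheses of Theorem~\ref{thm:UEG}(ii), and $\delta_\ell/\ell\to0$, $\delta_\ell\ell\to\infty$. Hence $E(\rho_\ell)=e_\UEG(\rho_0)|B_\ell|+o(\ell^3)$, using $|B_{\ell-2\delta_\ell}|=|B_\ell|+o(\ell^3)$.

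\emph{Step 1: a Dirichlet state in $B_\ell$.} Choose a grand-canonical state $\vGamma$ with $\rho_{\vGamma}=\rho_\ell$ and $\EC(\vGamma)\le E(\rho_\ell)+1$. Since $\rho_{\vGamma}$ vanishes a.e.\ outside $B_{\ell-\delta_\ell}$, every $\Gamma^{(n)}$ lives on functions supported in $\overline{B_{\ell-\delta_\ell}}^{\,n}$ and has finite kinetic energy. These functions lie in $H^1_0(B_\ell^n)$, so they satisfy Dirichlet conditions on $\partial B_\ell$. By~\eqref{eq:conj_identity} with $\zeta=\rho_0\iden_{B_\ell}$, we get $\mathcal J_\zeta(\vGamma)=\EC(\vGamma)+D(\rho_\ell-\rho_0\iden_{B_\ell})$. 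The difference $\rho_\ell-\rho_0\iden_{B_\ell}$ is bounded by $\rho_0$ and supported in a shell of width $3\delta_\ell$. It is radial, so Theorem~\ref{thm:basic_newton}(iv) applies: the enclosed charge is $O(\ell^2\delta_\ell)$ only for $s\in[\ell-3\delta_\ell,\ell]$, which gives $D=O(\ell^4\delta_\ell^3/\ell^2)=O(\ell^{1/2})=o(\ell^3)$. Thus $\mathcal J_\zeta(\vGamma)\le e_\UEG(\rho_0)|B_\ell|+o(\ell^3)$.

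\emph{Step 2: selecting a sector.} Since $\mathcal J_\zeta$ is affine, $\mathcal J_\zeta(\vGamma)=\sum_n\Tr\Gamma^{(n)}\,\mathcal J_\zeta(\Gamma^{(n)}/\Tr\Gamma^{(n)})$. Hence some $N$ with $\Tr\Gamma^{(N)}>0$ gives a normalized $N$-particle Dirichlet state $\Gamma_N$ in $B_\ell$ with $\mathcal J_\zeta(\Gamma_N)\le e_\UEG(\rho_0)|B_\ell|+o(\ell^3)\le C_0\ell^3$. Here $C_0$ depends only on $\rho_0$, since $e_\UEG(\rho_0)$ is finite. A minor point is $N\ge1$: the vacuum has energy $D(\zeta)\sim\ell^5$, which is too large, so it is excluded.

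\emph{Step 3: neutralization and conclusion.} Lemma~\ref{lem:lower_completion} gives $\ell'\in[\ell,\ell+h]$ and an $M$-particle Dirichlet state with $M=\rho_0|B_{\ell'}|$ and
\[
E_\Jel(\rho_0,B_{\ell'})\le\mathcal J_{\rho_0\iden_{B_{\ell'}}}(\Gamma_M)\le e_\UEG(\rho_0)|B_\ell|+o(\ell^3)+C_1\ell^2.
\]
The first inequality is the variational principle, since $\mathcal J$ of a canonical Dirichlet state is its expectation of $H^{\rho_0}_{B_{\ell'}}$. Dividing by $|B_{\ell'}|=|B_\ell|(1+O(1/\ell))$ gives $e_\Jel(\rho_0,B_{\ell'})\le e_\UEG(\rho_0)+o(1)$. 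Any sequence of neutral balls with radii $\ell'\to\infty$ is LN-admissible (Remark~\ref{rem:LN_balls}). By Theorem~\ref{thm:LN_main}(ii), letting $\ell\to\infty$ along a sequence then yields $e_\Jel(\rho_0)\le e_\UEG(\rho_0)$.

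The main obstacle is Step~3, namely the cost of exact neutralization. The real content is Lemma~\ref{lem:lower_completion}, which is taken as given. We only need its hypothesis $\mathcal J\le C_0\ell^3$ and the existence of a minimizer (or near-minimizer) with the stated support property in Step~1. The latter requires the support argument for the densities of each sector, which is routine.
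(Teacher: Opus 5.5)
Your proposal is correct and follows the paper's proof step by step. Using a near-minimizer with $\EC(\vGamma)\le E(\rho_\ell)+1$, instead of the exact Levy--Lieb minimizer the paper takes from~\cite{lewin2019universal}, is a harmless simplification.

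There is one slip in Step~1, and it does not affect the result. The difference $\rho_\ell-\rho_0\iden_{B_\ell}$ is not neutral: its total charge is $-\rho_0(|B_\ell|-|B_{\ell-2\delta_\ell}|)=\mathcal O(\ell^{3/2})$. So the enclosed charge does not vanish for $s\ge\ell$, and Theorem~\ref{thm:basic_newton}(iv) picks up an extra tail term $\mathcal O(\ell^3\delta_\ell^2)=\mathcal O(\ell^2)$. Hence $D=\mathcal O(\ell^2)$ rather than $\mathcal O(\ell^{1/2})$, which is still $o(\ell^3)$.
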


\begin{proof}
Choose a radial smoothing function $\chi\ge0$ with $\int_{\RR^3}\chi=1$, $\supp\chi\subset B_1$ and $\sqrt\chi\in H^1(\RR^3)$. For $\ell\to\infty$, put
\[
\delta_\ell=\ell^{-1/2},\qquad r_\ell=\ell-2\delta_\ell.
\]
The conditions $\delta_\ell/r_\ell\to0$ and $\delta_\ell r_\ell\to\infty$ allow us to apply Theorem~\ref{thm:UEG}(ii). Since $|B_{r_\ell}|/|B_\ell|\to1$, we obtain
\[
E\big(\rho_0\iden_{B_{r_\ell}}*\chi_{\delta_\ell}\big)=e_\UEG(\rho_0)|B_\ell|+o(\ell^3).
\]
By~\cite[Thm.~4]{lewin2019universal}, the infimum defining $F_{\rm LL}(\rho_0\iden_{B_{r_\ell}}*\chi_{\delta_\ell})$ is attained. Let $\vGamma_\ell$ be a minimizer, so that $\rho_{\vGamma_\ell}=\rho_0\iden_{B_{r_\ell}}*\chi_{\delta_\ell}$ and $\EC(\vGamma_\ell)=E(\rho_{\vGamma_\ell})$ by~\eqref{eq:UEG_E}. Clearly, $\supp\rho_{\vGamma_\ell}\subset B_{\ell-\delta_\ell}$.

The density satisfies $0\le\rho_{\vGamma_\ell}\le\rho_0\iden_{B_\ell}$ and $\int_{\RR^3}\rho_{\vGamma_\ell}=\rho_0|B_{r_\ell}|$. Consequently, for $f_\ell=\rho_{\vGamma_\ell}-\rho_0\iden_{B_\ell}$,
\begin{align*}
\|f_\ell\|_1&=\rho_0(|B_\ell|-|B_{r_\ell}|)\le C\ell^2\delta_\ell,
\qquad \|f_\ell\|_\infty\le\rho_0,\\
D(f_\ell)&\le C(\ell^2\delta_\ell)^{5/3}=C\ell^{5/2}=o(\ell^3),
\end{align*}
where we used $D(f)\le D(|f|)\le C\|f\|_1^{5/3}\|f\|_\infty^{1/3}$ for real $f\in L^1\cap L^\infty(\RR^3)$. Combining this with the identity~\eqref{eq:conj_identity} gives
\begin{equation}\label{eq:lower_trial}
\mathcal J_{\rho_0\iden_{B_\ell}}(\vGamma_\ell)
=e_\UEG(\rho_0)|B_\ell|+o(\ell^3).
\end{equation}
Set $p_n=\Tr\Gamma_\ell^{(n)}$. Since $\vGamma\mapsto \mathcal J_\zeta(\vGamma)$ is affine for fixed $\zeta$ we obtain
\[
\mathcal J_{\rho_0\iden_{B_\ell}}(\vGamma_\ell)=\sum_{n:\,p_n>0}p_n\,\mathcal J_{\rho_0\iden_{B_\ell}}\big(\Gamma_\ell^{(n)}/p_n\big),\qquad\sum_n p_n=1,
\]
so the left side of~\eqref{eq:lower_trial} is an average of the Jellium energies of the normalized sectors.
Hence there is $N_\ell$ with $p_{N_\ell}>0$ such that the $N_\ell$-particle state $\Gamma_{N_\ell}=\Gamma_\ell^{(N_\ell)}/p_{N_\ell}$ satisfies $\mathcal J_{\rho_0\iden_{B_\ell}}(\Gamma_{N_\ell})\le\mathcal J_{\rho_0\iden_{B_\ell}}(\vGamma_\ell)$.

Next, we check the hypotheses of Lemma~\ref{lem:lower_completion}. Since $\vGamma_\ell$ has finite kinetic energy and its density vanishes outside $B_{\ell-\delta_\ell}$, every occupied sector has finite kinetic energy and is supported in this smaller ball. In particular, $\Gamma_{N_\ell}$ satisfies homogeneous Dirichlet boundary conditions in $B_\ell$. Fix $C_0=1+|B_1||e_\UEG(\rho_0)|$.
By~\eqref{eq:lower_trial} and the choice of $N_\ell$, for all sufficiently large $\ell$ we have
\[
\mathcal J_{\rho_0\iden_{B_\ell}}(\Gamma_{N_\ell})
\le\mathcal J_{\rho_0\iden_{B_\ell}}(\vGamma_\ell)
\le C_0\ell^3.
\]
Thus Lemma~\ref{lem:lower_completion} applies with the same $\rho_0$ and $C_0$ for all these $\ell$. It provides constants $h,C_1>0$ independent of $\ell$ and, for each $\ell$, a radius $\ell'\in[\ell,\ell+h]$ with $M_\ell=\rho_0|B_{\ell'}|\in\NN$ and a normalized $M_\ell$-particle state $\Gamma_{M_\ell}$ supported in $B_{\ell'}$ satisfying homogeneous Dirichlet boundary conditions. This is a trial state for the canonical neutral Jellium problem in $B_{\ell'}$, so the variational principle and~\eqref{eq:lower_completion} give
\begin{align*}
E_\Jel(\rho_0,B_{\ell'})
&\le\mathcal J_{\rho_0\iden_{B_{\ell'}}}(\Gamma_{M_\ell})\\
&\le\mathcal J_{\rho_0\iden_{B_\ell}}(\Gamma_{N_\ell})+C_1\ell^2\\
&\le\mathcal J_{\rho_0\iden_{B_\ell}}(\vGamma_\ell)+C_1\ell^2\\
&=e_\UEG(\rho_0)|B_\ell|+o(\ell^3).
\end{align*}
Here the last line uses~\eqref{eq:lower_trial} and $C_1\ell^2=o(\ell^3)$, since $C_1$ is independent of $\ell$. Moreover, $0\le\ell'-\ell\le h$ implies $\ell'/\ell\to1$. Dividing by $|B_{\ell'}|$ therefore gives
\[
\frac{E_\Jel(\rho_0,B_{\ell'})}{|B_{\ell'}|}
\le e_\UEG(\rho_0)\left(\frac\ell{\ell'}\right)^3+o(1).
\]
The left side tends to $e_\Jel(\rho_0)$ by Theorem~\ref{thm:LN_main}(ii) applied to the neutral balls $B_{\ell'}$ (Remark~\ref{rem:LN_balls}), while the right side tends to $e_\UEG(\rho_0)$. This proves the claim.
\end{proof}

\subsection{The upper bound}\label{sec:upper}

The purpose of this section is to prove

\begin{theorem}\label{thm:upper}
$e_\UEG(\rho_0)\le e_\Jel(\rho_0)$
\end{theorem}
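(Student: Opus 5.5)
We follow the strategy outlined in Section~\ref{sec:main_strategy}, working at $\hbar=1$. Fix $\epsilon>0$ and a minimal radius $R_0$, and choose $R_0$ so large that for every neutral ball $B_R$ with $R\ge R_0$ and $\rho_0|B_R|\in\NN$ we have $E_\Jel(\rho_0,B_R)\le(e_\Jel(\rho_0)+\epsilon)|B_R|$. This is possible by Theorem~\ref{thm:LN_main}(ii) and Remark~\ref{rem:LN_balls}. Next fix a periodic packing of a cell $C_L$ by finitely many disjoint neutral balls $B_{R_i}(a_i)$, all with $R_i\ge R_0$, pairwise separated by a fixed collar width $w$ and covering $C_L$ up to volume fraction $\epsilon$. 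We obtain it by dilating a packing of the unit cube until its smallest ball has radius $R_0$; this is Lemma~\ref{lem:upper_packing}. By construction the total surface area per unit volume is $\le C/R_0$, uniformly in the packing.

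In each ball we place a canonical Dirichlet Jellium ground state $\Gamma_i$ and average it over $\SO(3)$ about $a_i$. Averaging leaves the energy unchanged and makes $\rho_{\Gamma_i}-\rho_0\iden_{B_{R_i}}$ radial and neutral. The complement of the balls is filled using Theorem~\ref{thm:basic_fluid} at a density $\rho_f$. This density equals $\rho_0$ away from the balls and vanishes smoothly on each collar $B_{R_i+w}\setminus B_{R_i}$, with the missing mass restored inside the collar. Concretely, the collar profile is a radial function such that $\rho_f-\rho_0\iden_{\text{collar}}$ is radial, neutral, and has kinetic and Coulomb energy $O(R_i^2)$ (Lemma~\ref{lem:upper_collar}). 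The fluid pieces in the collars and in the gaps are supported disjointly from the balls. We form the antisymmetrized product by Lemma~\ref{lem:basic_product} inside a big cube $C_{KL}$ made of $K^3$ periods. We split the background $\rho_0\iden_{C_{KL}}$ into ball pieces, collar pieces and the remainder. Formula~\eqref{eq:basic_product_J} and Newton's theorem (Theorem~\ref{thm:basic_newton}(i),(iii)) kill every cross term between neutral radial charges. The remaining cross terms involve only the gap region, where the fluid density exactly equals the background, so they also vanish. Gradient terms of the fluid appear only in the collars, and the collar Coulomb terms are $O(R_i^2)$. The Jellium energy is therefore at most $(e_\Jel(\rho_0)+\epsilon+C/R_0+C\epsilon)|C_{KL}|$ plus boundary contributions of the big cube. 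The Thomas--Fermi term of the fluid contributes $O(\epsilon)$, since the uncovered volume is small.

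Next we average this state over translations $a\in C_L$ of the whole packing, keeping the outer background $\rho_0\iden_{C_{KL}}$ fixed. The Jellium energy is affine, so the average has the same bound. Its density equals $\rho_0$ except in a layer of width $O(L)$ near $\partial C_{KL}$, where it is bounded by $C\rho_0$. By~\eqref{eq:conj_identity}, $E$ of the averaged density is at most $\mathcal J$, since $D\ge0$. To handle the big cube boundary, it is simplest to fill the periods meeting $\partial C_{KL}$ by fluid only, so that the boundary costs are $O(K^2L^2)$. The truncation of the fluid at $\partial C_{KL}$ is smoothed with $\chi_\delta$.

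The main obstacle is the final comparison with $e_\UEG$. The averaged density is not exactly a smeared indicator, because it has a bounded but nonconstant layer of fixed width. To show that such a layer does not lower the energy per volume (Lemma~\ref{lem:upper_bounded_layer}), we would apply Theorem~\ref{thm:basic_localization} to the averaged density $\rho$ with $\delta$ fixed and $1\ll h\ll KL$. Tetrahedra lying inside the interior see the density $\rho_0\eta_\delta$-smeared, and their energies approximate $e_\UEG(\rho_0)|h\tetra|$ by Theorem~\ref{thm:UEG} (tetrahedra with fixed smearing). Tetrahedra touching the layer are bounded below by Lieb--Oxford (Theorem~\ref{thm:basic_lieb_oxford}) by $-C\rho_0^{4/3}$ times their volume, which is $o((KL)^3)$ in total. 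This gives $E(\rho)\ge e_\UEG(\rho_0)|C_{KL}|-o((KL)^3)-C(\delta/h+1/h)|C_{KL}|$. Combining this with the upper bound, letting $K\to\infty$, then $h\to\infty$, then $\epsilon\to0$ and $R_0\to\infty$, yields $e_\UEG(\rho_0)\le e_\Jel(\rho_0)$.
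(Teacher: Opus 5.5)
Your proposal is correct and is essentially the paper's proof: rotation-averaged Dirichlet ball ground states in a dilated finite packing with the minimum radius fixed first, neutral radial collars of fixed width, a single quasi-free fluid from Theorem~\ref{thm:basic_fluid}, Newton's theorem to kill the cross terms, averaging over translations at a fixed background, and Lemma~\ref{lem:upper_bounded_layer} via tetrahedral localization and Lieb--Oxford. The one place to tighten is the background: the paper takes the smoothed $\rho_0\iden_{C_\ell}*\chi_\delta$, so the fluid's net charge is exactly $G_{\ell,a}$, whereas your sharp $\rho_0\iden_{C_{KL}}$ leaves the extra term $D\big(\rho_0(\iden_{C_{KL}}*\chi_\delta-\iden_{C_{KL}})\big)$; this is $O((KL)^2)$ only by a dipole-layer or Fourier estimate (the crude bound $C\|f\|_1^{5/3}\|f\|_\infty^{1/3}$ gives order $(KL)^{10/3}$), and smoothing the background removes it entirely, which is allowed since $E(\rho)\le\mathcal J_\zeta$ holds for any fixed $\zeta$.
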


Its proof is given after Lemma~\ref{lem:upper_packing} below. We first control densities with a bounded transition layer.

\begin{lemma}[A bounded transition layer]\label{lem:upper_bounded_layer}
Fix $\rho_0>0$ and $\Lambda\ge1$. For all sufficiently large $\ell>0$, let $\delta'=\delta'(\ell)\ge0$ be the width of the transition layer, and let $\rho_\ell\ge 0$ satisfy $\sqrt{\rho_\ell}\in H^1(\RR^3)$ and
\[
0\le\rho_\ell\le \Lambda\rho_0\iden_{C_\ell+B_{\delta'}},
\qquad
\rho_\ell=\rho_0\quad\hbox{a.e.\ on }C_\ell\setminus(\partial C_\ell+B_{\delta'}).
\]
If $\delta'/\ell\to0$ as $\ell\to\infty$, then
\begin{equation}\label{eq:upper_bounded_layer}
e_\UEG(\rho_0)\le \liminf_{\ell\to\infty}\frac{E(\rho_\ell)}{\ell^3}.
\end{equation}
\end{lemma}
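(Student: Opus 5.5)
We apply the localization bound, Theorem~\ref{thm:basic_localization}, to $\rho=\rho_\ell$. Fix a tetrahedron scale $h$ and a smearing scale $\delta$ with $h>C\delta$; both are independent of $\ell$ and will be sent to infinity only at the end. Each localized piece $(\iden_{T}*\eta_\delta)\rho_\ell$, with $T=Q^{-1}(h\mu_j\tetra+hz+a)$, falls into one of three classes:
\begin{itemize}
\item[(a)] \emph{interior}: $T+B_\delta\subset C_\ell\setminus(\partial C_\ell+B_{\delta'})$;
\item[(b)] \emph{boundary}: $T+B_\delta$ meets $\partial C_\ell+B_{\delta'}$;
\item[(c)] \emph{exterior}: $T+B_\delta$ does not meet $\supp\rho_\ell$.
\end{itemize}

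\emph{Class (a).} Here $\rho_\ell=\rho_0$ on the support of the piece, so the piece equals $\rho_0\iden_{T}*\eta_\delta$. By rotation and translation invariance of $E$ (an isometry maps $T$ to $h\tetra$), its energy is $E(\rho_0\iden_{h\tetra}*\eta_\delta)$. By the fixed-$\delta$ tetrahedron limit underlying Theorem~\ref{thm:UEG}, namely \cite[Prop.~17]{lewin2020local} and \cite[Lemma~5.6]{lewin2018statistical}, we have
\[
E(\rho_0\iden_{h\tetra}*\eta_\delta)\ge |h\tetra|\big(e_\UEG(\rho_0)-\epsilon(h,\delta)\big),
\]
with $\epsilon(h,\delta)\to0$ as $h\to\infty$ at fixed $\delta$. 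The pieces in class (a) cover, after averaging over $a$ and $Q$, a volume at least $\ell^3-C(h+\delta+\delta')\ell^2$.

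\emph{Class (c).} These pieces vanish identically and contribute nothing.

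\emph{Class (b).} For these we only need a crude lower bound. Since $F_{\rm LL}\ge0$ and the kinetic energy is nonnegative, Theorem~\ref{thm:basic_lieb_oxford} gives
\[
E(\rho)\ge-c_{\rm LO}\int\rho^{4/3}.
\]
Because $0\le\rho_\ell\le\Lambda\rho_0$, each boundary piece therefore has energy at least $-c_{\rm LO}(\Lambda\rho_0)^{4/3}|T+B_\delta|$. The boundary pieces lie inside the region $\partial C_\ell+B_{\delta'+d_0h+\delta}$, whose volume is at most $C\ell^2(\delta'+h+\delta)=o(\ell^3)$ by the hypothesis $\delta'/\ell\to0$. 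Their total contribution, averaged over $a$ and $Q$, is thus $o(\ell^3)$.

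\emph{Error terms.} The error terms in \eqref{eq:basic_localization} are
\[
\frac Ch\int(\rho_\ell+\delta^3\rho_\ell^2)+\frac C{h\delta}\int\rho_\ell\le C\Big(\frac1h+\frac{\delta^3}h+\frac1{h\delta}\Big)\Lambda^2(\rho_0+\rho_0^2)\,(\ell+2\delta')^3 .
\]
This is $\ell^3$ times a quantity that tends to $0$ as $h\to\infty$ with $\delta$ fixed.

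\emph{Conclusion.} Summing the three classes and dividing by $\ell^3$ gives
\[
\liminf_{\ell\to\infty}\ell^{-3}E(\rho_\ell)\ge\Big(1-\frac{C\delta}h\Big)\big(e_\UEG(\rho_0)-\epsilon(h,\delta)\big)-C_\delta\Big(\frac1h+\frac{\delta^3}h\Big).
\]
One must check the sign when $e_\UEG(\rho_0)<0$. In that case the factor $(1-C\delta/h)$ multiplying the class-(a) contribution moves the bound only by $C\delta/h\cdot|e_\UEG(\rho_0)|$, which is still small. Letting first $h\to\infty$ at fixed $\delta$, and then taking any $\delta$, yields \eqref{eq:upper_bounded_layer}.

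The main obstacle is the class-(a) step. We need a fixed-$\delta$ tetrahedron lower bound that holds for every large $h$, not only along a sequence, and whose limit is identified with $e_\UEG(\rho_0)$. This identification is exactly the statement "both limits reduce to the limit for tetrahedra with a fixed smearing" quoted after Theorem~\ref{thm:UEG}, so we invoke \cite[Prop.~17]{lewin2020local} for it. If that result gives convergence only along dilations, we will use superadditivity of the tetrahedral energies, obtained from Theorem~\ref{thm:basic_localization} itself, to upgrade it to a $\liminf$ over all $h$.
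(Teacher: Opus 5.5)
Your proposal is correct and follows essentially the paper's proof. Both localize with Theorem~\ref{thm:basic_localization} at fixed $h,\delta$, identify the interior pieces with $E(\rho_0\iden_{h\tetra}*\eta_\delta)$, bound the boundary pieces by Lieb--Oxford on a region of volume $\mathcal O(\ell^2(\delta'+h+\delta))$, and let $\ell\to\infty$ before $h\to\infty$.

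The differences are minor:
\begin{itemize}
\item The tetrahedral limit is directly Theorem~\ref{thm:UEG}(ii) applied to $h\tetra$ at fixed $\delta$, so no superadditivity upgrade is needed.
\item When that energy is negative you also need the trivial upper count (interior tetrahedra are disjoint subsets of $C_\ell$, so their total volume is at most $\ell^3$); you did not state it.
\item Your boundary sum implicitly uses bounded overlap of the sets $T+B_\delta$, which is harmless at fixed $h,\delta$. The paper avoids it with $(\iden_T*\eta_\delta)^{4/3}\le\iden_T*\eta_\delta$ and the partition identity~\eqref{eq:basic_partition}.
\end{itemize}
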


\begin{proof}
All constants in this proof are independent of $\ell$, of $\delta\in(0,1]$, and of the translation and rotation of the tiling. Dependence on $\Lambda$ and $\rho_0$ is indicated where it enters.

We use the tiling and the partition~\eqref{eq:basic_partition} introduced before Theorem~\ref{thm:basic_localization}. By Lemma~\ref{lem:basic_elementary}(i), every localized density $(\iden_{Q^{-1}(h\mu_j\tetra+h z+a)}*\eta_\delta)\rho_\ell$ is admissible for $E$. Fix $\delta\in(0,1]$ and $h>C\delta$, where $C$ is the constant of Theorem~\ref{thm:basic_localization}, so that $1-C\delta/h>0$. Since $\sqrt{\rho_\ell}\in H^1(\RR^3)$, Theorem~\ref{thm:basic_localization} gives
\begin{multline}\label{eq:upper_layer_localization}
E(\rho_\ell)\ge
\left(1-\frac{C\delta}h\right)
\int_{\mathrm{SO}(3)}\frac1{h^3}\int_{C_h}
\sum_{z\in\ZZ^3}\sum_{j=1}^{24}
E\big((\iden_{Q^{-1}(h\mu_j\tetra+h z+a)}*\eta_\delta)\rho_\ell\big)
\,\mathrm da\,\mathrm dQ
\\
-\frac{C}{h}\int_{\RR^3}(\rho_\ell+\delta^3\rho_\ell^2)
-\frac{C}{h\delta}\int_{\RR^3}\rho_\ell.
\end{multline}

Put
\[
h'=\delta'+d_0h+2\delta+1.
\]
Since $\delta'=o(\ell)$, for all sufficiently large $\ell$ we have $h'<\ell/4$. Comparing the cubes with side lengths $\ell+2h'$ and $\ell-2h'$ yields
\begin{equation}\label{eq:upper_boundary_volume}
|\partial C_\ell+B_{h'}|\le C \ell^2h'.
\end{equation}
For the same $\ell$, the set $C_\ell+B_{\delta'}$ lies in the cube $C_{\ell+2h'}$, whose volume is at most $C\ell^3$. The support and pointwise bounds on $\rho_\ell$ therefore control the integral terms in~\eqref{eq:upper_layer_localization}:
\begin{equation}\label{eq:upper_layer_integrals}
\int_{\RR^3}\rho_\ell\le C\Lambda\rho_0\ell^3,
\qquad
\int_{\RR^3}(\rho_\ell+\delta^3\rho_\ell^2)
\le C(\Lambda\rho_0+\delta^3\Lambda^2\rho_0^2)\ell^3.
\end{equation}

To treat the first term of \eqref{eq:upper_layer_localization}, we introduce the index set of tetrahedra that meet the enlarged cube $C_\ell+B_{\delta'+\delta}$,
\[
J=\big\{(z,j)\in\ZZ^3\times\{1,\dots,24\}:Q^{-1}(h\mu_j\tetra+h z+a)\cap(C_\ell+B_{\delta'+\delta})\ne\emptyset\big\}
\]
 If $(z,j)\notin J$, the localized density $(\iden_{Q^{-1}(h\mu_j\tetra+h z+a)}*\eta_\delta)\rho_\ell$ vanishes,
 because the smeared indicator of the tetrahedron is supported in its $\delta$-enlargement and $\rho_\ell$ vanishes outside
 $C_\ell+B_{\delta'}$. Let further
\[
J_0=\big\{(z,j)\in\ZZ^3\times\{1,\dots,24\}: Q^{-1}(h\mu_j\tetra+h z+a)\subset C_\ell\setminus(\partial C_\ell+B_{\delta'+\delta})\big\}
\]
be the set of interior pairs. Clearly $J_0\subset J$. Both sets depend on $\ell$, $Q$ and $a$; see Figure~\ref{fig:bounded_layer}.

\begin{figure}[ht]
\centering
\begin{tikzpicture}[>=stealth]
  \def\l{0.6}
  \begin{scope}
    \clip (-3.7,-3.7) rectangle (3.7,3.7);
    \begin{scope}[rotate=18.0,shift={(0.31,0.17)}]
      \foreach \x/\y/\j in {-4/-2/0,-4/-2/1,-4/-2/2,-4/-2/3,-4/-1/0,-4/-1/3,-4/0/0,-3/-3/1,-3/-2/0,-3/-2/1,-3/-2/2,-3/-2/3,-3/-1/0,-3/-1/1,-3/-1/2,-3/-1/3,-3/0/0,-3/0/1,-3/0/2,-3/0/3,-3/1/0,-3/1/1,-3/1/2,-3/1/3,-3/2/0,-3/2/3,-3/3/0,-2/-3/0,-2/-3/1,-2/-2/0,-2/-2/1,-2/-2/2,-2/-2/3,-2/-1/0,-2/-1/1,-2/-1/2,-2/-1/3,-2/0/0,-2/0/1,-2/0/2,-2/0/3,-2/1/0,-2/1/1,-2/1/2,-2/1/3,-2/2/0,-2/2/1,-2/2/2,-2/2/3,-2/3/0,-2/3/1,-2/3/2,-2/3/3,-1/-3/0,-1/-3/1,-1/-3/2,-1/-3/3,-1/-2/0,-1/-2/1,-1/-2/2,-1/-2/3,-1/-1/0,-1/-1/1,-1/-1/2,-1/-1/3,-1/0/0,-1/0/1,-1/0/2,-1/0/3,-1/1/0,-1/1/1,-1/1/2,-1/1/3,-1/2/0,-1/2/1,-1/2/2,-1/2/3,-1/3/2,-1/3/3,0/-4/1,0/-3/0,0/-3/1,0/-3/2,0/-3/3,0/-2/0,0/-2/1,0/-2/2,0/-2/3,0/-1/0,0/-1/1,0/-1/2,0/-1/3,0/0/0,0/0/1,0/0/2,0/0/3,0/1/0,0/1/1,0/1/2,0/1/3,0/2/0,0/2/1,0/2/2,0/2/3,0/3/3,1/-4/0,1/-4/1,1/-3/0,1/-3/1,1/-3/2,1/-3/3,1/-2/0,1/-2/1,1/-2/2,1/-2/3,1/-1/0,1/-1/1,1/-1/2,1/-1/3,1/0/0,1/0/1,1/0/2,1/0/3,1/1/0,1/1/1,1/1/2,1/1/3,1/2/0,1/2/1,1/2/2,1/2/3,2/-3/2,2/-2/1,2/-2/2,2/-1/0,2/-1/1,2/-1/2,2/-1/3,2/0/0,2/0/1,2/0/2,2/0/3,2/1/0,2/1/1,2/1/2,2/1/3,2/2/2,2/2/3,3/0/2,3/1/1,3/1/2,3/2/3}
        \fill[blue!30] (\x*\l,\y*\l) -- ++({90*\j-45}:{\l*0.70711}) -- ++({90*\j+90}:\l) -- cycle;
      \foreach \x/\y/\j in {-7/-4/0,-7/-4/1,-7/-3/0,-7/-3/1,-7/-3/3,-7/-2/0,-7/-2/3,-6/-4/0,-6/-4/1,-6/-4/2,-6/-3/0,-6/-3/1,-6/-3/2,-6/-3/3,-6/-2/0,-6/-2/1,-6/-2/2,-6/-2/3,-6/-1/0,-6/-1/1,-6/-1/2,-6/-1/3,-6/0/0,-6/0/1,-6/0/3,-6/1/0,-6/1/3,-5/-4/0,-5/-4/1,-5/-4/2,-5/-4/3,-5/-3/0,-5/-3/1,-5/-3/2,-5/-3/3,-5/-2/0,-5/-2/1,-5/-2/2,-5/-2/3,-5/-1/0,-5/-1/1,-5/-1/2,-5/-1/3,-5/0/0,-5/0/1,-5/0/2,-5/0/3,-5/1/0,-5/1/1,-5/1/2,-5/1/3,-5/2/0,-5/2/1,-5/2/2,-5/2/3,-5/3/0,-5/3/1,-5/3/3,-5/4/0,-5/4/3,-4/-5/0,-4/-5/1,-4/-4/0,-4/-4/1,-4/-4/2,-4/-4/3,-4/-3/0,-4/-3/1,-4/-3/2,-4/-3/3,-4/-1/1,-4/-1/2,-4/0/1,-4/0/2,-4/0/3,-4/1/0,-4/1/1,-4/1/2,-4/1/3,-4/2/0,-4/2/1,-4/2/2,-4/2/3,-4/3/0,-4/3/1,-4/3/2,-4/3/3,-4/4/0,-4/4/1,-4/4/2,-4/4/3,-4/5/0,-4/5/1,-4/5/2,-4/5/3,-4/6/0,-4/6/3,-3/-5/0,-3/-5/1,-3/-5/2,-3/-4/0,-3/-4/1,-3/-4/2,-3/-4/3,-3/-3/0,-3/-3/2,-3/-3/3,-3/2/1,-3/2/2,-3/3/1,-3/3/2,-3/3/3,-3/4/0,-3/4/1,-3/4/2,-3/4/3,-3/5/0,-3/5/1,-3/5/2,-3/5/3,-3/6/0,-3/6/2,-3/6/3,-2/-5/0,-2/-5/1,-2/-5/2,-2/-5/3,-2/-4/0,-2/-4/1,-2/-4/2,-2/-4/3,-2/-3/2,-2/-3/3,-2/4/0,-2/4/1,-2/4/2,-2/4/3,-2/5/0,-2/5/1,-2/5/2,-2/5/3,-2/6/2,-2/6/3,-1/-6/0,-1/-6/1,-1/-5/0,-1/-5/1,-1/-5/2,-1/-5/3,-1/-4/0,-1/-4/1,-1/-4/2,-1/-4/3,-1/3/0,-1/3/1,-1/4/0,-1/4/1,-1/4/2,-1/4/3,-1/5/0,-1/5/1,-1/5/2,-1/5/3,0/-6/0,0/-6/1,0/-6/2,0/-5/0,0/-5/1,0/-5/2,0/-5/3,0/-4/0,0/-4/2,0/-4/3,0/3/0,0/3/1,0/3/2,0/4/0,0/4/1,0/4/2,0/4/3,0/5/0,0/5/2,0/5/3,1/-6/0,1/-6/1,1/-6/2,1/-6/3,1/-5/0,1/-5/1,1/-5/2,1/-5/3,1/-4/2,1/-4/3,1/3/0,1/3/1,1/3/2,1/3/3,1/4/0,1/4/1,1/4/2,1/4/3,1/5/2,1/5/3,2/-7/0,2/-7/1,2/-6/0,2/-6/1,2/-6/2,2/-6/3,2/-5/0,2/-5/1,2/-5/2,2/-5/3,2/-4/0,2/-4/1,2/-4/2,2/-4/3,2/-3/0,2/-3/1,2/-3/3,2/-2/0,2/-2/3,2/2/0,2/2/1,2/3/0,2/3/1,2/3/2,2/3/3,2/4/0,2/4/1,2/4/2,2/4/3,3/-7/1,3/-7/2,3/-6/0,3/-6/1,3/-6/2,3/-6/3,3/-5/0,3/-5/1,3/-5/2,3/-5/3,3/-4/0,3/-4/1,3/-4/2,3/-4/3,3/-3/0,3/-3/1,3/-3/2,3/-3/3,3/-2/0,3/-2/1,3/-2/2,3/-2/3,3/-1/0,3/-1/1,3/-1/2,3/-1/3,3/0/0,3/0/1,3/0/3,3/1/0,3/1/3,3/2/0,3/2/1,3/2/2,3/3/0,3/3/1,3/3/2,3/3/3,3/4/0,3/4/1,3/4/2,3/4/3,4/-5/1,4/-5/2,4/-4/1,4/-4/2,4/-4/3,4/-3/0,4/-3/1,4/-3/2,4/-3/3,4/-2/0,4/-2/1,4/-2/2,4/-2/3,4/-1/0,4/-1/1,4/-1/2,4/-1/3,4/0/0,4/0/1,4/0/2,4/0/3,4/1/0,4/1/1,4/1/2,4/1/3,4/2/0,4/2/1,4/2/2,4/2/3,4/3/0,4/3/1,4/3/2,4/3/3,4/4/0,4/4/2,4/4/3,5/-2/1,5/-2/2,5/-1/1,5/-1/2,5/-1/3,5/0/0,5/0/1,5/0/2,5/0/3,5/1/0,5/1/1,5/1/2,5/1/3,5/2/0,5/2/1,5/2/2,5/2/3,5/3/0,5/3/1,5/3/2,5/3/3,5/4/2,5/4/3,6/2/1,6/2/2,6/3/2,6/3/3}
        \fill[red!25] (\x*\l,\y*\l) -- ++({90*\j-45}:{\l*0.70711}) -- ++({90*\j+90}:\l) -- cycle;
      \foreach \k in {-18,...,18} {
        \draw[gray!70,very thin] ({(\k+0.5)*\l},-12) -- ({(\k+0.5)*\l},12);
        \draw[gray!70,very thin] (-12,{(\k+0.5)*\l}) -- (12,{(\k+0.5)*\l});
        \draw[gray!70,very thin] ({\k*\l-12},-12) -- ({\k*\l+12},12);
        \draw[gray!70,very thin] ({\k*\l-12},12) -- ({\k*\l+12},-12);
      }
    \end{scope}
  \end{scope}
  \fill[pattern=north east lines,pattern color=black!45,even odd rule]
    [rounded corners=0.3cm] (-2.90,-2.90) rectangle (2.90,2.90)
    [sharp corners] (-2.30,-2.30) rectangle (2.30,2.30);
  \draw[densely dashed,rounded corners=0.3cm] (-2.90,-2.90) rectangle (2.90,2.90);
  \draw[densely dashed] (-2.30,-2.30) rectangle (2.30,2.30);
  \draw[densely dotted,semithick,rounded corners=0.50cm] (-3.10,-3.10) rectangle (3.10,3.10);
  \draw[densely dotted,semithick] (-2.10,-2.10) rectangle (2.10,2.10);
  \draw[thick] (-2.6,-2.6) rectangle (2.6,2.6);
  \draw[->] (4.3,3.0) node[right] {$C_\ell$} -- (2.6,2.2);
  \draw[->] (4.3,1.6) node[right] {$\partial C_\ell+B_{\delta'}$} -- (2.77,1.25);
  \draw[->] (4.3,0.0) node[right] {$J_0$} -- (1.69,-0.32);
  \draw[->] (4.3,-1.6) node[right] {$J\setminus J_0$} -- (3.39,-1.66);
  \draw[->] (4.3,-3.0) node[right] {$\partial C_\ell+B_{\delta'+\delta}$} -- (3.10,-2.4);
\end{tikzpicture}
\caption{Two-dimensional sketch of the proof of Lemma~\ref{lem:upper_bounded_layer}.}
\label{fig:bounded_layer}
\end{figure}

The corresponding tetrahedra are disjoint subsets of $C_\ell$, so
$|J_0|\,|h\tetra|\le \ell^3$.
Conversely, if a point $x\in C_\ell$ is at a distance more than $h'$ from $\partial C_\ell$, then the index of the tetrahedron containing $x$ belongs to $J_0$. Indeed, every point of that tetrahedron lies within $d_0h$ of $x$. The union of the tetrahedra with indices in $J_0$ consequently contains $C_{\ell-2h'}$ up to null sets. Thus
\begin{equation}\label{eq:upper_interior_count}
0\le \ell^3-|J_0|\,|h\tetra|
\le \ell^3-(\ell-2h')^3
\le C \ell^2h'.
\end{equation}
For $(z,j)\in J_0$, the $\delta$-enlargement of the tetrahedron lies in $C_\ell\setminus(\partial C_\ell+B_{\delta'})$, where $\rho_\ell=\rho_0$. Hence the localized density equals $\rho_0\,\iden_{Q^{-1}(h\mu_j\tetra+h z+a)}*\eta_\delta$. Translation invariance, orthogonal invariance, and radiality of the smoothing function show that its energy is the same finite number $E(\rho_0\,\iden_{h\tetra}*\eta_\delta)$.

For $(z,j)\in J\setminus J_0$, the localized density $(\iden_{Q^{-1}(h\mu_j\tetra+h z+a)}*\eta_\delta)\rho_\ell$ vanishes outside $\partial C_\ell+B_{h'}$. Indeed, a point outside $C_\ell$ at which $\rho_\ell$ is nonzero is within distance $\delta'$ of its boundary. If instead $x\in C_\ell$ has distance at least $h'$ from the boundary and $(\iden_{Q^{-1}(h\mu_j\tetra+h z+a)}*\eta_\delta)(x)>0$, then $x$ lies in the $\delta$-enlargement of the tetrahedron. Every point of the tetrahedron is then within distance $d_0h+\delta$ of $x$, which would give $(z,j)\in J_0$. This proves the assertion.

The Lieb--Oxford inequality (Theorem~\ref{thm:basic_lieb_oxford}) and nonnegativity of the kinetic energy give $E(\rho)\ge-c_{\mathrm{LO}}\int_{\RR^3}\rho^{4/3}$ for each of the localized densities $\rho$. For the sum over $J\setminus J_0$ we use that each smeared indicator is at most $1$, so that its $4/3$ power is at most the smeared indicator itself, together with the partition identity~\eqref{eq:basic_partition}. This avoids an estimate that grows with the overlap multiplicity or the number of boundary tetrahedra. We obtain
\begin{align*}
&\sum_{(z,j)\in J\setminus J_0}E\big((\iden_{Q^{-1}(h\mu_j\tetra+h z+a)}*\eta_\delta)\rho_\ell\big)\\
&\qquad\ge-c_{\mathrm{LO}}\int_{\partial C_\ell+B_{h'}}\rho_\ell^{4/3}
\sum_{(z,j)\in J\setminus J_0}(\iden_{Q^{-1}(h\mu_j\tetra+h z+a)}*\eta_\delta)^{4/3}\\
&\qquad\ge-c_{\mathrm{LO}}(\Lambda\rho_0)^{4/3}|\partial C_\ell+B_{h'}|
\ge-C(\Lambda\rho_0)^{4/3}\ell^2h'.
\end{align*}
For the interior contribution,~\eqref{eq:upper_interior_count} implies the sign-independent estimate
\[
|J_0|\,E(\rho_0\,\iden_{h\tetra}*\eta_\delta)
\ge\frac{\ell^3}{|h\tetra|}E(\rho_0\,\iden_{h\tetra}*\eta_\delta)
-C\frac{|E(\rho_0\,\iden_{h\tetra}*\eta_\delta)|}{|h\tetra|}\ell^2h'.
\]
Since the terms with $(z,j)\notin J$ vanish, combining the last two estimates, uniformly in $(Q,a)$, gives
\begin{multline}\label{eq:upper_local_sum_bound}
\sum_{z\in\ZZ^3}\sum_{j=1}^{24}E\big((\iden_{Q^{-1}(h\mu_j\tetra+h z+a)}*\eta_\delta)\rho_\ell\big)
\\
\ge \ell^3\frac{E(\rho_0\,\iden_{h\tetra}*\eta_\delta)}{|h\tetra|}
-C \ell^2h'\left(\frac{|E(\rho_0\,\iden_{h\tetra}*\eta_\delta)|}{|h\tetra|}+(\Lambda\rho_0)^{4/3}\right).
\end{multline}
This form is valid whether $E(\rho_0\,\iden_{h\tetra}*\eta_\delta)$ is positive, negative, or zero.

Insert~\eqref{eq:upper_local_sum_bound} and~\eqref{eq:upper_layer_integrals} into~\eqref{eq:upper_layer_localization}, noting that the normalized average of a uniform bound is the same bound and that $0<1-C\delta/h\le1$. After dividing by $\ell^3$, we find
\begin{align*}
\frac{E(\rho_\ell)}{\ell^3}
&\ge\left(1-\frac{C\delta}h\right)\frac{E(\rho_0\,\iden_{h\tetra}*\eta_\delta)}{|h\tetra|}\\
&\quad-C\frac{h'}{\ell}
\left(\frac{|E(\rho_0\,\iden_{h\tetra}*\eta_\delta)|}{|h\tetra|}+(\Lambda\rho_0)^{4/3}\right)
-\frac{C(\Lambda\rho_0+\delta^3\Lambda^2\rho_0^2)}h
-\frac{C\Lambda\rho_0}{h\delta}.
\end{align*}
At fixed $h$, the energy $E(\rho_0\,\iden_{h\tetra}*\eta_\delta)$ is finite and $h'/\ell\to0$. Therefore
\[
\liminf_{\ell\to\infty}\frac{E(\rho_\ell)}{\ell^3}
\ge\left(1-\frac{C\delta}h\right)\frac{E(\rho_0\,\iden_{h\tetra}*\eta_\delta)}{|h\tetra|}
-\frac{C_{\Lambda,\rho_0}}{h\delta}.
\]
 The hypotheses of Theorem~\ref{thm:UEG}(ii) hold as $h\to\infty$, since $\delta/h\to0$ and $\delta h\to\infty$. Thus
\[
\lim_{h\to\infty}\frac{E(\rho_0\,\iden_{h\tetra}*\eta_\delta)}{|h\tetra|}=e_\UEG(\rho_0).
\]
Letting $h\to\infty$ at fixed $\delta$ in the previous lower bound proves~\eqref{eq:upper_bounded_layer}.
\end{proof}

\begin{lemma}\label{lem:upper_collar}
Fix $\delta>0$ and $r\ge2\delta$. There exists a bounded radial function $g_{r,\delta}$ supported in $\overline{B_{r+2\delta}\setminus B_r}$ such that
\begin{equation}\label{eq:upper_collar_charge}
\int_{\RR^3}g_{r,\delta}=0,
\qquad
0\le\rho_0+g_{r,\delta}\le C\rho_0
\quad\hbox{on }B_{r+2\delta}\setminus B_r,
\end{equation}
where $C$ is an absolute constant. Define
\[
\rho_{r,\delta}(x)=
\begin{cases}
0,&|x|\le r,\\
\rho_0+g_{r,\delta}(x),&r<|x|<r+2\delta,\\
\rho_0,&|x|\ge r+2\delta.
\end{cases}
\]
Then $\sqrt{\rho_{r,\delta}}$ is smooth and
\begin{equation}\label{eq:upper_collar_costs}
\int_{B_{r+2\delta}\setminus B_r}|\nabla\sqrt{\rho_{r,\delta}}|^2
\le C\rho_0r^2/\delta,
\qquad
D(g_{r,\delta})\le C\rho_0^2r^2\delta^3.
\end{equation}
The potential $\int_{\RR^3}g_{r,\delta}(y)|x-y|^{-1}\,\mathrm dy$ vanishes for $|x|\ge r+2\delta$.
\end{lemma}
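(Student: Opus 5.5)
The plan is to build $g_{r,\delta}$ from a one-dimensional profile in the radial variable. I would write $\sqrt{\rho_{r,\delta}}(x)=\sqrt{\rho_0}\,\phi\big((|x|-r)/\delta\big)$ and choose $\phi$ on $\RR$ with the following properties:
\begin{itemize}
\item $\phi=0$ on $(-\infty,0]$ and $\phi=1$ on $[2,\infty)$;
\item $\phi$ is smooth and flat at $t=0$, meaning all its derivatives vanish there, so that $\sqrt{\rho_{r,\delta}}$ is smooth across $|x|=r$.
\end{itemize}
Then $g_{r,\delta}:=\rho_{r,\delta}-\rho_0$ on the annulus $r<|x|<r+2\delta$, and $g_{r,\delta}:=0$ elsewhere, is radial and bounded. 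By construction its support lies in $\overline{B_{r+2\delta}\setminus B_r}$.

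\emph{Choice of $\phi$.} I take $\phi=\psi+\lambda\kappa$, where:
\begin{itemize}
\item $\psi$ is a fixed smooth nondecreasing function, flat at $0$, with $\psi=0$ on $(-\infty,0]$ and $\psi=1$ on $[1,\infty)$;
\item $\kappa\ge0$ is a fixed smooth bump supported in $(1,2)$;
\item $\lambda\ge0$ is a parameter.
\end{itemize}
Neutrality $\int g_{r,\delta}=0$ reads
\[
\int_0^2\big(\phi(t)^2-1\big)(r+\delta t)^2\,\mathrm dt=0 .
\]
Split this into two pieces. The deficit $\int_0^1(1-\psi^2)(r+\delta t)^2\,\mathrm dt$ lies between $c_\psi r^2$ and $c_\psi(r+\delta)^2$. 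The surplus $\int_1^2\big((1+\lambda\kappa)^2-1\big)(r+\delta t)^2\,\mathrm dt$ is continuous and strictly increasing in $\lambda$, vanishes at $\lambda=0$ and tends to $\infty$. By the intermediate value theorem some $\lambda$ balances the two.

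Since $r\ge2\delta$, we have $r^2\le(r+\delta t)^2\le 4r^2$ on $[0,2]$, so the ratio of the weights is bounded by $4$. This bounds $\lambda$ by an absolute constant $\lambda_*$, depending only on the fixed $\psi,\kappa$. Consequently $0\le\rho_0+g_{r,\delta}=\rho_0\phi^2\le(1+\lambda_*\|\kappa\|_\infty)^2\rho_0$, which gives \eqref{eq:upper_collar_charge}.

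\emph{Estimates.} The gradient cost is
\[
\int|\nabla\sqrt{\rho_{r,\delta}}|^2=4\pi\rho_0\delta^{-1}\int_0^2\phi'(t)^2(r+\delta t)^2\,\mathrm dt\le C\rho_0r^2/\delta,
\]
because $\|\phi'\|_\infty$ is bounded uniformly in $\lambda\le\lambda_*$.

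For the Coulomb energy I use Theorem~\ref{thm:basic_newton}(iv):
\[
D(g)=\tfrac12\int s^{-2}\Big(\int_{B_s}g\Big)^2\mathrm ds .
\]
By neutrality and the support condition, the integrand vanishes unless $r<s<r+2\delta$. On that interval $\big|\int_{B_s}g\big|\le\|g\|_1\le C\rho_0r^2\delta$ and $s^{-2}\le r^{-2}$. The interval has length $2\delta$, so $D(g)\le C\rho_0^2r^2\delta^3$.

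Finally, the potential of $g$ vanishes for $|x|\ge r+2\delta$ by Theorem~\ref{thm:basic_newton}(i), since $g$ is radial, neutral and supported in $\overline{B_{r+2\delta}}$.

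The only delicate step is the neutrality balancing with a $\lambda$ bounded uniformly in $r$ and $\delta$. This is where the hypothesis $r\ge2\delta$ enters, and the weight comparison above handles it.
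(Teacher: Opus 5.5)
Your proposal is correct and follows essentially the paper's proof: a fixed radial profile at scale $\delta$ consisting of a cutoff plus a compensating bump in the outer half, with the bump coefficient fixed by neutrality and bounded uniformly via $r\ge2\delta$, followed by Theorem~\ref{thm:basic_newton}(iv) for $D(g_{r,\delta})$ and Theorem~\ref{thm:basic_newton}(i) for the vanishing potential. The differences are minor. You add the bump to $\sqrt{\rho_{r,\delta}}$, so neutrality is a quadratic equation in $\lambda$ solved by the intermediate value theorem, whereas the paper adds $c\vartheta$ to the density itself, obtaining an explicit linear formula for $c$ and smoothness of the square root from $1+c\vartheta\ge1$. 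The paper's cutoff also vanishes on $[0,1/4]$, leaving a gap of width $\delta/4$ that is convenient later but not needed for the lemma as stated.
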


\begin{proof}
Choose fixed functions $\sigma,\vartheta\in C^\infty(\RR)$ such that
\[
0\le \sigma\le1,
\qquad \sigma(t)=0\ \hbox{for }t\le\tfrac14,
\qquad \sigma(t)=1\ \hbox{for }t\ge\tfrac34,
\]
and
\[
\vartheta\ge0,
\qquad \supp\vartheta\subset(1,2),
\qquad I_\vartheta=\int_1^2\vartheta(t)\,\mathrm dt>0.
\]
For the ratio $\delta/r\le1/2$, set
\begin{equation}\label{eq:upper_collar_coefficient}
 h(t)=\sigma(t)^2+c\vartheta(t),
 \qquad
 c=
 \frac{\displaystyle\int_0^2(1-\sigma(t)^2)(r+\delta t)^2\,\mathrm dt}
 {\displaystyle\int_1^2\vartheta(t)(r+\delta t)^2\,\mathrm dt}.
\end{equation}
The numerator is at least $r^2/4$, because $\sigma=0$ on $[0,1/4]$. It is at most $(r+\delta)^2\le9r^2/4$, because $1-\sigma^2$ vanishes for $t\ge3/4$. The denominator lies between $r^2I_\vartheta$ and $(r+2\delta)^2I_\vartheta\le4r^2I_\vartheta$. Consequently
\begin{equation}\label{eq:upper_collar_uniform_coefficient}
\frac1{16I_\vartheta}\le c\le\frac9{4I_\vartheta}.
\end{equation}
In particular, $0\le h\le C$ for an absolute constant $C$, and
\[
\sqrt{h(t)}=
\begin{cases}
\sigma(t),&0\le t\le1,\\
\sqrt{1+c\vartheta(t)},&1\le t\le2.
\end{cases}
\]
Both expressions are smooth, the second since $1+c\vartheta\ge1$. As $\supp\vartheta$ is a compact subset of $(1,2)$, $\vartheta$ vanishes near $t=1$ and near $t=2$. Since also $\sigma=1$ on $[3/4,1]$, both expressions equal one near $t=1$, so $\sqrt h$ is smooth on $[0,2]$. It vanishes for $t\le1/4$ and equals one near $t=2$. Moreover,
\begin{equation}\label{eq:upper_collar_profile_gradient}
\int_0^2\big|(\sqrt h)'(t)\big|^2\,\mathrm dt
\le\int_0^1|\sigma'(t)|^2\,\mathrm dt
+\frac14\left(\frac9{4I_\vartheta}\right)^2
\int_1^2|\vartheta'(t)|^2\,\mathrm dt
\le C.
\end{equation}

Define
\[
g_{r,\delta}(x)=
\begin{cases}
\rho_0\left(h\left(\dfrac{|x|-r}{\delta}\right)-1\right),&r<|x|<r+2\delta,\\
0,&\hbox{otherwise}.
\end{cases}
\]
The choice of $c$ implies
\[
\int_{\RR^3}g_{r,\delta}
=4\pi\rho_0\delta\int_0^2(h(t)-1)(r+\delta t)^2\,\mathrm dt=0.
\]
The bound on the density follows from $0\le h\le C$. We have, by~\eqref{eq:upper_collar_profile_gradient},
\[
\int_{B_{r+2\delta}\setminus B_r}|\nabla\sqrt{\rho_{r,\delta}}|^2
=\frac{4\pi\rho_0}{\delta}
\int_0^2\big|(\sqrt h)'(t)\big|^2(r+\delta t)^2\,\mathrm dt
\le C\rho_0r^2/\delta.
\]

To estimate the direct energy of $g_{r,\delta}$, let $Q(t)$ be the charge inside the sphere of radius $t$,
\[
Q(t)=4\pi\int_0^t s^2g_{r,\delta}(s)\,\mathrm ds.
\]
Its support as a function of $t$ is contained in $[r,r+2\delta]$, and
\[
|Q(t)|\le\int_{\RR^3}|g_{r,\delta}|
\le C\rho_0r^2\delta.
\]
Since the total charge is zero, the potential $\int_{\RR^3}g_{r,\delta}(y)|x-y|^{-1}\,\mathrm dy$ vanishes for $|x|\ge r+2\delta$ by Theorem~\ref{thm:basic_newton}(i). Theorem~\ref{thm:basic_newton}(iv) gives
\[
D(g_{r,\delta})
=\frac12\int_0^\infty\frac{Q(t)^2}{t^2}\,\mathrm dt
=\frac12\int_r^{r+2\delta}\frac{Q(t)^2}{t^2}\,\mathrm dt
\le\delta\frac{\sup_tQ(t)^2}{r^2}
\le C\rho_0^2r^2\delta^3.
\]
This proves the lemma.
\end{proof}

\begin{lemma}[A finite packing with room for shells]\label{lem:upper_packing}
Given $\varepsilon\in(0,1)$, $\delta>0$ and $r_*>0$, there exist $L>0$ and finitely many balls $B_{r_m}(c_m)$, $1\le m\le M$, such that
\[
r_m\ge r_*\,,
\qquad N_m=\rho_0|B_{r_m}|\in\NN,
\qquad
\frac1{L^3}\sum_{m=1}^M|B_{r_m}|\ge1-\varepsilon,
\]
and the enlarged closed balls $\overline{B_{r_m+2\delta}(c_m)}$ are pairwise disjoint and contained in $C_L$.
\end{lemma}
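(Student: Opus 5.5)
The plan is a Vitali/Whitney-type covering of the unit cube by balls of widely varying sizes, followed by a dilation that makes the smallest ball large. Along the way, every radius is adjusted so that $\rho_0|B_{r_m}|$ is an integer, and the shells $\overline{B_{r_m+2\delta}(c_m)}$ are kept from touching.

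\emph{Step 1: a packing of the unit cube.} I first fix a finite family of pairwise disjoint closed balls $\overline{B_{s_m}(a_m)}\subset C_1$, $1\le m\le M$, with $\sum_m|B_{s_m}|\ge1-\varepsilon/2$. This is the standard consequence of the Vitali covering theorem: the open set $C_1$ is covered, up to a null set, by countably many disjoint closed balls contained in it, and we truncate the series. Alternatively, one can do it greedily. Place a disjoint family of balls inscribed in dyadic subcubes, which covers a fixed fraction $\theta>0$ of the volume. The remaining set is still open, so repeat on its Whitney cubes. After $k$ rounds the uncovered volume is $(1-\theta)^k$, so finitely many rounds give $\varepsilon/2$. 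Finally I shrink every radius slightly, $s_m\mapsto(1-\kappa)s_m$ with a small $\kappa>0$ costing at most $\varepsilon/8$ of the volume. This makes the closed balls have mutual distances and distances to $\partial C_1$ bounded below by some $\gamma>0$, which depends only on the finite family.

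\emph{Step 2: dilation.} Let $s_{\min}=\min_m s_m$ and choose a scale $\lambda\ge1$ so large that $\lambda s_{\min}\ge r_*+1$ and $\lambda\gamma>4\delta+2\tau$. Here $\tau$ is the rounding margin from Step~3, and I take $\tau\le1$. Set $L=\lambda$, $c_m=\lambda a_m$ and $\tilde r_m=\lambda s_m$. Then the dilated balls are disjoint, lie in $C_L$, keep gaps larger than $4\delta+2\tau$ between one another and from $\partial C_L$, and cover a fraction at least $1-5\varepsilon/8$ of $C_L$.

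\emph{Step 3: integrality.} The map $r\mapsto\rho_0|B_r|=\frac{4\pi}3\rho_0r^3$ is continuous and increasing, with derivative $4\pi\rho_0r^2$. Choose $r_m\le\tilde r_m$ maximal such that $\rho_0|B_{r_m}|\in\NN$. Then $\rho_0(|B_{\tilde r_m}|-|B_{r_m}|)\le1$, so $\tilde r_m-r_m\le(4\pi\rho_0\tilde r_m^2)^{-1}$. Enlarging $\lambda$ further if necessary, this shift is at most $\min(1,\tau)$, which gives $r_m\ge r_*$. Since $r_m\le\tilde r_m$, the enlarged balls $\overline{B_{r_m+2\delta}(c_m)}\subset\overline{B_{\tilde r_m+2\delta}(c_m)}$ remain pairwise disjoint and inside $C_L$, because the gaps exceed $4\delta$. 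The lost volume is at most $M/\rho_0$. Relative to $L^3=\lambda^3$ this is at most $\varepsilon/8$ once $\lambda$ is large, since $M$ is fixed. Altogether the covered fraction is at least $1-\varepsilon$.

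The only genuinely nontrivial step is Step~1, the existence of a finite disjoint ball packing of a cube with density arbitrarily close to one. I would invoke Vitali's covering theorem for Lebesgue measure directly, rather than build it by hand, and keep the greedy iteration as a self-contained fallback. Everything else is bookkeeping, possible because $M$ and $\gamma$ are fixed before $\lambda\to\infty$.
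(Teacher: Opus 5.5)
Your proposal is correct and follows the paper's proof essentially step by step: truncate a Vitali packing of $C_1$, fix the positive gap before dilating by $L$, and round the radii down to neutral values at a total volume cost of at most $M/\rho_0$, which is negligible relative to $L^3$ since $M$ is fixed. The shrinking by $(1-\kappa)$ and the margin $\tau$ are superfluous, since finitely many disjoint closed balls inside the open cube already have a positive gap by compactness and rounding down only shrinks the balls; also, your radius bound should read $\tilde r_m-r_m\le 3(4\pi\rho_0\tilde r_m^2)^{-1}$ (or $(4\pi\rho_0 r_m^2)^{-1}$), which changes nothing in the argument.
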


\begin{proof}
Let $\BC$ be the family of all closed balls $\overline{B_s(z)}$ with $s>0$ that are contained in the open unit cube $C_1$. It is a Vitali cover of $C_1$. Indeed, if $x\in C_1$ and $0<s<d(x,\partial C_1)$, then $\overline{B_s(x)}\in\BC$, so every point of $C_1$ lies in balls of $\BC$ of arbitrarily small radius. Since $\BC$ consists of nondegenerate closed balls of diameter at most $1$, the Vitali covering theorem for Lebesgue measure~\cite[Sec.~1.5]{evans2015measure} applies. We obtain countably many pairwise disjoint balls $\overline{B_{s_k}(z_k)}\in\BC$, $k\in\NN$, such that $C_1\setminus\bigcup_k\overline{B_{s_k}(z_k)}$ is a null set. Since these balls are disjoint subsets of $C_1$,
\[
\sum_{k=1}^\infty|B_{s_k}|=|C_1|=1.
\]
Choose $M\ge2$ such that $\sum_{k=1}^M|B_{s_k}|>1-\varepsilon/2$. The balls $\overline{B_{s_m}(z_m)}$, $1\le m\le M$, are finitely many pairwise disjoint compact subsets of the open cube $C_1$. Hence
\[
d=\min\Big\{\min_{1\le m<k\le M}\big(|z_m-z_k|-s_m-s_k\big),\ \min_{1\le m\le M}\big(d(z_m,\partial C_1)-s_m\big)\Big\}>0.
\]

Next we dilate the cube $C_1$ and the balls $\overline{B_{s_m}(z_m)}$, $1\le m\le M$, by a large factor $L>0$. This gives the cube $C_L=LC_1$ and the balls $\overline{B_{Ls_m}(Lz_m)}$.
 We then round the radii $Ls_m$ down to neutral ones. For $1\le m\le M$ put
\[
N_m=\big\lfloor\rho_0|B_{Ls_m}|\big\rfloor,
\qquad
c_m=Lz_m,
\]
and let $r_m\ge0$ be the radius with $\rho_0|B_{r_m}|=N_m$. Then $r_m\le Ls_m$. Since $N_m>\rho_0|B_{Ls_m}|-1$, we have
\[
|B_{r_m}|>|B_{Ls_m}|-\frac1{\rho_0}.
\]
In particular, $N_m\ge1$ and $r_m\ge r_*$ for all sufficiently large $L$. Summing over $m$ gives
\[
\frac1{L^3}\sum_{m=1}^M|B_{r_m}|
>\sum_{m=1}^M|B_{s_m}|-\frac{M}{\rho_0L^3}
>1-\frac\varepsilon2-\frac{M}{\rho_0L^3},
\]
which is at least $1-\varepsilon$ when $L^3\ge2M/(\rho_0\varepsilon)$. Finally, for $m\ne k$, the definition of $d$ and $r_m\le Ls_m$ give
\[
|c_m-c_k|=L|z_m-z_k|\ge L(s_m+s_k+d)\ge r_m+r_k+Ld,
\]
so $\overline{B_{r_m+2\delta}(c_m)}$ and $\overline{B_{r_k+2\delta}(c_k)}$ are disjoint when $Ld>4\delta$. Similarly,
\[
d(c_m,\partial C_L)=L\,d(z_m,\partial C_1)\ge L(s_m+d)\ge r_m+Ld,
\]
so $\overline{B_{r_m+2\delta}(c_m)}\subset C_L$ when $Ld>2\delta$. See Figure~\ref{fig:packing}. Choosing $L$ large enough for all these conditions completes the proof.
\end{proof}

\begin{figure}[ht]
\centering
\begin{tikzpicture}[>=stealth]
  \fill[gray!20] (-3.0,-3.0) rectangle (3.0,3.0);
  \foreach \x/\y/\r in {-1.930/-1.570/0.800,1.890/0.250/0.800,-1.890/0.850/0.800,0.990/2.050/0.800,0.610/-1.030/0.800,-0.050/0.650/0.800,2.110/-1.990/0.800,-0.790/2.170/0.748,-0.450/-2.250/0.668,-2.290/2.330/0.588,-0.810/-0.610/0.529,2.390/1.630/0.520,0.790/-2.470/0.448,-2.510/-0.350/0.408,2.150/2.590/0.328,2.590/-0.810/0.328,-2.610/-2.610/0.301,-1.690/-0.330/0.252,-1.430/-2.650/0.248,1.750/-0.890/0.207,-1.990/-2.730/0.176,0.110/2.730/0.170,-2.750/1.550/0.167,2.750/2.390/0.163,1.030/0.950/0.159,2.770/0.910/0.148,-1.030/0.190/0.141,0.830/0.030/0.134,-0.870/-1.410/0.129,-2.790/0.270/0.128,-0.430/-1.310/0.125,-2.790/-0.950/0.112,1.410/-2.810/0.108,2.810/-0.270/0.108,2.750/2.810/0.108,-1.570/2.810/0.108,2.810/-2.790/0.108,1.710/1.290/0.099,-2.810/-2.110/0.090,-0.930/1.210/0.080,1.650/2.830/0.080,-1.310/-0.050/0.078,0.230/-2.830/0.076,0.050/1.670/0.072,-2.850/1.170/0.068,2.850/-1.290/0.068,2.850/0.550/0.064,1.370/1.110/0.063} {
    \fill[red!25] (\x,\y) circle ({\r+0.06});
    \fill[blue!30] (\x,\y) circle (\r);
    \draw[densely dashed] (\x,\y) circle ({\r+0.06});
    \draw (\x,\y) circle (\r);
  }
  \draw[thick] (-3.0,-3.0) rectangle (3.0,3.0);
  \draw[->] (2.110,-1.990) -- ++(150:0.800) node[pos=0.5,above] {$r_m$};
  \fill (2.110,-1.990) circle (0.03) node[below=2pt] {$c_m$};
  \draw[->] (4.0,2.6) node[right] {$C_L$} -- (3.0,2.0);
  \draw[->] (4.0,-0.790) node[right] {$B_{r_m}(c_m)$} -- (2.447,-1.707);
  \draw[->] (4.0,-2.190) node[right] {$\overline{B_{r_m+2\delta}(c_m)}\setminus B_{r_m}(c_m)$} -- (2.927,-2.134);
\end{tikzpicture}
\caption{Two-dimensional sketch of the packing of Lemma~\ref{lem:upper_packing}, not to scale. The balls $B_{r_m}(c_m)$ (blue) and their shells of width $2\delta$ (red, bounded by the dashed circles) lie in $C_L$, and the enlarged balls are pairwise disjoint.}
\label{fig:packing}
\end{figure}

We are now in a position to complete the proof of the main result of this section.

\begin{proof}[Proof of Theorem~\ref{thm:upper}]
We shall apply Lemma~\ref{lem:upper_bounded_layer}. Then, it remains to bound the resulting $\liminf$ from above by $e_\Jel(\rho_0)$, up to an error that can be made arbitrarily small.

Fix a shell width $\delta>0$ and a number $\varepsilon\in(0,1)$. Constants depending on $\rho_0,\delta$ will not depend on the finite packing constructed below.

By the thermodynamic limit on neutral balls (Theorem~\ref{thm:LN_main}(ii) and Remark~\ref{rem:LN_balls}), there is a radius $r_*>2\delta$ such that every $r\ge r_*$ with $N=\rho_0|B_r|\in\NN$ satisfies
\begin{equation}\label{eq:upper_core_energy}
E_\Jel(\rho_0,B_r)
\le(e_\Jel(\rho_0)+\varepsilon)|B_r|.
\end{equation}
We may increase $r_*$ further whenever needed.

For each such $r$, let $\Gamma_r$ be a normalized rotation-invariant ground-state density matrix of $H^{\rho_0}_{B_r}$. A ground state exists because the Dirichlet Hamiltonian on the ball has compact resolvent. Averaging a ground state over spatial rotations produces such a $\Gamma_r$, since the Hamiltonian commutes with rotations. This state may be mixed, and it satisfies
\[
\mathcal J_{\rho_0\iden_{B_r}}(\Gamma_r)
=E_\Jel(\rho_0,B_r).
\]
The density $\rho_{\Gamma_r}$ is radial, supported in $B_r$, and has integral $N=\rho_0|B_r|$. Hence $\rho_{\Gamma_r}-\rho_0\iden_{B_r}$ is a radial neutral charge distribution, and Newton's theorem~\eqref{eq:basic_newton} gives
\begin{equation}\label{eq:upper_core_zero_potential}
\int_{\RR^3}\frac{\rho_{\Gamma_r}(y)-\rho_0\iden_{B_r}(y)}{|x-y|}\,\mathrm dy=0\quad\hbox{for }x\in\RR^3\setminus B_r.
\end{equation}
The density $\rho_{\Gamma_r}$ has finite direct energy. Indeed, $\Gamma_r$ has finite kinetic energy, so $\sqrt{\rho_{\Gamma_r}}\in H^1$ by the Hoffmann--Ostenhof inequality~\cite{hoffmann1977schrodinger}. Hence $\rho_{\Gamma_r}\in L^1\cap L^3$, and its direct energy is finite.

Next, we apply Lemma~\ref{lem:upper_packing} with $\varepsilon$, $\delta$ and the chosen $r_*$. It gives a side length $L>0$ and finitely many balls $B_{r_m}(c_m)$, $1\le m\le M$, with $r_m\ge r_*$ and $\rho_0|B_{r_m}|\in\NN$, whose closed enlargements $\overline{B_{r_m+2\delta}(c_m)}$ are pairwise disjoint and contained in $C_L$, and which cover the fraction $\theta$ of $C_L$, where
\[
\theta=\frac1{L^3}\sum_{m=1}^M|B_{r_m}|\ge1-\varepsilon.
\]
Since each $r_m$ is admissible in~\eqref{eq:upper_core_energy}, we may use the rotation-invariant ground-state density matrix $\Gamma_{r_m}$ centered at the origin for $m=1,\dots,M$.
For a fixed shift $a\in C_L$ and an index $(z,m)\in\ZZ^3\times\{1,\dots,M\}$, the translated ball is $B_{r_m}(c_m+Lz+a)$ and the translated enlarged ball is $B_{r_m+2\delta}(c_m+Lz+a)$. Let $\Gamma_{z,m,a}$ be the translate of $\Gamma_{r_m}$ by $c_m+Lz+a$. The closed enlarged balls are disjoint, including between different periodic cells.

Choose a radial mollifier $\chi$, that is, a radial function with $\chi\ge0$, $\int_{\RR^3}\chi=1$, $\supp\chi\subset B_1$ and $\sqrt\chi\in C_c^\infty(\RR^3)$, and put $\chi_\delta(x)=\delta^{-3}\chi(x/\delta)$. We use the fixed outer background $\rho_0\iden_{C_\ell}*\chi_\delta$, which does not depend on $a$. It satisfies $0\le\rho_0\iden_{C_\ell}*\chi_\delta\le\rho_0$, $\int_{\RR^3}\rho_0\iden_{C_\ell}*\chi_\delta=\rho_0\ell^3$, and $\rho_0\iden_{C_\ell}*\chi_\delta=\rho_0$ on the cube $C_{\ell-2\delta}$ of points of $C_\ell$ at distance more than $\delta$ from $\partial C_\ell$. Its support lies in $C_\ell+B_\delta$.

Hence, we retain only the balls indexed by
\begin{equation}\label{eq:upper_selected_indices}
J_0(a,\ell)=\big\{(z,m)\in\ZZ^3\times\{1,\dots,M\}:\ \overline{B_{r_m+2\delta}(c_m+Lz+a)}\subset C_{\ell-2\delta}\big\}.
\end{equation}
See Figure~\ref{fig:upper_construction} (top).
Define the total volume and the total surface area of the selected balls by
\[
V(a,\ell)=\sum_{(z,m)\in J_0(a,\ell)}|B_{r_m}|,
\qquad
A(a,\ell)=\sum_{(z,m)\in J_0(a,\ell)}|\partial B_{r_m}|.
\]
\begin{claim}
Uniformly for $a\in C_L$,
\begin{equation}\label{eq:upper_explicit_counting}
V(a,\ell)=\theta \ell^3+\mathcal O_{L,\delta}(\ell^2),
\qquad
A(a,\ell)\le\frac3{r_*}V(a,\ell)\le\frac3{r_*}\ell^3.
\end{equation}
\end{claim}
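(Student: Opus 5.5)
The surface-area bound is elementary, so I dispose of it first. For each ball, $|\partial B_{r_m}|=3|B_{r_m}|/r_m\le 3|B_{r_m}|/r_*$, since $r_m\ge r_*$. Summing over $J_0(a,\ell)$ gives $A(a,\ell)\le\frac3{r_*}V(a,\ell)$. The selected balls are pairwise disjoint subsets of $C_{\ell-2\delta}\subset C_\ell$, because the closed enlarged balls are disjoint both within a cell and across periodic cells. Hence $V(a,\ell)\le\ell^3$.

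For the volume asymptotics I compare $J_0(a,\ell)$ with the set of periodic cells. For $z\in\ZZ^3$, the cell $Lz+a+C_L$ contains the enlarged balls $\overline{B_{r_m+2\delta}(c_m+Lz+a)}$, $1\le m\le M$, because by Lemma~\ref{lem:upper_packing} they lie in $C_L$ before translation.

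\emph{Lower bound.} Let $Z_{\rm in}$ be the set of $z$ whose closed cell $Lz+a+\overline{C_L}$ lies in $C_{\ell-2\delta}$. For every $z\in Z_{\rm in}$, all $M$ indices $(z,m)$ belong to $J_0(a,\ell)$. Each such cell contributes exactly $\sum_m|B_{r_m}|=\theta L^3$. Every point of $C_{\ell-2\delta-2\sqrt3L}$ lies in a cell of diameter $\sqrt3L$ that stays inside $C_{\ell-2\delta}$. The cells therefore cover at least this smaller cube, and
\[
|Z_{\rm in}|L^3\ge(\ell-2\delta-2\sqrt3L)^3\ge\ell^3-C_{L,\delta}\ell^2.
\]
This gives $V\ge\theta\ell^3-\mathcal O_{L,\delta}(\ell^2)$.

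\emph{Upper bound.} Any selected ball lies in a cell meeting $C_{\ell-2\delta}$. All such cells lie in $C_{\ell+2\sqrt3L}$, so their number is at most $(\ell+2\sqrt3L)^3/L^3$. Each contributes at most $\theta L^3$, hence $V\le\theta(\ell+2\sqrt3L)^3=\theta\ell^3+\mathcal O_{L,\delta}(\ell^2)$ for $\ell\ge1$.

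Both bounds use only that $a\in C_L$ shifts the lattice, and the constants depend only on $L$ and $\delta$ (and $\theta\le1$). They are therefore uniform in $a$. The only point needing care is the bookkeeping that a whole cell inside $C_{\ell-2\delta}$ forces all $M$ of its enlarged balls into $J_0$, and this follows from their containment in $C_L$. I do not expect a genuine obstacle beyond this.
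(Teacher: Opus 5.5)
Your proposal is correct and follows the paper's proof: you sandwich $V(a,\ell)$ between $\theta L^3$ times the number of closed cells contained in $C_{\ell-2\delta}$ and $\theta L^3$ times the number of cells that can carry a selected ball, and then compare volumes, with the surface bound coming from $|\partial B_r|=3|B_r|/r$ and disjointness. The only differences are cosmetic and change only the constants in the $\mathcal O_{L,\delta}(\ell^2)$ term: you use the Euclidean diameter $\sqrt3L$ where the paper uses the coordinatewise bound $L$, and you count cells meeting $C_{\ell-2\delta}$ where the paper counts cells meeting $C_\ell$.
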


\begin{proof}[Proof of the claim]
For the first assertion, consider the cells $a+Lz+C_L$, $z\in\ZZ^3$, which tile $\RR^3$ up to null sets. The cell with index $z$ contains the balls $B_{r_m}(c_m+Lz+a)$, $1\le m\le M$, of total volume $\theta L^3$. Let
\[
Z_0=\{z\in\ZZ^3:\ a+Lz+\overline{C_L}\subset C_{\ell-2\delta}\},
\qquad
Z=\{z\in\ZZ^3:\ (a+Lz+C_L)\cap C_\ell\ne\emptyset\}.
\]
If $z\in Z_0$, then all enlarged balls of that cell lie in $C_{\ell-2\delta}$, so $(z,m)\in J_0(a,\ell)$ for every $m$. If $(z,m)\in J_0(a,\ell)$, then $B_{r_m}(c_m+Lz+a)\subset C_\ell$, so $z\in Z$. Hence
\[
\theta L^3|Z_0|\le V(a,\ell)\le\theta L^3|Z|.
\]
Two points of the closure of a cell differ by at most $L$ in each coordinate. Therefore every cell meeting $C_{\ell-2\delta-2L}$ has its index in $Z_0$, and every cell with index in $Z$ is contained in $C_{\ell+2L}$. Comparing volumes gives
\[
(\ell-2\delta-2L)^3\le L^3|Z_0|,
\qquad
L^3|Z|\le(\ell+2L)^3.
\]
Let $\ell\ge2(\delta+L)$. Using $(\ell-t)^3\ge \ell^3-3t\ell^2$ for $0\le t\le \ell$, $(\ell+2L)^3\le \ell^3+26L\ell^2$ for $\ell\ge L$, and $\theta\le1$, we obtain
\[
|V(a,\ell)-\theta \ell^3|\le26(\delta+L)\ell^2,
\]
uniformly in $a\in C_L$. This proves the first assertion. Note that $\ell$ need not be a multiple of $L$, and the cells need not fit exactly into $C_\ell$. For the second, $|\partial B_r|=3|B_r|/r$ and $r_m\ge r_*$ give
\[
A(a,\ell)=\sum_{(z,m)\in J_0(a,\ell)}\frac{3|B_{r_m}|}{r_m}
\le\frac3{r_*}V(a,\ell),
\]
and $V(a,\ell)\le \ell^3$ because the selected balls are pairwise disjoint subsets of $C_\ell$.
\end{proof}

\begin{figure}[htp]
\centering
\begin{tikzpicture}[>=stealth,scale=0.8]
  \begin{scope}
    \clip (-4.9,-4.9) rectangle (4.9,4.9);
    \fill[gray!8,rounded corners=0.096cm] (-4.32,-4.32) rectangle (4.32,4.32);
    \fill[gray!20] (-4.08,-4.08) rectangle (4.08,4.08);
    \foreach \t in {-3.14,-0.74,1.66,4.06} \draw[gray!70,very thin] (\t,-4.9) -- (\t,4.9);
    \foreach \t in {-3.33,-0.93,1.47,3.87} \draw[gray!70,very thin] (-4.9,\t) -- (4.9,\t);
    \foreach \x/\y/\r in {-4.740/-4.150/0.500,-3.740/-5.110/0.300,-3.600/-3.830/0.160,-4.740/-1.750/0.500,-3.740/-2.710/0.300,-4.740/0.650/0.500,-3.740/-0.310/0.300,-4.740/3.050/0.500,-3.740/2.090/0.300,-3.740/4.490/0.300,-2.340/-4.150/0.500,-1.340/-5.110/0.300,-1.200/-3.830/0.160,-1.340/4.490/0.300,-2.700/4.310/0.160,0.060/-4.150/0.500,1.060/-5.110/0.300,1.200/-3.830/0.160,1.060/4.490/0.300,-0.300/4.310/0.160,2.460/-4.150/0.500,3.460/-5.110/0.300,3.600/-3.830/0.160,3.460/4.490/0.300,2.100/4.310/0.160,4.860/-4.150/0.500,4.860/-1.750/0.500,4.500/-2.890/0.160,4.860/0.650/0.500,4.500/-0.490/0.160,4.860/3.050/0.500,4.500/1.910/0.160,4.500/4.310/0.160}
      \draw[densely dotted] (\x,\y) circle (\r);
    \foreach \x/\y/\r in {-3.600/-1.430/0.160,-3.600/0.970/0.160,-3.600/3.370/0.160,-2.340/-1.750/0.500,-1.340/-2.710/0.300,-1.200/-1.430/0.160,-2.700/-2.890/0.160,-2.340/0.650/0.500,-1.340/-0.310/0.300,-1.200/0.970/0.160,-2.700/-0.490/0.160,-2.340/3.050/0.500,-1.340/2.090/0.300,-1.200/3.370/0.160,-2.700/1.910/0.160,0.060/-1.750/0.500,1.060/-2.710/0.300,1.200/-1.430/0.160,-0.300/-2.890/0.160,0.060/0.650/0.500,1.060/-0.310/0.300,1.200/0.970/0.160,-0.300/-0.490/0.160,0.060/3.050/0.500,1.060/2.090/0.300,1.200/3.370/0.160,-0.300/1.910/0.160,2.460/-1.750/0.500,3.460/-2.710/0.300,3.600/-1.430/0.160,2.100/-2.890/0.160,2.460/0.650/0.500,3.460/-0.310/0.300,3.600/0.970/0.160,2.100/-0.490/0.160,2.460/3.050/0.500,3.460/2.090/0.300,3.600/3.370/0.160,2.100/1.910/0.160} {
      \fill[red!25] (\x,\y) circle ({\r+0.24});
      \fill[blue!30] (\x,\y) circle (\r);
      \draw[densely dashed] (\x,\y) circle ({\r+0.24});
      \draw (\x,\y) circle (\r);
    }
  \end{scope}
  \draw[densely dotted,semithick,rounded corners=0.096cm] (-4.32,-4.32) rectangle (4.32,4.32);
  \draw[thick] (-4.2,-4.2) rectangle (4.2,4.2);
  \draw[densely dashed] (-4.08,-4.08) rectangle (4.08,4.08);
  \draw[thick,dash dot] (-2.6,-2.6) rectangle (2.6,2.6);
  \draw[white,line width=2.4pt] (0,0) -- (0.46,0.27);
  \draw[->,semithick] (0,0) -- (0.46,0.27);
  \fill (0,0) circle (0.035);
  \node[below right=-2pt] at (0,0) {$0$};
  \node[right=1pt] at (0.46,0.27) {$a$};
  \draw[<->] (0.3,-2.6) -- (0.3,-4.2);
  \node[right=-1pt,fill=gray!20,inner sep=1pt] at (0.3,-3.25) {$\delta'$};
  \draw[decorate,decoration={brace,mirror,amplitude=4pt}] (-0.74,-5.0) -- (1.66,-5.0) node[midway,below=4pt] {$L$};
  \draw[->] (5.5,4.45) node[right] {$C_\ell+B_\delta$} -- (4.32,4.0);
  \draw[->] (5.5,3.5) node[right] {$C_\ell$} -- (4.2,3.8);
  \draw[->] (5.5,2.55) node[right] {$C_{\ell-2\delta}$} -- (4.08,2.35);
  \draw[->] (5.5,1.6) node[right] {$(z,m)\notin J_0(a,\ell)$} -- (4.613,1.797);
  \draw[->] (5.5,0.65) node[right] {$(z,m)\in J_0(a,\ell)$} -- (3.75,1.025);
  \draw[->] (5.5,-0.3) node[right] {shell} -- (3.87,-0.22);
  \draw[->] (5.5,-1.25) node[right] {$C_{\ell-2\delta'}$} -- (2.6,-0.84);
\end{tikzpicture}
\par\medskip
\begin{tikzpicture}[>=stealth]
  \fill[blue!15] (-1.800,0) rectangle (0,3.705);
  \fill[red!8] (0,0) rectangle (4.000,3.705);
  \fill[gray!40] (0.000,1.900) -- (0.000,0.000) -- (0.005,0.000) -- (0.024,0.000) -- (0.043,0.000) -- (0.062,0.000) -- (0.081,0.000) -- (0.100,0.000) -- (0.119,0.000) -- (0.138,0.000) -- (0.157,0.000) -- (0.176,0.000) -- (0.195,0.000) -- (0.214,0.000) -- (0.233,0.000) -- (0.252,0.000) -- (0.271,0.000) -- (0.290,0.000) -- (0.309,0.000) -- (0.328,0.000) -- (0.347,0.000) -- (0.366,0.000) -- (0.385,0.000) -- (0.404,0.000) -- (0.423,0.000) -- (0.442,0.000) -- (0.461,0.000) -- (0.480,0.000) -- (0.499,0.000) -- (0.500,0.000) -- (0.518,0.000) -- (0.537,0.000) -- (0.556,0.000) -- (0.575,0.000) -- (0.594,0.000) -- (0.613,0.000) -- (0.632,0.001) -- (0.651,0.001) -- (0.670,0.003) -- (0.689,0.005) -- (0.708,0.008) -- (0.727,0.012) -- (0.746,0.019) -- (0.765,0.027) -- (0.784,0.039) -- (0.803,0.053) -- (0.822,0.071) -- (0.841,0.093) -- (0.860,0.120) -- (0.879,0.151) -- (0.898,0.187) -- (0.917,0.229) -- (0.936,0.276) -- (0.955,0.329) -- (0.974,0.387) -- (0.993,0.450) -- (1.012,0.519) -- (1.031,0.592) -- (1.050,0.668) -- (1.069,0.749) -- (1.088,0.832) -- (1.107,0.917) -- (1.126,1.003) -- (1.145,1.089) -- (1.164,1.174) -- (1.183,1.258) -- (1.202,1.339) -- (1.221,1.417) -- (1.240,1.490) -- (1.259,1.559) -- (1.278,1.621) -- (1.297,1.678) -- (1.316,1.728) -- (1.335,1.771) -- (1.354,1.807) -- (1.373,1.837) -- (1.392,1.860) -- (1.411,1.877) -- (1.430,1.888) -- (1.449,1.895) -- (1.468,1.899) -- (1.487,1.900) -- (1.500,1.900) -- cycle;
  \fill[red!45] (2.000,1.900) -- (2.000,1.900) -- (2.019,1.901) -- (2.038,1.905) -- (2.057,1.910) -- (2.076,1.918) -- (2.095,1.928) -- (2.114,1.941) -- (2.133,1.955) -- (2.152,1.971) -- (2.171,1.990) -- (2.190,2.010) -- (2.209,2.033) -- (2.228,2.057) -- (2.247,2.083) -- (2.266,2.110) -- (2.285,2.139) -- (2.304,2.170) -- (2.323,2.202) -- (2.342,2.234) -- (2.361,2.269) -- (2.380,2.304) -- (2.399,2.339) -- (2.418,2.376) -- (2.437,2.413) -- (2.456,2.451) -- (2.475,2.488) -- (2.494,2.527) -- (2.513,2.565) -- (2.532,2.603) -- (2.551,2.640) -- (2.570,2.678) -- (2.589,2.715) -- (2.608,2.751) -- (2.627,2.787) -- (2.646,2.821) -- (2.665,2.855) -- (2.684,2.888) -- (2.703,2.919) -- (2.722,2.949) -- (2.741,2.977) -- (2.760,3.004) -- (2.779,3.029) -- (2.798,3.053) -- (2.817,3.075) -- (2.836,3.094) -- (2.855,3.112) -- (2.874,3.128) -- (2.893,3.141) -- (2.912,3.153) -- (2.931,3.162) -- (2.950,3.169) -- (2.969,3.174) -- (2.988,3.177) -- (3.007,3.177) -- (3.026,3.175) -- (3.045,3.171) -- (3.064,3.164) -- (3.083,3.156) -- (3.102,3.145) -- (3.121,3.132) -- (3.140,3.116) -- (3.159,3.099) -- (3.178,3.080) -- (3.197,3.059) -- (3.216,3.036) -- (3.235,3.011) -- (3.254,2.984) -- (3.273,2.956) -- (3.292,2.927) -- (3.311,2.896) -- (3.330,2.864) -- (3.349,2.830) -- (3.368,2.796) -- (3.387,2.761) -- (3.406,2.724) -- (3.425,2.688) -- (3.444,2.650) -- (3.463,2.613) -- (3.482,2.575) -- (3.501,2.537) -- (3.520,2.498) -- (3.539,2.461) -- (3.558,2.423) -- (3.577,2.386) -- (3.596,2.349) -- (3.615,2.313) -- (3.634,2.278) -- (3.653,2.243) -- (3.672,2.210) -- (3.691,2.178) -- (3.710,2.147) -- (3.729,2.118) -- (3.748,2.090) -- (3.767,2.064) -- (3.786,2.039) -- (3.805,2.016) -- (3.824,1.995) -- (3.843,1.976) -- (3.862,1.959) -- (3.881,1.944) -- (3.900,1.931) -- (3.919,1.921) -- (3.938,1.912) -- (3.957,1.906) -- (3.976,1.902) -- (3.995,1.900) -- (4.000,1.900) -- cycle;
  \draw[->] (-1.800,0) -- (6.200,0) node[above left] {$|x-c_m-Lz-a|$};
  \draw[densely dashed] (-1.800,1.900) -- (5.800,1.900);
  \node[left] at (-1.800,1.900) {$\rho_0$};
  \node[left] at (-1.800,0) {$0$};
  \draw[very thick] (-1.800,0) -- (0,0) -- (0.000,0.000) -- (0.005,0.000) -- (0.024,0.000) -- (0.043,0.000) -- (0.062,0.000) -- (0.081,0.000) -- (0.100,0.000) -- (0.119,0.000) -- (0.138,0.000) -- (0.157,0.000) -- (0.176,0.000) -- (0.195,0.000) -- (0.214,0.000) -- (0.233,0.000) -- (0.252,0.000) -- (0.271,0.000) -- (0.290,0.000) -- (0.309,0.000) -- (0.328,0.000) -- (0.347,0.000) -- (0.366,0.000) -- (0.385,0.000) -- (0.404,0.000) -- (0.423,0.000) -- (0.442,0.000) -- (0.461,0.000) -- (0.480,0.000) -- (0.499,0.000) -- (0.500,0.000) -- (0.518,0.000) -- (0.537,0.000) -- (0.556,0.000) -- (0.575,0.000) -- (0.594,0.000) -- (0.613,0.000) -- (0.632,0.001) -- (0.651,0.001) -- (0.670,0.003) -- (0.689,0.005) -- (0.708,0.008) -- (0.727,0.012) -- (0.746,0.019) -- (0.765,0.027) -- (0.784,0.039) -- (0.803,0.053) -- (0.822,0.071) -- (0.841,0.093) -- (0.860,0.120) -- (0.879,0.151) -- (0.898,0.187) -- (0.917,0.229) -- (0.936,0.276) -- (0.955,0.329) -- (0.974,0.387) -- (0.993,0.450) -- (1.012,0.519) -- (1.031,0.592) -- (1.050,0.668) -- (1.069,0.749) -- (1.088,0.832) -- (1.107,0.917) -- (1.126,1.003) -- (1.145,1.089) -- (1.164,1.174) -- (1.183,1.258) -- (1.202,1.339) -- (1.221,1.417) -- (1.240,1.490) -- (1.259,1.559) -- (1.278,1.621) -- (1.297,1.678) -- (1.316,1.728) -- (1.335,1.771) -- (1.354,1.807) -- (1.373,1.837) -- (1.392,1.860) -- (1.411,1.877) -- (1.430,1.888) -- (1.449,1.895) -- (1.468,1.899) -- (1.487,1.900) -- (1.500,1.900) -- (1.506,1.900) -- (1.525,1.900) -- (1.544,1.900) -- (1.563,1.900) -- (1.582,1.900) -- (1.601,1.900) -- (1.620,1.900) -- (1.639,1.900) -- (1.658,1.900) -- (1.677,1.900) -- (1.696,1.900) -- (1.715,1.900) -- (1.734,1.900) -- (1.753,1.900) -- (1.772,1.900) -- (1.791,1.900) -- (1.810,1.900) -- (1.829,1.900) -- (1.848,1.900) -- (1.867,1.900) -- (1.886,1.900) -- (1.905,1.900) -- (1.924,1.900) -- (1.943,1.900) -- (1.962,1.900) -- (1.981,1.900) -- (2.000,1.900) -- (2.000,1.900) -- (2.019,1.901) -- (2.038,1.905) -- (2.057,1.910) -- (2.076,1.918) -- (2.095,1.928) -- (2.114,1.941) -- (2.133,1.955) -- (2.152,1.971) -- (2.171,1.990) -- (2.190,2.010) -- (2.209,2.033) -- (2.228,2.057) -- (2.247,2.083) -- (2.266,2.110) -- (2.285,2.139) -- (2.304,2.170) -- (2.323,2.202) -- (2.342,2.234) -- (2.361,2.269) -- (2.380,2.304) -- (2.399,2.339) -- (2.418,2.376) -- (2.437,2.413) -- (2.456,2.451) -- (2.475,2.488) -- (2.494,2.527) -- (2.513,2.565) -- (2.532,2.603) -- (2.551,2.640) -- (2.570,2.678) -- (2.589,2.715) -- (2.608,2.751) -- (2.627,2.787) -- (2.646,2.821) -- (2.665,2.855) -- (2.684,2.888) -- (2.703,2.919) -- (2.722,2.949) -- (2.741,2.977) -- (2.760,3.004) -- (2.779,3.029) -- (2.798,3.053) -- (2.817,3.075) -- (2.836,3.094) -- (2.855,3.112) -- (2.874,3.128) -- (2.893,3.141) -- (2.912,3.153) -- (2.931,3.162) -- (2.950,3.169) -- (2.969,3.174) -- (2.988,3.177) -- (3.007,3.177) -- (3.026,3.175) -- (3.045,3.171) -- (3.064,3.164) -- (3.083,3.156) -- (3.102,3.145) -- (3.121,3.132) -- (3.140,3.116) -- (3.159,3.099) -- (3.178,3.080) -- (3.197,3.059) -- (3.216,3.036) -- (3.235,3.011) -- (3.254,2.984) -- (3.273,2.956) -- (3.292,2.927) -- (3.311,2.896) -- (3.330,2.864) -- (3.349,2.830) -- (3.368,2.796) -- (3.387,2.761) -- (3.406,2.724) -- (3.425,2.688) -- (3.444,2.650) -- (3.463,2.613) -- (3.482,2.575) -- (3.501,2.537) -- (3.520,2.498) -- (3.539,2.461) -- (3.558,2.423) -- (3.577,2.386) -- (3.596,2.349) -- (3.615,2.313) -- (3.634,2.278) -- (3.653,2.243) -- (3.672,2.210) -- (3.691,2.178) -- (3.710,2.147) -- (3.729,2.118) -- (3.748,2.090) -- (3.767,2.064) -- (3.786,2.039) -- (3.805,2.016) -- (3.824,1.995) -- (3.843,1.976) -- (3.862,1.959) -- (3.881,1.944) -- (3.900,1.931) -- (3.919,1.921) -- (3.938,1.912) -- (3.957,1.906) -- (3.976,1.902) -- (3.995,1.900) -- (4.000,1.900) -- (4.014,1.900) -- (4.033,1.900) -- (4.052,1.900) -- (4.071,1.900) -- (4.090,1.900) -- (4.109,1.900) -- (4.128,1.900) -- (4.147,1.900) -- (4.166,1.900) -- (4.185,1.900) -- (4.204,1.900) -- (4.223,1.900) -- (4.242,1.900) -- (4.261,1.900) -- (4.280,1.900) -- (4.299,1.900) -- (4.318,1.900) -- (4.337,1.900) -- (4.356,1.900) -- (4.375,1.900) -- (4.394,1.900) -- (4.413,1.900) -- (4.432,1.900) -- (4.451,1.900) -- (4.470,1.900) -- (4.489,1.900) -- (4.508,1.900) -- (4.527,1.900) -- (4.546,1.900) -- (4.565,1.900) -- (4.584,1.900) -- (4.603,1.900) -- (4.622,1.900) -- (4.641,1.900) -- (4.660,1.900) -- (4.679,1.900) -- (4.698,1.900) -- (4.717,1.900) -- (4.736,1.900) -- (4.755,1.900) -- (4.774,1.900) -- (4.793,1.900) -- (4.812,1.900) -- (4.831,1.900) -- (4.850,1.900) -- (4.869,1.900) -- (4.888,1.900) -- (4.907,1.900) -- (4.926,1.900) -- (4.945,1.900) -- (4.964,1.900) -- (4.983,1.900) -- (5.002,1.900) -- (5.021,1.900) -- (5.040,1.900) -- (5.059,1.900) -- (5.078,1.900) -- (5.097,1.900) -- (5.116,1.900) -- (5.135,1.900) -- (5.154,1.900) -- (5.173,1.900) -- (5.192,1.900) -- (5.211,1.900) -- (5.230,1.900) -- (5.249,1.900) -- (5.268,1.900) -- (5.287,1.900) -- (5.306,1.900) -- (5.325,1.900) -- (5.344,1.900) -- (5.363,1.900) -- (5.382,1.900) -- (5.401,1.900) -- (5.420,1.900) -- (5.439,1.900) -- (5.458,1.900) -- (5.477,1.900) -- (5.496,1.900) -- (5.515,1.900) -- (5.534,1.900) -- (5.553,1.900) -- (5.572,1.900) -- (5.591,1.900) -- (5.610,1.900) -- (5.629,1.900) -- (5.648,1.900) -- (5.667,1.900) -- (5.686,1.900) -- (5.705,1.900) -- (5.724,1.900) -- (5.743,1.900) -- (5.762,1.900) -- (5.781,1.900) -- (5.800,1.900);
  \foreach \t/\lab in {0/{r_m},2.0/{r_m+\delta},4.0/{r_m+2\delta}}
    \draw (\t,0.08) -- (\t,-0.08) node[below] {$\lab$};
  \node at (-0.900,3.040) {ball};
  \node at (2.000,3.382) {shell};
  \draw[->] (1.000,2.565) node[above] {$g_{r_m,\delta}<0$} -- (0.760,1.425);
  \draw[->] (4.700,2.850) node[right] {$g_{r_m,\delta}>0$} -- (3.200,2.128);
  \node[right] at (4.400,1.558) {$\tilde\rho_{\ell,a}$};
\end{tikzpicture}
\caption{Construction for the upper bound in Theorem~\ref{thm:upper}, shown schematically.
\emph{Top.} For a fixed shift $a$ (arrow), thin lines mark the cells $a+Lz+C_L$.
Selected balls (blue) carry $\Gamma_{z,m,a}$ and have their closed enlarged balls inside $C_{\ell-2\delta}$ (dashed).
Their shells of width $2\delta$ (red) carry $g_{r_m,\delta}$, while dotted balls are discarded.
Outside the selected enlarged balls, $\tilde\rho_{\ell,a}$ equals the smoothed background, supported in $C_\ell+B_\delta$ (light gray) and equal to $\rho_0$ on $C_{\ell-2\delta}$ (dark gray).
After averaging over $a\in C_L$, the density equals $\rho_0$ on $C_{\ell-2\delta'}$ (dash-dotted).
\emph{Bottom.} Radial profile of $\tilde\rho_{\ell,a}$ near a selected ball, with $\delta/r_m=1/4$.
It vanishes in the ball and in a gap of width $\delta/4$, then follows the shell profile $\rho_{r_m,\delta}$.
The charge removed near the ball (gray) is returned in the outer shell (red), so that $g_{r_m,\delta}$ has total charge zero.}
\label{fig:upper_construction}
\end{figure}

For each $m$, apply Lemma~\ref{lem:upper_collar} with $r=r_m$, which is allowed because $r_m\ge r_*>2\delta$. This gives the profile $g_{r_m,\delta}$, supported in $\overline{B_{r_m+2\delta}\setminus B_{r_m}}$ and of integral zero, and the density $\rho_{r_m,\delta}$. Define the density
\begin{equation}\label{eq:upper_expanded_fluid_density}
\begin{aligned}
\tilde\rho_{\ell,a}&=\rho_0\iden_{C_\ell}*\chi_\delta-
\rho_0\sum_{(z,m)\in J_0(a,\ell)}\iden_{B_{r_m}(c_m+Lz+a)}
+G_{\ell,a},\\
G_{\ell,a}&=\sum_{(z,m)\in J_0(a,\ell)}g_{r_m,\delta}(\cdot-c_m-Lz-a).
\end{aligned}
\end{equation}
Let $(z,m)\in J_0(a,\ell)$. Since $\rho_0\iden_{C_\ell}*\chi_\delta=\rho_0$ on $C_{\ell-2\delta}$ and the enlarged balls are disjoint, $\tilde\rho_{\ell,a}=0$ in the ball $B_{r_m}(c_m+Lz+a)$, and
\[
\tilde\rho_{\ell,a}=\rho_0+g_{r_m,\delta}(\cdot-c_m-Lz-a)=\rho_{r_m,\delta}(\cdot-c_m-Lz-a)
\]
in its shell $B_{r_m+2\delta}(c_m+Lz+a)\setminus B_{r_m}(c_m+Lz+a)$; see Figure~\ref{fig:upper_construction} (bottom). In fact, this profile is zero up to distance $\delta/4$ from the boundary of the ball, leaving a positive separation between the support of $\tilde\rho_{\ell,a}$ and the selected balls. Outside all enlarged balls, it equals $\rho_0\iden_{C_\ell}*\chi_\delta$. The shells are disjoint and lie in the plateau of $\rho_0\iden_{C_\ell}*\chi_\delta$. Thus these descriptions match smoothly at their internal interfaces and
\begin{equation}\label{eq:upper_fluid_mass_bound}
0\le \tilde\rho_{\ell,a}\le C\rho_0,
\qquad
\int_{\RR^3}\tilde\rho_{\ell,a}=\rho_0(\ell^3-V(a,\ell)).
\end{equation}

Applying Lemma~\ref{lem:basic_elementary}(ii) to the outer smoothing and adding the disjoint shell gradient integrals yields
\begin{equation}\label{eq:upper_fluid_gradient_bound}
\int_{\RR^3}|\nabla\sqrt{\tilde\rho_{\ell,a}}|^2
\le C\rho_0\ell^2/\delta+C\rho_0A(a,\ell)/\delta.
\end{equation}

Next, we construct a background state in between the ball states. Theorem~\ref{thm:basic_fluid} gives a grand-canonical state $\tilde\vGamma_{\ell,a}$ with density $\tilde\rho_{\ell,a}$ that satisfies~\eqref{eq:basic_fluid_bound} with $\varepsilon=1$. Since $\tilde\rho_{\ell,a}$ vanishes on the selected balls, $\tilde\vGamma_{\ell,a}$ is supported outside them. Using $\int_{\RR^3}\tilde\rho_{\ell,a}^{5/3}\le(C\rho_0)^{2/3}\int_{\RR^3}\tilde\rho_{\ell,a}$, \eqref{eq:upper_fluid_mass_bound} and~\eqref{eq:upper_fluid_gradient_bound}, its indirect energy satisfies
\begin{equation}\label{eq:upper_fluid_energy_expanded}
\begin{aligned}
\EC(\tilde\vGamma_{\ell,a})
&\le C\int_{\RR^3}\tilde\rho_{\ell,a}^{5/3}
+C\int_{\RR^3}|\nabla\sqrt{\tilde\rho_{\ell,a}}|^2\\
&\le C\rho_0^{5/3}(\ell^3-V(a,\ell))
+C\rho_0\delta^{-1}\big(\ell^2+A(a,\ell)\big).
\end{aligned}
\end{equation}
The constants $C$ here do not depend on the packing.

For $(z,m)\in J_0(a,\ell)$, the state $\Gamma_{z,m,a}$ is supported in $B_{r_m}(c_m+Lz+a)$, and $\tilde\vGamma_{\ell,a}$ is supported outside all selected balls. By taking the antisymmetric product of these states (Lemma~\ref{lem:basic_product}), we obtain
\[
\vGamma_{\ell,a}=\tilde\vGamma_{\ell,a}\wedge\bigwedge_{(z,m)\in J_0(a,\ell)}\Gamma_{z,m,a}.
\]
Its mean particle number is exactly $\rho_0\ell^3$. Indeed, by~\eqref{eq:basic_product_density}, \eqref{eq:upper_fluid_mass_bound}, the fact that $\Gamma_{z,m,a}$ has $\rho_0|B_{r_m}|$ particles, and by the definition of $V(a,\ell)$,
\begin{align*}
\int_{\RR^3}\rho_{\vGamma_{\ell,a}}
&=\int_{\RR^3}\tilde\rho_{\ell,a}+\sum_{(z,m)\in J_0(a,\ell)}\int_{\RR^3}\rho_{\Gamma_{z,m,a}}\\
&=\rho_0\big(\ell^3-V(a,\ell)\big)+\rho_0\sum_{(z,m)\in J_0(a,\ell)}|B_{r_m}|
=\rho_0\ell^3.
\end{align*}

Next we compute the Jellium energy of our state $\vGamma_{\ell,a}$. To apply~\eqref{eq:basic_product_J}, we split the background into one part for each factor of $\vGamma_{\ell,a}$,
\begin{align*}
\rho_0\iden_{C_\ell}*\chi_\delta
&=\Big(\rho_0\iden_{C_\ell}*\chi_\delta-\sum_{(z,m)\in J_0(a,\ell)}\rho_0\iden_{B_{r_m}(c_m+Lz+a)}\Big)\\
&\quad+\sum_{(z,m)\in J_0(a,\ell)}\rho_0\iden_{B_{r_m}(c_m+Lz+a)}.
\end{align*}
The first part is the background belonging to $\tilde\vGamma_{\ell,a}$, and $\rho_0\iden_{B_{r_m}(c_m+Lz+a)}$ is the background belonging to $\Gamma_{z,m,a}$. The first part is nonnegative, because the selected balls lie in $C_{\ell-2\delta}$, where $\rho_0\iden_{C_\ell}*\chi_\delta=\rho_0$. By~\eqref{eq:upper_expanded_fluid_density}, the density $\tilde\rho_{\ell,a}$ of $\tilde\vGamma_{\ell,a}$ differs from it by exactly $G_{\ell,a}$. Then \eqref{eq:basic_product_J} gives the exact equality
\begin{align*}
&\mathcal J_{\rho_0\iden_{C_\ell}*\chi_\delta}(\vGamma_{\ell,a})\\
&=\sum_{(z,m)\in J_0(a,\ell)}
\mathcal J_{\rho_0\iden_{B_{r_m}(c_m+Lz+a)}}(\Gamma_{z,m,a})
+\mathcal J_{\tilde\rho_{\ell,a}-G_{\ell,a}}(\tilde\vGamma_{\ell,a})\\
&+\sum_{\substack{(z,m),(z',m')\in J_0(a,\ell)\\(z,m)\ne(z',m')}}
D\big(\rho_{\Gamma_{z,m,a}}-\rho_0\iden_{B_{r_m}(c_m+Lz+a)},\rho_{\Gamma_{z',m',a}}-\rho_0\iden_{B_{r_{m'}}(c_{m'}+Lz'+a)}\big)\\
&\quad+2\sum_{(z,m)\in J_0(a,\ell)}D\big(\rho_{\Gamma_{z,m,a}}-\rho_0\iden_{B_{r_m}(c_m+Lz+a)},G_{\ell,a}\big)\\
&=\sum_{(z,m)\in J_0(a,\ell)}
\mathcal J_{\rho_0\iden_{B_{r_m}(c_m+Lz+a)}}(\Gamma_{z,m,a})
+\EC(\tilde\vGamma_{\ell,a})+D(G_{\ell,a}).
\end{align*}
For the second equality, \eqref{eq:conj_identity} gives $\mathcal J_{\tilde\rho_{\ell,a}-G_{\ell,a}}(\tilde\vGamma_{\ell,a})=\EC(\tilde\vGamma_{\ell,a})+D(G_{\ell,a})$, since $\tilde\rho_{\ell,a}$ is the density of $\tilde\vGamma_{\ell,a}$. Moreover, for $(z,m)\in J_0(a,\ell)$ the net charge $\rho_{\Gamma_{z,m,a}}-\rho_0\iden_{B_{r_m}(c_m+Lz+a)}$ of the selected ball has zero potential outside $B_{r_m}(c_m+Lz+a)$ by~\eqref{eq:upper_core_zero_potential}. Every other selected ball and every shell lies outside that ball, including the shell surrounding the same ball, so both sums of cross terms vanish. Similarly, each term $g_{r_m,\delta}(\cdot-c_m-Lz-a)$ of $G_{\ell,a}$ has zero potential outside $B_{r_m+2\delta}(c_m+Lz+a)$, and the enlarged balls are disjoint. Thus the cross terms in $D(G_{\ell,a})$ vanish, and by translation invariance
\[
D(G_{\ell,a})=\sum_{(z,m)\in J_0(a,\ell)}D\big(g_{r_m,\delta}(\cdot-c_m-Lz-a)\big)=\sum_{(z,m)\in J_0(a,\ell)}D(g_{r_m,\delta}).
\]
By translation invariance, each state $\Gamma_{z,m,a}$ has Jellium energy $E_\Jel(\rho_0,B_{r_m})$. We obtain
\begin{equation}\label{eq:upper_exact_identity}
\mathcal J_{\rho_0\iden_{C_\ell}*\chi_\delta}(\vGamma_{\ell,a})
=\sum_{(z,m)\in J_0(a,\ell)}
E_\Jel(\rho_0,B_{r_m})
+\EC(\tilde\vGamma_{\ell,a})
+\sum_{(z,m)\in J_0(a,\ell)}D(g_{r_m,\delta}).
\end{equation}
Note that the background $\rho_0\iden_{C_\ell}*\chi_\delta$ in this identity is the same for every $a$.
Using \eqref{eq:upper_core_energy}, \eqref{eq:upper_collar_costs} and \eqref{eq:upper_fluid_energy_expanded} gives
\begin{equation}\label{eq:upper_prelimit_estimate}
\begin{aligned}
\mathcal J_{\rho_0\iden_{C_\ell}*\chi_\delta}(\vGamma_{\ell,a})
&\le(e_\Jel(\rho_0)+\varepsilon)V(a,\ell)
+C\rho_0^{5/3}(\ell^3-V(a,\ell))\\
&\quad+C\left(\rho_0\delta^{-1}+\rho_0^2\delta^3\right)A(a,\ell)+C\rho_0\delta^{-1}\ell^2,
\end{aligned}
\end{equation}
where the constants $C$ do not depend on the packing. We now let $\ell\to\infty$ with the packing held fixed, as the error term in~\eqref{eq:upper_explicit_counting} depends on $L$. By~\eqref{eq:upper_explicit_counting}, the terms of order $\ell^2$ disappear after division by $\ell^3$. We arrive at
\begin{equation}\label{eq:upper_fixed_packing}
\limsup_{\ell\to\infty}\sup_{a\in C_L}
\frac{\mathcal J_{\rho_0\iden_{C_\ell}*\chi_\delta}(\vGamma_{\ell,a})}{\ell^3}
\le\theta(e_\Jel(\rho_0)+\varepsilon)+C\rho_0^{5/3}(1-\theta)
+\frac{C}{r_*}\left(\rho_0\delta^{-1}+\rho_0^2\delta^3\right).
\end{equation}

Define
\[
\overline{\vGamma}_\ell=\frac1{L^3}\int_{C_L}\vGamma_{\ell,a}\,\mathrm da.
\]
The average is well defined. In fact, for fixed $\ell$ and $L$, only finitely many pairs $(z,m)\in\ZZ^3\times\{1,\dots,M\}$ belong to $J_0(a,\ell)$ for some $a\in C_L$, since the corresponding balls lie in $C_\ell$ and $a$ ranges over the bounded set $C_L$. For each such pair, the set of $a\in C_L$ with $(z,m)\in J_0(a,\ell)$ is the intersection of $C_L$ with an open cube. Hence $C_L$ splits into finitely many measurable sets and on each $J_0(a,\ell)$ is constant. On each of these sets, translation is continuous for the density matrices $\Gamma_{z,m,a}$ in trace norm and for $\tilde\rho_{\ell,a}$ in $L^1$. We choose the states $\tilde\vGamma_{\ell,a}$ using the explicit quasi-free construction in~\cite[proof of Thm.~8 and Sec.~3.3]{lewin2020local}, which depends measurably on $\tilde\rho_{\ell,a}$. Thus the full family is measurable. Their kinetic energies are uniformly bounded for fixed $\ell,L$. Thus the averaged state has trace one and finite kinetic energy.

Before averaging, \eqref{eq:basic_product_density} and~\eqref{eq:upper_expanded_fluid_density} give its density as
\begin{equation}\label{eq:upper_density_before_average}
\begin{aligned}
\rho_{\vGamma_{\ell,a}}
&=\tilde\rho_{\ell,a}+\sum_{(z,m)\in J_0(a,\ell)}\rho_{\Gamma_{r_m}}(\cdot-c_m-Lz-a)\\
&=\rho_0\iden_{C_\ell}*\chi_\delta+
\sum_{(z,m)\in J_0(a,\ell)}(\rho_{\Gamma_{r_m}}-\rho_0\iden_{B_{r_m}}+g_{r_m,\delta})(\cdot-c_m-Lz-a).
\end{aligned}
\end{equation}
Set $\delta'=5\delta+2\max_{1\le m\le M}r_m$.
Fix $x\in C_\ell$ with $d(x,\partial C_\ell)>\delta'$. For every $a\in C_L$, the translated terms $(\rho_{\Gamma_{r_m}}-\rho_0\iden_{B_{r_m}}+g_{r_m,\delta})(\cdot-c_m-Lz-a)$ with $(z,m)\notin J_0(a,\ell)$ vanish at $x$: such a term is supported in its closed enlarged ball, which has diameter $2r_m+4\delta\le\delta'-\delta$ and would therefore lie in $C_{\ell-2\delta}$ if it contained $x$. Hence, at $x$, the second sum in~\eqref{eq:upper_density_before_average} may be taken over all $(z,m)\in\ZZ^3\times\{1,\dots,M\}$.

Now fix $m=1,\ldots,M$. As $a$ ranges over $C_L$ and $z$ over $\ZZ^3$, the point $Lz+a$ covers $\RR^3$ exactly once up to a null set, because the cells $C_L+Lz$ tile $\RR^3$. The change of variables $y=x-c_m-Lz-a$ therefore gives
\begin{align*}
&\frac1{L^3}\int_{C_L}\sum_{z\in\ZZ^3}
(\rho_{\Gamma_{r_m}}-\rho_0\iden_{B_{r_m}}+g_{r_m,\delta})(x-c_m-Lz-a)\,\mathrm da\\
&\qquad=\frac1{L^3}\int_{\RR^3}(\rho_{\Gamma_{r_m}}-\rho_0\iden_{B_{r_m}}+g_{r_m,\delta})(y)\,\mathrm dy
=0,
\end{align*}
because the charge distribution $\rho_{\Gamma_{r_m}}-\rho_0\iden_{B_{r_m}}$ of the ball and the shell profile $g_{r_m,\delta}$ both have integral zero. The density of $\overline{\vGamma}_\ell$ is the average over $a\in C_L$ of~\eqref{eq:upper_density_before_average}, and $(\rho_0\iden_{C_\ell}*\chi_\delta)(x)=\rho_0$ because $x\in C_{\ell-2\delta}$. We conclude that
\begin{equation}\label{eq:upper_density_bulk_exact}
\rho_{\overline{\vGamma}_\ell}(x)=\rho_0
\quad\hbox{whenever }x\in C_\ell\hbox{ and }d(x,\partial C_\ell)>\delta'.
\end{equation}

There is also a pointwise bound on $\rho_{\overline{\vGamma}_\ell}$ that is independent of the packing. By the first line of~\eqref{eq:upper_density_before_average}, $\rho_{\overline{\vGamma}_\ell}$ is the average over $a\in C_L$ of $\tilde\rho_{\ell,a}$ plus the sum of the densities $\rho_{\Gamma_{z,m,a}}$. The density $\tilde\rho_{\ell,a}$ is at most $C\rho_0$ by~\eqref{eq:upper_fluid_mass_bound}. In the average of this sum we may extend the sum to all $(z,m)\in\ZZ^3\times\{1,\dots,M\}$, since the added terms are nonnegative. The same change of variables as above then gives, for every $x\in\RR^3$,
\begin{align*}
&\frac1{L^3}\int_{C_L}\sum_{(z,m)\in J_0(a,\ell)}
\rho_{\Gamma_{r_m}}(x-c_m-Lz-a)\,\mathrm da\\
&\qquad\le\frac1{L^3}\sum_{m=1}^M\int_{C_L}\sum_{z\in\ZZ^3}\rho_{\Gamma_{r_m}}(x-c_m-Lz-a)\,\mathrm da\\
&\qquad=\frac1{L^3}\sum_{m=1}^M\int_{\RR^3}\rho_{\Gamma_{r_m}}
=\frac{\rho_0}{L^3}\sum_{m=1}^M|B_{r_m}|
=\theta\rho_0,
\end{align*}
because $\Gamma_{r_m}$ has $\rho_0|B_{r_m}|$ particles. Since $\theta\le1$, we conclude that
\begin{equation}\label{eq:upper_density_uniform_bound}
0\le \rho_{\overline{\vGamma}_\ell}\le(C+1)\rho_0\iden_{C_\ell+B_\delta}.
\end{equation}
The averaged density $\rho_{\overline{\vGamma}_\ell}$ vanishes outside $C_\ell+B_\delta$. Indeed, for every $a$, all selected enlarged balls lie in $C_{\ell-2\delta}$, and outside these balls the density $\rho_{\vGamma_{\ell,a}}$ equals $\rho_0\iden_{C_\ell}*\chi_\delta$, which vanishes outside $C_\ell+B_\delta$. Averaging over $a$ preserves this property. Finite kinetic energy of $\overline{\vGamma}_\ell$ gives $\sqrt{\rho_{\overline{\vGamma}_\ell}}\in H^1(\RR^3)$ by the Hoffmann--Ostenhof inequality~\cite{hoffmann1977schrodinger}.

All hypotheses of Lemma~\ref{lem:upper_bounded_layer} hold for $\rho_{\overline{\vGamma}_\ell}$, with the layer width $\delta'$, which does not depend on $\ell$, and with the packing-independent constant $\Lambda=C+1$ as the density bound. Therefore
\[
e_\UEG(\rho_0)
\le\liminf_{\ell\to\infty}\frac{E(\rho_{\overline{\vGamma}_\ell})}{\ell^3}.
\]
On the other hand, the averaged state $\overline{\vGamma}_\ell$ is a competitor for $E(\rho_{\overline{\vGamma}_\ell})$, and~\eqref{eq:conj_identity} with the fixed background $\zeta=\rho_0\iden_{C_\ell}*\chi_\delta$ gives
\begin{align*}
E(\rho_{\overline{\vGamma}_\ell})
&\le\EC(\overline{\vGamma}_\ell)\\
&=\mathcal J_{\rho_0\iden_{C_\ell}*\chi_\delta}(\overline{\vGamma}_\ell)
-D(\rho_{\overline{\vGamma}_\ell}-\rho_0\iden_{C_\ell}*\chi_\delta)\\
&\le\frac1{L^3}\int_{C_L}\mathcal J_{\rho_0\iden_{C_\ell}*\chi_\delta}(\vGamma_{\ell,a})\,\mathrm da.
\end{align*}
Combining the bound from Lemma~\ref{lem:upper_bounded_layer} with the preceding estimate and~\eqref{eq:upper_fixed_packing} yields
\begin{equation}\label{eq:upper_quantitative_error}
\begin{aligned}
e_\UEG(\rho_0)
&\le\theta(e_\Jel(\rho_0)+\varepsilon)+C\rho_0^{5/3}(1-\theta)+\frac{C}{r_*}\left(\rho_0\delta^{-1}+\rho_0^2\delta^3\right)\\
&\le e_\Jel(\rho_0)+
\varepsilon\bigl(1+e_\Jel(\rho_0)_-+C\rho_0^{5/3}\bigr)
+\frac{C}{r_*}\left(\rho_0\delta^{-1}+\rho_0^2\delta^3\right).
\end{aligned}
\end{equation}
For the second inequality, write $e_\Jel(\rho_0)=e_\Jel(\rho_0)_+-e_\Jel(\rho_0)_-$ with $e_\Jel(\rho_0)_\pm=\max\{\pm e_\Jel(\rho_0),0\}$, and use $1-\varepsilon\le\theta\le1$, which gives $\theta e_\Jel(\rho_0)_+\le e_\Jel(\rho_0)_+$ and $-\theta e_\Jel(\rho_0)_-\le-(1-\varepsilon)e_\Jel(\rho_0)_-$.

To finish, keep $\delta$ fixed. Estimate~\eqref{eq:upper_quantitative_error} holds for every $\varepsilon\in(0,1)$ and every $r_*$ large enough for~\eqref{eq:upper_core_energy}. Choosing $r_*$ also so large that $C(\rho_0\delta^{-1}+\rho_0^2\delta^3)/r_*\le\varepsilon$ and letting $\varepsilon\downarrow0$ proves the theorem.
\end{proof}

\appendix

\section{Proof sketch for the thermodynamic limit of Jellium}\label{sec:LN_proof}

We sketch the proof of Theorem~\ref{thm:LN_main}(i) at $T=0$. The arguments of~\cite{lieb1975thermodynamic} are given at positive temperature, and they carry over to the zero temperature case when Gibbs states are replaced by ground states, with one extra observation. Newton's theorem requires the states in the packed balls to have radial densities. A ground state $\Psi$ in a ball, which may be degenerate, can be replaced by its rotation average $\int_{\SO(3)}R|\Psi\rangle\langle\Psi|R^*\,\mathrm{d}R$, which has the same energy and a radial density. The method is the one Lieb and Lebowitz~\cite{lieb1972constitution} introduced for real matter: pack large balls with smaller neutral balls and use Newton's theorem to remove the interactions between them. The new difficulty for Jellium is that the background is rigid. The part of a large ball not covered by the packing cannot be left empty, since its uncompensated background charge would cost an electrostatic energy which is not $o(|\mathcal B_K|)$~\cite[Sec.~3.1]{lieb1975thermodynamic}. One therefore has to place the right number of particles there and control their energy.

\medskip

\noindent\textbf{Step 0: Newton's theorem.} By Theorem~\ref{thm:basic_newton}(i), if $\tau$ is a neutral radial charge distribution supported in a closed ball $B$, then $\int_{\RR^3}\tau(y)|x-y|^{-1}\,\mathrm dy=0$ for $x\notin B$, so $D(\tau,\tau')=0$ for every $\tau'$ supported outside $B$. In particular, if $\Gamma$ is a state of $\rho_0|B|$ particles in $B$ with radial density $\rho_\Gamma$, the total charge $\rho_\Gamma-\rho_0\iden_B$ does not interact with anything outside $B$.

\medskip

\noindent\textbf{Step 1: Swiss-cheese packing.} See Figure~\ref{fig:LN_packing}.

\begin{lemma}[{\cite[Secs.~III--IV]{lieb1972constitution}}, {\cite[Secs.~3.1 and~5.1]{lieb1975thermodynamic}}]\label{lem:LN_packing}
Let $p=26$ and $\gamma:=p/(1+p)=26/27$, and let $\mathcal B_k$, $k\ge0$, be the balls of radii $R_k=R_0\,27^k$ with $\rho_0|\mathcal B_0|=1$, so that $N_k:=\rho_0|\mathcal B_k|=27^{3k}$. For each $K\ge1$ the ball $\mathcal B_K$ contains pairwise disjoint balls, namely $n_{K-j}:=p^{K-j-1}(1+p)^{2(K-j)}$ translates of $\mathcal B_j$ for each $0\le j\le K-1$. The uncovered set $\mathcal D_K$ satisfies $|\mathcal D_K|=\gamma^K|\mathcal B_K|$, so that $M_K:=\rho_0|\mathcal D_K|=26^K\,27^{2K}\in\NN$.

For the quantum problem we use the radii $R_k=R_0'(27^k-\frac12)$, still with $\rho_0|\mathcal B_0|=1$, so that $N_k=(2\cdot27^k-1)^3\in\NN$. With the same multiplicities, the packing can then be chosen so that each packed ball is at distance at least $4h$ from $\partial \mathcal B_K$ and at least $8h$ from the other packed balls, where $h:=R_0'(1-\frac1{27})/8$. For this packing,
\[
M_K:=\rho_0|\mathcal D_K|=N_K-\sum_{j=0}^{K-1}n_{K-j}N_j\in\NN,
\]
and $|\mathcal D_K|/|\mathcal B_K|=\mathcal O(\gamma^K)$.
\end{lemma}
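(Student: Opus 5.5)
The statement has two parts: arithmetic identities for the multiplicities, and the geometric existence of the packing. I would do the arithmetic first, since it is exact.

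\emph{Arithmetic.} With $|\mathcal B_j|/|\mathcal B_K|=27^{-3(K-j)}$ and $i=K-j$, the packed volume fraction is
\[
\sum_{i=1}^K26^{i-1}27^{2i}\,27^{-3i}=\frac1{27}\sum_{i=1}^K\Big(\frac{26}{27}\Big)^{i-1}=1-\gamma^K .
\]
Hence $|\mathcal D_K|=\gamma^K|\mathcal B_K|$ and $M_K=27^{3K}\gamma^K=26^K27^{2K}\in\NN$.

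In the quantum version, each $N_k=\rho_0|\mathcal B_k|=(2\cdot27^k-1)^3$ is an integer, because $\rho_0|\mathcal B_0|=1$ fixes $\rho_0\frac{4\pi}3(R_0'/2)^3=1$. Therefore $M_K=N_K-\sum_j n_{K-j}N_j$ is an integer. Since $N_k=27^{3k}\,8(1-\tfrac12 27^{-k})^3$, the same geometric sum gives $|\mathcal D_K|/|\mathcal B_K|=\gamma^K+\mathcal O\big(\sum_i(26/27)^{i}27^{-(K-i)}\big)=\mathcal O(\gamma^K)$.

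\emph{Geometry.} I would place the balls greedily from largest to smallest: stage $i=1,\dots,K$ places $n_i$ copies of $\mathcal B_{K-i}$. Before stage $i$ the uncovered volume is $\gamma^{i-1}|\mathcal B_K|$, and stage $i$ needs only the fraction $1/27$ of it. To find room, fix the cubic lattice of cubes of side $2R_{K-i}$, enlarged by the required margins $8h$ in the quantum case. Inscribe a ball in every lattice cube that lies inside $\mathcal B_K$ at distance at least $4h$ from the boundary and at distance at least $8h$ from every previously placed ball. A cube of side $2R$ is filled by its inscribed ball to the fraction $\pi/6>1/27$. So it suffices to show that the cubes discarded because they meet the boundary of $\mathcal B_K$ or an earlier ball cover only a small part of the uncovered volume. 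The cubes meeting $\partial\mathcal B_K$ have total volume at most $CR_{K-i}|\partial\mathcal B_K|\le C\,27^{-i}|\mathcal B_K|$. The cubes meeting earlier balls of radius $R_{K-i'}$, $i'<i$, lie in a collar of width $CR_{K-i}$ around those balls. The collar volume is bounded by the number $n_{i'}$ of such balls times their surface area times $CR_{K-i}$, which is
\[
C\,27^{-(i-i')}\,n_{i'}|\mathcal B_{K-i'}|\le C\,27^{-(i-i')}(26/27)^{i'-1}|\mathcal B_K|.
\]
Summing over $i'<i$ gives at most $C\,(26/27)^{i}\,27^{-1}\cdot\frac{27}{26}\sum_{k\ge1}(26/27)^{-k}27^{-k}$, which is a small constant multiple $c\,\gamma^{i-1}|\mathcal B_K|$ with $c$ explicit. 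The usable cubes thus have total volume at least $(1-c')\gamma^{i-1}|\mathcal B_K|$, and their inscribed balls give at least $\frac{\pi}6(1-c')\gamma^{i-1}|\mathcal B_K|\ge\frac1{27}\gamma^{i-1}|\mathcal B_K|$ of room, which is enough for $n_i$ balls. Balls inscribed in distinct cubes are automatically disjoint.

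\emph{Main obstacle.} The main difficulty is getting the constants right in this collar estimate. It requires $c'<1-\frac{6}{27\pi}$, and this is also where the choice $p=26$ matters. If the crude bound fails for small $i-i'$, I would follow Lieb--Lebowitz: first reserve for the $\mathcal B_{K-i}$ balls a region consisting of whole cubes of a coarser, nested lattice that is disjoint from earlier balls. This works because the radii ratio $27$ makes each lattice a refinement of the previous one. In the quantum case the margins $4h$ and $8h$ are at most a fixed fraction of $R_0'$, since $h=R_0'\frac{26}{27}/8$. They therefore enlarge the effective cube side by a factor at most $1+\mathcal O(1)$ only at the smallest scales, and are absorbed in the same counting, possibly after using the lattice of side $2R_{K-i}+8h$.
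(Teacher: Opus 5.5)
Your arithmetic is correct, and your greedy cube-lattice construction does prove the classical packing. The paper gives no proof of this lemma; it quotes it from Lieb--Lebowitz and Lieb--Narnhofer. The Lieb--Lebowitz cheese theorem is proved by this kind of cube counting, so for the classical part you are on the standard route.

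You can drop the hedge about constants. With cubes of side $2R_{K-i}$ the discarded collars have width $2\sqrt3R_{K-i}$.
\begin{itemize}
\item The boundary collar costs at most $6\sqrt3\cdot27^{-i}|\mathcal B_K|=\frac{6\sqrt3}{27}26^{-(i-1)}\gamma^{i-1}|\mathcal B_K|$. This is about $0.39\,\gamma^{i-1}|\mathcal B_K|$ at $i=1$, when there are no earlier balls, and under $0.015\,\gamma^{i-1}|\mathcal B_K|$ for $i\ge2$.
\item The collars around earlier balls cost about $0.02\,\gamma^{i-1}|\mathcal B_K|$ in total.
\end{itemize}
So $c'<0.4$, far below $1-\frac{6}{27\pi}\approx0.93$, and the nested-lattice fallback is unnecessary.

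The quantum step is where your write-up is too loose. Saying that the margins enlarge the cube side by a factor $1+\mathcal O(1)$ and are ``absorbed'' is not an argument. At the smallest scale, $2R_0+8h=\frac{53}{27}R_0'\approx1.96\cdot2R_0$, which by itself reduces the filling fraction $\pi/6$ by a factor of about $7.6$, so the claim needs an actual estimate.

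The reason it works is built into the odd-looking choice of $R_k$ and $h$. Since $4h=\frac{13}{27}R_0'$,
\[
R_{K-i}+4h=R_0'\Big(27^{K-i}-\frac1{54}\Big)\le 27^{-i}R_K\qquad(1\le i\le K),
\]
with equality for $i=1$. Hence you can take the classical packing of a ball of radius $R_K$ by $n_i$ balls of radius $27^{-i}R_K$, and shrink each packed ball concentrically to radius $R_{K-i}$. The shrunk balls are then automatically at distance at least $4h$ from $\partial\mathcal B_K$ and at least $8h$ from each other.

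In your language, the lattice side $2R_{K-i}+8h$ never exceeds the classical side $2\cdot27^{-i}R_K$. So the classical count applies verbatim and no new estimate is needed. This is evidently why $R_k$ and $h$ are chosen as in the statement, with the same multiplicities as in the classical packing.
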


\begin{figure}[ht]
\centering
\begin{tikzpicture}[scale=0.85,>=stealth]
  \def\h{0.12}
  \fill[gray!25] (0,0) circle (3);
  \foreach \r/\centers in {
      1.0/{0.00/1.48,-1.28/-0.74,1.28/-0.74},
      0.44/{-1.68/1.73,-2.25/0.66,-0.42/-2.21,1.86/1.55},
      0.19/{2.54/-0.02,2.52/0.90,1.12/-2.18,1.30/0.71,-1.26/0.71,-2.64/-0.22,-1.48/-2.22,1.26/2.21,0.48/-2.20,-0.01/0.03,1.92/0.57,-1.10/2.42}}
    \foreach \x/\y in \centers {
      \draw[densely dashed,gray!70!black,thin] (\x,\y) circle ({\r+\h});
      \fill[white] (\x,\y) circle (\r);
      \draw (\x,\y) circle (\r);
    }
  \draw[thick] (0,0) circle (3);
  \node at (0,1.48) {$\mathcal B_{K-1}$};
  \draw[->] (3.7,2.25) node[right] {$\mathcal B_{K-2}$} -- (2.24,1.78);
  \draw[->] (3.7,-0.6) node[right] {$\mathcal B_{K-3}$} -- (2.71,-0.09);
  \draw[->] (3.7,-1.9) node[right] {$\mathcal D_K$} -- (2.2,-1.83);
  \node at (140:3.5) {$\mathcal B_K$};
\end{tikzpicture}
\caption{Schematic cross-section of the Swiss-cheese packing of Lemma~\ref{lem:LN_packing}, not to scale. The ball $\mathcal B_K$ contains pairwise disjoint translates of $\mathcal B_{K-1},\mathcal B_{K-2},\dots,\mathcal B_0$, of which three generations are shown. The uncovered set $\mathcal D_K$ is shaded. In the quantum version, the dashed shells of width $4h$ around the packed balls are pairwise disjoint and contained in $\mathcal B_K$. Equivalently, each packed ball is at distance at least $4h$ from $\partial \mathcal B_K$ and at least $8h$ from the other packed balls.}
\label{fig:LN_packing}
\end{figure}

\noindent\textbf{Step 2: The basic inequality for balls.} Let $\Gamma_j$ be rotation-averaged ground states in $\mathcal B_j$ and let $\Psi_K$ be the trial state on $\mathcal D_K$ from Lemma~\ref{lem:LN_trial} below. The antisymmetrized tensor product of copies of the $\Gamma_j$, placed in the packed balls, with $\Psi_K$ is an admissible state of $N_K$ particles in $\mathcal B_K$, since the supports are disjoint and lie in $\mathcal B_K$. By Step~0 all the cross interactions vanish, because the packed balls are neutral with radial charge. Hence
\[
E_\Jel(\rho_0,\mathcal B_K)\le\sum_{j=0}^{K-1}n_{K-j}\,E_\Jel(\rho_0,\mathcal B_j)+\big\langle\Psi_K,H^{\rho_0}_{\mathcal D_K}\Psi_K\big\rangle.
\]
With $e_k:=E_\Jel(\rho_0,\mathcal B_k)/|\mathcal B_k|$ and $w_{K,j}:=n_{K-j}|\mathcal B_j|/|\mathcal B_K|$ (which equals $p^{-1}\gamma^{K-j}$ when $R_k=R_0\,27^k$), this reads
\begin{equation}\label{eq:LN_recursion}
e_K\le\sum_{j=0}^{K-1}w_{K,j}\,e_j+\frac{\langle\Psi_K,H^{\rho_0}_{\mathcal D_K}\Psi_K\rangle}{|\mathcal B_K|},\qquad\sum_{j=0}^{K-1}w_{K,j}=1-\frac{|\mathcal D_K|}{|\mathcal B_K|}.
\end{equation}

\noindent\textbf{Step 3: The leftover region.} Classically, let $M=\rho_0|\mathcal D_K|$. Writing $Y=(y_1,\ldots,y_M)$ and $\mathrm dY=\mathrm dy_1\cdots\mathrm dy_M$, integration against the uniform probability measure on $\mathcal D_K^M$ gives
\[
\frac1{|\mathcal D_K|^M}\int_{\mathcal D_K^M}U^{\rho_0}_{\mathcal D_K}(Y)\,\mathrm dY
=-\frac{\rho_0}{|\mathcal D_K|}\,D(\iden_{\mathcal D_K})\le0
\]
\cite[Eq.~(11)]{lieb1975thermodynamic}. Consequently, there exists a configuration $Y\in\mathcal D_K^M$ such that $U^{\rho_0}_{\mathcal D_K}(Y)\le0$. A substitute in the quantum case is provided by
\begin{lemma}[Trial state on the leftover region {\cite[Sec.~5.1]{lieb1975thermodynamic}}]\label{lem:LN_trial}
For the quantum packing of Lemma~\ref{lem:LN_packing} there exist $C=C(\rho_0)$, independent of $K$, and normalized states $\Psi_K\in\bigwedge^{M_K}H^1_0(\mathcal D_K;\CC^q)$ with
\[
\big\langle\Psi_K,H^{\rho_0}_{\mathcal D_K}\Psi_K\big\rangle\le C\,|\mathcal D_K|.
\]
\end{lemma}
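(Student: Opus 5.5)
The plan is to build $\Psi_K$ from the same ingredients the paper uses for the upper bound: a smooth density on $\mathcal D_K$ vanishing near its boundary, a quasi-free state realizing it via Theorem~\ref{thm:basic_fluid}, and then a projection to exactly $M_K$ particles. Concretely, set $\mathcal D_K^{h}:=\{x\in\mathcal D_K:\ d(x,\partial\mathcal D_K)>h\}$. By Lemma~\ref{lem:LN_packing}, $\mathcal D_K$ is $\mathcal B_K$ minus balls that are $8h$-separated from one another and $4h$-away from $\partial\mathcal B_K$, so its boundary consists of spheres. I would define a density $\rho_K$ using the radial collar profile of Lemma~\ref{lem:upper_collar}, adapted to width $h$: near each packed sphere take $\rho_{r,h/2}$-type profiles, and a mirrored version inside $\partial\mathcal B_K$. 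These should satisfy
\begin{itemize}
\item $0\le\rho_K\le C\rho_0$,
\item $\rho_K$ vanishes within distance $h/8$ of $\partial\mathcal D_K$,
\item $\rho_K=\rho_0$ away from the collars,
\item each collar charge $g$ is radial and neutral, so that $\int\rho_K=\rho_0|\mathcal D_K|=M_K$ exactly.
\end{itemize}
The gradient and Coulomb bounds of Lemma~\ref{lem:upper_collar} give $\int|\nabla\sqrt{\rho_K}|^2\le C\sum(\text{surface areas})/h$ and $D(\rho_K-\rho_0\iden_{\mathcal D_K})=\sum D(g)$. The cross terms vanish by Newton's theorem (Theorem~\ref{thm:basic_newton}(i)) since the collars are disjoint radial neutral shells; the outer collar is radial about the center of $\mathcal B_K$ and neutral, so it too has zero exterior potential and, by the same argument as (i) applied inside (potential constant inside a neutral shell's hole is not zero, so I would instead bound this one term directly by $C\rho_0^2R_K^2h^3$).

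The surface area of $\partial\mathcal D_K$ is $\sum_j n_{K-j}4\pi R_j^2+4\pi R_K^2$. Since $n_{K-j}R_j^3\sim\gamma^{K-j}R_K^3$ and $R_j\ge R_0'$, the area is at most $C|\mathcal B_K|\sum_j\gamma^{K-j}/R_j$. This is only $\mathcal O(|\mathcal B_K|)$, not $\mathcal O(|\mathcal D_K|)$, since the smallest balls dominate. So I would check the geometry more carefully: $n_K\cdot R_0^2\sim 26^{K}27^{2K}\sim M_K$, which is comparable to $|\mathcal D_K|$. Indeed $\sum_j n_{K-j}R_j^2\sim\sum_j 26^{K-j}27^{2K-2j}27^{2j}=27^{2K}\sum 26^{K-j}\le C\,26^K27^{2K}\sim|\mathcal D_K|$. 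Hence all surface terms are $\mathcal O(|\mathcal D_K|)$, with $h$ fixed.

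Theorem~\ref{thm:basic_fluid} then yields a quasi-free grand-canonical $\vGamma_K$ with $\rho_{\vGamma_K}=\rho_K$ and $\mathcal T(\vGamma_K)\le C|\mathcal D_K|$. Then \eqref{eq:conj_identity} gives $\mathcal J_{\rho_0\iden_{\mathcal D_K}}(\vGamma_K)\le\mathcal T+\mathcal C-D(\rho_K)+D(\rho_K-\rho_0\iden_{\mathcal D_K})$, and $\mathcal C-D(\rho_K)\le$ exchange term $\le0$ for quasi-free states, so the total is $\le C|\mathcal D_K|$.

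The main obstacle is passing to exactly $M_K$ particles with Dirichlet conditions. Support in $\mathcal D_K$ with a margin already gives $H^1_0$. For canonicity, selecting a sector as in Theorem~\ref{thm:lower} gives some $N$ with energy $\le C|\mathcal D_K|$, but $N\ne M_K$ in general. I would first try to take $\vGamma_K$ to be a Slater determinant instead: choose $M_K$ orthonormal orbitals supported in $\mathcal D_K^{h/8}$, for instance from a lattice of disjoint small cubes of fixed size, each filled by its $q$ lowest Dirichlet modes, arranged so that the density matches $\rho_K$ on average. This gives kinetic energy $\mathcal O(M_K)$. The Coulomb part $D(\rho-\rho_0\iden)$ of such a cell-neutral configuration, with each cell having neutral charge, would be bounded by $C$ per cell via the Onsager-type estimate (Lemma~\ref{lem:basic_onsager}). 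Cells that cannot fit near the curved boundary would be handled with the collar mass redistribution, controlled again by the surface area, which is $\mathcal O(|\mathcal D_K|)$.
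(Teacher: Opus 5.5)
There is a genuine gap, and it is exactly at the step you call the main obstacle: you never produce a state with exactly $M_K$ particles. Your first route is sound as far as it goes. The collar density, your boundary-area count $\sum_j n_{K-j}R_j^2\lesssim 26^K27^{2K}\sim|\mathcal D_K|$, and Theorem~\ref{thm:basic_fluid} do give a grand-canonical $\vGamma_K$ supported in $\mathcal D_K$ with $\mathcal J_{\rho_0\iden_{\mathcal D_K}}(\vGamma_K)\le C|\mathcal D_K|$. (Your worry about the outer collar is unnecessary: its cross terms with the inner collars vanish by Theorem~\ref{thm:basic_newton}(i) applied to the inner ones.)

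But the lemma must deliver a canonical $M_K$-particle state. It is antisymmetrized with the canonical ball states to give a trial state for the $N_K$-particle problem in $\mathcal B_K$. Selecting a sector by affinity gives some $n\neq M_K$. You also cannot neutralize afterwards by enlarging the domain as in Lemma~\ref{lem:lower_completion}, since $\mathcal D_K$ is fixed.

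The Slater-determinant fallback does not repair this, for three reasons:
\begin{enumerate}[(1)]
\item Lemma~\ref{lem:basic_onsager} is a \emph{lower} bound on the classical energy, so it cannot bound $D(\rho_\Psi-\rho_0\iden_{\mathcal D_K})$ from above.
\item The cube charges are not radial, so Newton's theorem does not remove their mutual interactions; you would need an extra multipole-cancellation and summability argument.
\item Most seriously, the cells that do not fit near $\partial\mathcal D_K$ fill, by your own area count, a region of volume comparable to $|\mathcal D_K|$. There the uncompensated background is neither radial nor of integer charge sphere by sphere. \emph{Collar mass redistribution} does not say how to place an integer number of electrons so that this leftover charge is neutral and non-interacting, and that is the entire difficulty.
\end{enumerate}

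The missing idea is to impose neutrality only in expectation, at the classical level, as the paper does.
\begin{itemize}
\item Set $f=0$ on shells $S^i,S_K$ of width $h$ adjacent to $\partial\mathcal D_K$, $f=1+|S^i|/|T^i|$ on the next shells $T^i$ (similarly on $T_K$), and $f=1$ elsewhere, so that $\int f=|\mathcal D_K|$.
\item Average the classical energy over $M_K$ i.i.d.\ points with density $f/|\mathcal D_K|$. The mean is exactly $D(\rho_0(\iden_{\mathcal D_K}-f))-\frac{\rho_0}{|\mathcal D_K|}D(f)$.
\item Here $\rho_0(\iden_{\mathcal D_K}-f)$ is a sum of radial neutral double layers. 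Newton's theorem kills all cross terms, and the total is $\mathcal O(\rho_0^2h^2|\mathcal D_K|)$ because the shells lie in $\mathcal D_K$.
\item Add the penalty $8\pi^2\sum_{i<j}\iden_{\{|y_i-y_j|\le1\}}|y_i-y_j|^{-2}$, whose mean is $\mathcal O(M_K)$. Then fix a good configuration, which lies at distance $\ge h$ from $\partial\mathcal D_K$.
\item Place Dirichlet ground states of disjoint balls $B_{d_i}(y_i)$ around the points and take their Slater determinant. The kinetic energy is controlled by the penalty, Newton's theorem makes the pair interaction that of point charges, and the background attraction drops by at most $\xi\rho_0d_i^2$ per particle.
\end{itemize}
This yields exactly $M_K$ particles in $H^1_0(\mathcal D_K)$ with no integrality or radiality issue.
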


\begin{proof}[Sketch]

(a) \emph{Penalization.} For $Y=(y_1,\ldots,y_{M_K})\in\mathcal D_K^{M_K}$, define
\[
U'(Y):=U^{\rho_0}_{\mathcal D_K}(Y)
+8\pi^2\sum_{1\le i<j\le M_K}\frac{\iden_{\{|y_i-y_j|\le1\}}}{|y_i-y_j|^2}.
\]
The added term penalizes close pairs and is integrable on $\mathcal D_K^{M_K}$.

(b) \emph{Keeping away from $\partial\mathcal D_K$.} Around each packed ball $\mathcal B^i$ (radius $R^i$) take the shells $S^i$ and $T^i$ of radii $(R^i,R^i+h)$ and $(R^i+h,R^i+2h)$, and inside $\partial \mathcal B_K$ the shells $S_K$ and $T_K$ of radii $(R_K-h,R_K)$ and $(R_K-2h,R_K-h)$. By the gaps in Lemma~\ref{lem:LN_packing}, all of them are disjoint and contained in $\mathcal D_K$. Let $f:=0$ on the $S$-shells, $f:=1+|S^i|/|T^i|$ on $T^i$, $f:=1+|S_K|/|T_K|$ on $T_K$ and $f:=1$ on the rest of $\mathcal D_K$, and extend $f$ by zero outside $\mathcal D_K$. Then $\int_{\mathcal D_K}f=|\mathcal D_K|$ and $f\le3$.

(c) \emph{Averaging.} Integrate $U'$ over $\mathcal D_K^{M_K}$ against the product of the probability densities $f(y_i)/|\mathcal D_K|$. A direct computation gives
\[
\frac1{|\mathcal D_K|^{M_K}}\int_{\mathcal D_K^{M_K}}U^{\rho_0}_{\mathcal D_K}(Y)\prod_{i=1}^{M_K} f(y_i)\,\mathrm dY
=D\big(\rho_0(\iden_{\mathcal D_K}-f)\big)-\frac{\rho_0}{|\mathcal D_K|}D(f)
\]
\cite[Eq.~(49)]{lieb1975thermodynamic}. Here $\rho_0(\iden_{\mathcal D_K}-f)$ is a sum of neutral radial double layers $\tau^i$ (on $S^i\cup T^i$) and $\tau_K$ (on $S_K\cup T_K$). By Step~0 its self-energy is $\sum_iD(\tau^i)+D(\tau_K)\le C\rho_0^2h^2|\mathcal D_K|$, because each double layer has energy of order $\rho_0^2h^2$ times its volume and all the shells lie in $\mathcal D_K$. Moreover,
\[
\frac{8\pi^2}{|\mathcal D_K|^{M_K}}\int_{\mathcal D_K^{M_K}}
\Bigg(\sum_{1\le i<j\le M_K}\frac{\iden_{\{|y_i-y_j|\le1\}}}{|y_i-y_j|^2}\Bigg)
\prod_{i=1}^{M_K} f(y_i)\,\mathrm dY
\le C\rho_0M_K.
\]
Hence some $Y$ with all $y_i$ at distance $\ge h$ from $\partial\mathcal D_K$ satisfies $U'(Y)\le C|\mathcal D_K|$.

(d) \emph{Orbitals.} Let
\begin{align*}
d_i&:=\frac12\min\{2h,1,\min_{j\ne i}|y_i-y_j|\},\\
\varphi_i(x)&=(2\pi d_i)^{-1/2}|x-y_i|^{-1}\sin(\pi|x-y_i|/d_i)\,\iden_{B_{d_i}(y_i)}(x),
\end{align*}
where $\varphi_i$ is the Dirichlet ground state of the ball $B_{d_i}(y_i)\subset\mathcal D_K$. These balls are pairwise disjoint, so the Slater determinant $\Psi_K$ of the $\varphi_i$ (with any fixed spins) has the same energy as the Hartree product.

(e) \emph{Energy.} The kinetic energy satisfies
\begin{align*}
\sum_{i=1}^{M_K}\Big(\frac{\pi}{d_i}\Big)^2
&\le8\pi^2\sum_{1\le i<j\le M_K}\frac{\iden_{\{|y_i-y_j|\le1\}}}{|y_i-y_j|^2}\\
&\quad+4\pi^2\max\{(2h)^{-2},1\}\,M_K.
\end{align*}
By Newton's theorem the pair interaction equals that of the point charges $y_i$. The background attraction decreases, since $V_{\mathcal D_K}$ is superharmonic, but per particle by at most
\[
\rho_0\int_{\RR^3}\Big(\frac{1}{|z|}-\int_{\RR^3}\frac{|\varphi_i(y_i+y)|^2}{|z-y|}\,\mathrm{d}y\Big)\,\mathrm{d}z=\xi\rho_0d_i^2,
\]
for a universal constant $\xi$. Altogether $\langle\Psi_K,H^{\rho_0}_{\mathcal D_K}\Psi_K\rangle\le U'(Y)+C'M_K\le C|\mathcal D_K|$.
\end{proof}

\noindent\textbf{Step 4: Convergence for balls.} We use the following elementary lemma on sequences (cf.~\cite[Sec.~IV.D]{lieb1972constitution}).

\begin{lemma}\label{lem:LN_sequences}
Let $0<\gamma<1$ and let $(a_K)_{K\ge0}$ be a real sequence bounded below. Suppose that, for some $C\ge0$ and every $K\ge1$,
\[
a_K\le(1-\gamma)\sum_{j=0}^{K-1}\gamma^{K-1-j}a_j+C\gamma^K.
\]
Then $(a_K)$ converges to a finite limit.
\end{lemma}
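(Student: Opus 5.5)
The plan is to show first that $(a_K)$ is bounded above, so that $\limsup$ and $\liminf$ are finite, and then to prove $\limsup_K a_K\le\liminf_K a_K$ by exploiting the averaging structure of the recursion.

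\emph{Boundedness.} Write $w_{K,j}=(1-\gamma)\gamma^{K-1-j}$. These are nonnegative and satisfy $\sum_{j=0}^{K-1}w_{K,j}=1-\gamma^K\le1$. Let $b\ge0$ be a lower bound with $a_K\ge-b$ for all $K$, and put $A_K:=\max_{j\le K}a_j^+$. From the recursion, $a_K\le A_{K-1}+C\gamma^K$. Iterating gives $A_K\le A_0+C\sum_k\gamma^k<\infty$, so $a_K\le A:=A_0+C/(1-\gamma)$ for all $K$.

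\emph{Upper limit versus lower limit.} Let $\underline a=\liminf a_K$ and $\overline a=\limsup a_K$, both finite. Fix $\epsilon>0$ and choose $J$ with $a_j\le\overline a+\epsilon$ for $j\ge J$. Pick a subsequence $K_i\to\infty$ with $a_{K_i}\to\underline a$. The estimate $a_K\le\sum_jw_{K,j}a_j+C\gamma^K$ alone only gives $\limsup a_K\le\limsup a_K$, so a more quantitative argument is needed, and this is the main obstacle.

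My first attempt will use the renewal structure. Define $S_K:=\sum_{j<K}\gamma^{K-1-j}a_j$. Then $S_{K+1}=\gamma S_K+a_K$, and the hypothesis reads $a_K\le(1-\gamma)S_K+C\gamma^K$. Combining these gives $S_{K+1}\le S_K+C\gamma^K$, so $T_K:=S_K-C\sum_{k<K}\gamma^k$ is nonincreasing. Moreover $S_K\ge-b/(1-\gamma)$, so $T_K$ converges, and hence $S_K\to s$ for some finite $s$. This yields $\limsup a_K\le(1-\gamma)s$.

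From $a_K=S_{K+1}-\gamma S_K$ we then get $a_K\to s-\gamma s=(1-\gamma)s$. The sequence $a_K$ is exactly this difference, and both terms converge, so $a_K$ converges to the finite limit $(1-\gamma)s$.

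This actually dissolves the anticipated obstacle: convergence of $S_K$, obtained from near-monotonicity together with the lower bound, immediately gives convergence of $a_K$. The steps I will verify carefully are the identity $S_{K+1}=\gamma S_K+a_K$ and the lower bound on $S_K$, which follows from $a_j\ge-b$ and the geometric sum.
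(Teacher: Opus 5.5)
Your proof is correct and is essentially the paper's argument. Your monotone quantity $T_K=S_K-C\sum_{k<K}\gamma^k$ equals $(b_K-C)/(1-\gamma)$, where $b_K=(1-\gamma)S_K+C\gamma^K$ is the sequence the paper shows to be nonincreasing and bounded below. Your recovery $a_K=S_{K+1}-\gamma S_K$ is the paper's $a_K=(b_{K+1}-\gamma b_K)/(1-\gamma)$. The preliminary boundedness step and the $\limsup/\liminf$ discussion are not used and can be dropped.
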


\begin{proof}
Set
\[
b_K:=(1-\gamma)\sum_{j=0}^{K-1}\gamma^{K-1-j}a_j+C\gamma^K.
\]
For $K\ge1$, the hypothesis gives
\[
b_{K+1}=\gamma b_K+(1-\gamma)a_K\le b_K.
\]
Moreover, $b_K$ is bounded below, since it is a convex combination of $C,a_0,\ldots,a_{K-1}$. Thus $b_K\to b$ for some $b\in\RR$, and
$a_K=\frac{b_{K+1}-\gamma b_K}{1-\gamma}\to b$.
\end{proof}

By Lemma~\ref{lem:LN_trial} and Lemma~\ref{lem:LN_packing}, the last term in~\eqref{eq:LN_recursion} is at most $C\gamma^K$, while $e_k\ge-c_{\rm LN}\rho_0^{4/3}$ by Theorem~\ref{thm:LN_stability}. For the radii $R_k=R_0\,27^k$, $w_{K,j}=p^{-1}\gamma^{K-j}=(1-\gamma)\gamma^{K-1-j}$, so Lemma~\ref{lem:LN_sequences} applies directly to $a_K=e_K$.

For the quantum radii $R_k=R_0'(27^k-\frac12)$, put
\[
\alpha_K:=\left(1-\frac{1}{2\cdot27^K}\right)^3,
\qquad a_K:=\alpha_K e_K.
\]
The volume formula and the packing multiplicities give
\[
\alpha_K w_{K,j}=(1-\gamma)\gamma^{K-1-j}\alpha_j.
\]
Multiplying~\eqref{eq:LN_recursion} by $\alpha_K\le1$ therefore gives the inequality in Lemma~\ref{lem:LN_sequences} for $(a_K)$. Since $1/8\le\alpha_K\le1$, this sequence is bounded below and hence converges. Finally, $\alpha_K\to1$, so $e_K=a_K/\alpha_K$ has the same finite limit. This proves Theorem~\ref{thm:LN_main}(i).

\medskip

\begin{remark}[Neutral balls]\label{rem:LN_balls}
For every integer $N\ge1$, define $R_N$ by $\rho_0|B_{R_N}|=N$. The neutral balls $\Omega_N=B_{R_N}$ form an LN-admissible sequence. Conditions (A) and (B) are clear, and (i) holds because $|\Sigma^{+1}_{N,h}|\le|\Sigma^{+2}_{N,h}|$. For each fixed $h>0$, $R_N\ge4h$ holds for all sufficiently large $N$, so $|\Sigma^{-2}_{N,h}|>0$ on this tail. Moreover, $R_N\ge4h$ holds for every $N\ge1$ once $4h<R_1$, as required for the uniform estimate in~(ii). Under this condition the double layers are radial about the origin, neutral, bounded by $4$, and supported in $\{R_N\le|x|\le R_N+2h\}$ and $\{R_N-2h\le|x|\le R_N\}$, respectively. For $0<|x|\le R_N+2h$ put $m=\max\{|x|,R_N\}$. Newton's formula~\eqref{eq:basic_newton} and neutrality give
\[
\varphi_{N,h}(x)=4\pi\int_m^{R_N+2h}\sigma^+_{N,h}(s)\,s\Big(1-\frac sm\Big)\,\mathrm ds.
\]
Since $0\le s/m-1\le2h/R_N$ in the integral, $|\varphi_{N,h}(x)|\le16\pi h^2(1+2h/R_N)\le24\pi h^2$, which is (ii). Finally, the charge of either double layer inside the sphere of radius $t$ vanishes for $t$ outside an interval of length $2h$ contained in $[R_N/2,\infty)$, and it is at most $CR_N^2h$ in absolute value. Hence Theorem~\ref{thm:basic_newton}(iv) gives $D(\sigma^+_{N,h}),D(\sigma^-_{N,h})\le CR_N^2h^3=o(|\Omega_N|)$, which is (iii) and (iv).
\end{remark}

\section{Neutralization in a shell}\label{sec:canonical_completion}

We prove Lemma~\ref{lem:lower_completion}. The construction adapts the sampling and localized orbitals of~\cite[Sec.~5.1]{lieb1975thermodynamic}, recalled in Lemma~\ref{lem:LN_trial}, while retaining the interaction with the possibly charged ball $B_\ell$. Constants below may depend on $\rho_0$ and $C_0$, but not on $\ell$, $N$ or the state $\Gamma_N$.

We proceed in five steps.
Step~1 uses the Onsager bound and the hypothesis $\mathcal J_{\rho_0\iden_{B_\ell}}(\Gamma_N)\le C_0\ell^3$ to show that the excess background charge $Q=\rho_0|B_\ell|-N$ of the ball $B_\ell$ satisfies $|Q|\le K_0\ell^2$. Step~2 chooses $\ell'$ and places in the annulus $B_{\ell'}\setminus\overline{B_\ell}$ a background of density $\rho_0$ and $M-N$ particles with uniform density in a slightly thinner annulus. By Newton's theorem, the resulting radial shell charge costs at most $-Q^2/(2\ell')+C\ell^2$, including its interaction with the ball $B_\ell$. Step~3 averages over the positions of $M-N$ centers in this thinner annulus and selects a configuration whose classical energy, plus a short-range pair potential, obeys the same bound. Step~4 places disjointly supported orbitals around these centers, at an additional cost $\mathcal O(\ell^2)$. Step~5 takes the fermionic product with $\Gamma_N$.

\medskip

\noindent\textbf{Step 1: Control of the excess charge.}
Fix a configuration $X=(x_1,\dots,x_N)\in B_\ell^N$, let $U^{\rho_0}_{B_\ell}(X)$ be the classical Jellium energy~\eqref{eq:LN_U},
 and let
\[
\sigma_X=\sum_{j=1}^N\frac{\iden_{B_1(x_j)}}{|B_1|}-\rho_0\iden_{B_\ell}
\]
be the total charge of the configuration, with each particle smeared uniformly over a unit ball and the background kept as it is. Keeping the positive Coulomb term, Lemma~\ref{lem:basic_onsager} with $\rho=\rho_0\iden_{B_\ell}$ and $a=1$ gives
\begin{equation}\label{eq:completion_onsager}
U^{\rho_0}_{B_\ell}(X)\ge D(\sigma_X)-cN,
\qquad c=\frac35+\frac{2\pi}5\rho_0.
\end{equation}
The charge distribution $\sigma_X$ is supported in $\overline{B_{\ell+1}}$ and has total charge $-Q$, where $Q=\rho_0|B_\ell|-N$ is the excess background charge.
Every signed charge distribution $\sigma$ of finite Coulomb energy supported in $\overline{B_r}$ satisfies
\[
D(\sigma)\ge\frac{(\int_{\RR^3}\sigma)^2}{2r}.
\]
Indeed, if $\omega_r$ is the uniform probability measure on $\partial B_r$, its potential is $1/r$ throughout $B_r$ and $D(\omega_r)=1/(2r)$. Expanding $D(\sigma-(\int_{\RR^3}\sigma)\omega_r)\ge0$ proves the estimate. Then \eqref{eq:completion_onsager} becomes
\[
U^{\rho_0}_{B_\ell}(X)\ge\frac{Q^2}{2(\ell+1)}-cN
\qquad\text{for all }X\in B_\ell^N.
\]
The energy $\mathcal J_{\rho_0\iden_{B_\ell}}(\Gamma_N)$ is the expectation of the nonnegative kinetic energy plus $U^{\rho_0}_{B_\ell}$ with respect to $\Gamma_N$. Hence
\begin{equation}\label{eq:completion_charge_coercivity}
\mathcal J_{\rho_0\iden_{B_\ell}}(\Gamma_N)
\ge\frac{Q^2}{2(\ell+1)}-cN.
\end{equation}
Since the left side is at most $C_0\ell^3$ by hypothesis, $N=\rho_0|B_\ell|-Q\le\frac{4\pi}3\rho_0\ell^3+|Q|$, and $2(\ell+1)\le4\ell$ for $\ell\ge1$, this implies
\[
Q^2\le4\Big(C_0+\frac{4\pi}3c\rho_0\Big)\ell^4+4c\ell|Q|.
\]
Solving this quadratic inequality and using $\ell\le\ell^2$ gives
\begin{equation}\label{eq:completion_charge_bound}
|Q|\le K_0\ell^2,\qquad K_0:=4c+2\sqrt{C_0+\frac{4\pi}3c\rho_0}.
\end{equation}
Note that the constant $K_0$ depends only on $\rho_0$ and $C_0$.

\begin{figure}[ht]
\centering
\begin{tikzpicture}[scale=0.9,>=stealth]
  \fill[gray!12,even odd rule] (0,0) circle (3.3) (0,0) circle (2.0);
  \fill[gray!35,even odd rule] (0,0) circle (3.04) (0,0) circle (2.26);
  \fill[pattern=north east lines,pattern color=gray!70] (0,0) circle (2.0);
  \draw[thick] (0,0) circle (2.0);
  \draw[thick] (0,0) circle (3.3);
  \draw[densely dashed] (0,0) circle (2.26);
  \draw[densely dashed] (0,0) circle (3.04);
  \foreach \x/\y in {-2.00/-1.95,0.82/-2.80,-0.18/-2.90,1.96/-1.41,-2.45/0.48,-2.59/-0.73,-0.53/2.85,0.24/-2.43,0.71/-2.33,0.62/2.40,-1.58/1.88,1.46/1.89,-1.06/2.42,1.23/2.65,2.18/1.76,1.70/2.03}
    \fill (\x,\y) circle (0.045);
  \draw[decorate,decoration={brace,amplitude=4pt}] (2.26,0.07) -- (3.04,0.07) node[midway,above=4pt] {$\mathcal A'$};
  \draw[decorate,decoration={brace,amplitude=4pt,mirror}] (2.0,-0.07) -- (3.3,-0.07) node[midway,below=4pt] {$\mathcal A$};
  \draw[->] (0,0) -- (200:2.0) node[pos=0.55,fill=white,inner sep=1pt] {$\ell$};
  \draw[->] (0,0) -- (245:3.3) node[pos=0.8,fill=white,inner sep=1pt] {$\ell'$};
  \node[fill=white,inner sep=2pt] at (0,0.75) {$\operatorname{supp}\nu\subset\overline{B_\ell}$};
  \draw[->] (4.2,2.6) node[right] {$y_j\in\mathcal A'$} -- (50:2.72);
  \draw[->] (4.2,-1.6) node[right] {$\operatorname{supp}\mu\subset\overline{\mathcal A}$} -- (-28:3.17);
\end{tikzpicture}
\caption{The geometrical setup of the proof of Lemma~\ref{lem:lower_completion}.}\label{fig:completion_shell}
\end{figure}

\medskip

\noindent\textbf{Step 2: The annulus.}
Choose $h'>4$ such that $4\pi\rho_0h'>K_0+1$, and set
\[
M=\left\lceil\rho_0|B_{\ell+h'}|\right\rceil,
\qquad
\ell'=\left(\frac{3M}{4\pi\rho_0}\right)^{1/3}.
\]
Then $\ell'\ge\ell+h'$, and $\ell'\le\ell+h$ with $h:=h'+1$. Indeed, $M<\rho_0|B_{\ell+h'}|+1$, so it suffices that $(s+1)^3-s^3\ge3/(4\pi\rho_0)$ for $s=\ell+h'$. This holds because $s>\max\{4,1/(4\pi\rho_0)\}$ implies $s^2\ge1/(4\pi\rho_0)$. Moreover, $M-N$ is an integer, and~\eqref{eq:completion_charge_bound} gives
\begin{align*}
M-N&\ge\rho_0\big(|B_{\ell+h'}|-|B_\ell|\big)+Q\ge(4\pi\rho_0h'-K_0)\ell^2>\ell^2,\\
M-N&\le\rho_0\big(|B_{\ell+h'}|-|B_\ell|\big)+Q+1\le C\ell^2,
\end{align*}
where $C$ depends only on $\rho_0$ and $C_0$. Write
\[
\mathcal A=B_{\ell'}\setminus\overline{B_\ell},\qquad
\mathcal A'=B_{\ell'-1}\setminus\overline{B_{\ell+1}};
\]
see Figure~\ref{fig:completion_shell}. Since $h'>4$, we have $|\mathcal A'|\ge8\pi\ell^2$, and hence $(M-N)/|\mathcal A'|\le C$. Put
\[
\nu=\rho_{\Gamma_N}-\rho_0\iden_{B_\ell},\qquad
\mu=\frac{M-N}{|\mathcal A'|}\iden_{\mathcal A'}-\rho_0\iden_{\mathcal A},\qquad
m(t)=\int_{\ell<|x|<t}\mu(x)\,\mathrm dx.
\]
The charge distribution $\nu$ need not be radial. We only use its support in $\overline{B_\ell}$ and its total charge $-Q$. The shell charge $\mu$ is radial and has total charge $Q$, so $m(\ell)=0$ and $m(\ell')=Q$. Let $0<|x|\le\ell$. Since $\mu$ vanishes outside $\{\ell\le|y|\le\ell'\}$, the first term on the right of Newton's formula~\eqref{eq:basic_newton} vanishes, and the second reduces to an integral over $[\ell,\ell']$. Using $\mathrm dm(t)=4\pi t^2\mu(t)\,\mathrm dt$ and integrating by parts, we get
\[
\int_{\RR^3}\frac{\mu(y)}{|x-y|}\,\mathrm dy
=4\pi\int_\ell^{\ell'}\mu(t)\,t\,\mathrm dt
=\int_\ell^{\ell'}\frac{1}{t}\,\mathrm dm(t)
=\frac Q{\ell'}+\int_\ell^{\ell'}\frac{m(t)}{t^2}\,\mathrm dt,
\]
which does not depend on $x$. Since $\nu$ is supported in $\overline{B_\ell}$ with total charge $-Q$, this gives
\[
2D(\nu,\mu)=-\frac{Q^2}{\ell'}-Q\int_\ell^{\ell'}\frac{m(t)}{t^2}\,\mathrm dt.
\]
Then Theorem~\ref{thm:basic_newton}(iv), together with $m(t)=Q$ for $t\ge \ell'$, gives
\[
D(\mu)=\frac12\int_\ell^\infty\frac{m(t)^2}{t^2}\,\mathrm dt
=\frac{Q^2}{2\ell'}+\frac12\int_\ell^{\ell'}\frac{m(t)^2}{t^2}\,\mathrm dt.
\]
Adding these yields the exact identity
\begin{equation}\label{eq:completion_shell_identity}
D(\mu)+2D(\nu,\mu)
=-\frac{Q^2}{2\ell'}
+\frac12\int_\ell^{\ell'}\frac{m(t)^2-2Qm(t)}{t^2}\,\mathrm dt.
\end{equation}
Since $\sup|m|\le\|\mu\|_1\le C\ell^2$ and $\ell'-\ell=\mathcal O(1)$, we obtain
\begin{equation}\label{eq:completion_shell_bound}
D(\mu)+2D(\nu,\mu)\le-\frac{Q^2}{2\ell'}+C\ell^2.
\end{equation}

\medskip

\noindent\textbf{Step 3: A classical shell configuration.}
For $M-N$ particles in the annulus $\mathcal A'$, at positions $Y=(y_1,\dots,y_{M-N})$, the classical energy added to the ball $B_\ell$ by these particles and the annular background $\rho_0\iden_{\mathcal A}$ is
\[
\mathcal W(Y)=U^{\rho_0}_{\mathcal A}(Y)
+\sum_{j=1}^{M-N}v_\nu(y_j)-2D(\nu,\rho_0\iden_{\mathcal A}),
\]
where $U^{\rho_0}_{\mathcal A}$ is the classical Jellium energy~\eqref{eq:LN_U} of the annulus and $v_\nu$ is the potential of $\nu$. We average over configurations with respect to the uniform probability measure on $(\mathcal A')^{M-N}$, writing $\mathrm dY=\mathrm dy_1\cdots\mathrm dy_{M-N}$. Integrating term by term, then dropping the negative last term and applying~\eqref{eq:completion_shell_bound}, gives
\begin{equation}\label{eq:completion_sampling}
\begin{split}
\frac1{|\mathcal A'|^{M-N}}\int_{(\mathcal A')^{M-N}}\mathcal W(Y)\,\mathrm dY
&=D(\mu)+2D(\nu,\mu)-\frac{M-N}{|\mathcal A'|^2}D(\iden_{\mathcal A'})\\
&\le-\frac{Q^2}{2\ell'}+C\ell^2.
\end{split}
\end{equation}
The negative term accounts for the absence of self-interaction of the $M-N$ particles. Introduce the integrable pair potential
\[
w(y)=8\pi^2\frac{\iden_{\{|y|\le1\}}}{|y|^2}.
\]
Its average is
\begin{multline*}
\frac1{|\mathcal A'|^{M-N}}\int_{(\mathcal A')^{M-N}}\sum_{1\le j<k\le M-N}w(y_j-y_k)\,\mathrm dY\\
=\binom{M-N}2\frac1{|\mathcal A'|^2}\int_{\mathcal A'}\int_{\mathcal A'}w(x-y)\,\mathrm dx\,\mathrm dy
\le\frac{(M-N)^2}{2|\mathcal A'|}\|w\|_1\le C\ell^2.
\end{multline*}
The last step uses $\|w\|_1=32\pi^3$, together with $M-N=\mathcal O(\ell^2)$ and $(M-N)/|\mathcal A'|\le C$ from Step~2. Since configurations with coinciding particles form a null set, there is consequently a configuration $\widetilde Y=(\tilde y_1,\dots,\tilde y_{M-N})$ of distinct particles in $\mathcal A'$ such that
\begin{equation}\label{eq:completion_penalized_configuration}
\mathcal W(\widetilde Y)+\sum_{1\le j<k\le M-N}w(\tilde y_j-\tilde y_k)
\le-\frac{Q^2}{2\ell'}+C\ell^2.
\end{equation}

\medskip

\noindent\textbf{Step 4: A canonical shell state.}
For the configuration $\widetilde Y$ of Step~3, set
\[
d_j=\frac12\min\left\{1,\min_{k\ne j}|\tilde y_j-\tilde y_k|\right\},
\]
where the minimum over the empty set is $+\infty$. The balls $B_{d_j}(\tilde y_j)$ are pairwise disjoint and lie in $\mathcal A$. Let $\varphi_j$ be their normalized radial Dirichlet ground states,
\[
\varphi_j(x)=(2\pi d_j)^{-1/2}\,\frac{\sin(\pi|x-\tilde y_j|/d_j)}{|x-\tilde y_j|}\,\iden_{B_{d_j}(\tilde y_j)}(x),
\]
times any fixed unit spin vector in $\CC^q$, as in step~(d) of the proof of Lemma~\ref{lem:LN_trial}, and take their Slater determinant
\[
\Psi=\varphi_1\wedge\cdots\wedge\varphi_{M-N}\in{\textstyle\bigwedge^{M-N}}H^1_0(\mathcal A;\CC^q).
\]
Its kinetic energy satisfies
\begin{equation}\label{eq:completion_kinetic}
\TC\big(|\Psi\rangle\langle\Psi|\big)
=\sum_{j=1}^{M-N}\left(\frac\pi{d_j}\right)^2
\le\sum_{1\le j<k\le M-N}w(\tilde y_j-\tilde y_k)+4\pi^2(M-N).
\end{equation}
The inequality follows by summing $d_j^{-2}\le4\big(1+\sum_{k\ne j}|\tilde y_j-\tilde y_k|^{-2}\iden_{\{|\tilde y_j-\tilde y_k|\le1\}}\big)$ over $j$.
Disjoint supports remove exchange terms between these orbitals, and Newton's theorem makes their mutual Coulomb interaction exactly the interaction of point particles at the $\tilde y_j$:
\begin{equation}\label{eq:completion_coulomb}
\begin{split}
\Cc\big(|\Psi\rangle\langle\Psi|\big)
&=\sum_{1\le j<k\le M-N}\int_{\RR^3}\int_{\RR^3}\frac{|\varphi_j(x)|^2|\varphi_k(y)|^2}{|x-y|}\,\mathrm dx\,\mathrm dy\\
&=\sum_{1\le j<k\le M-N}\frac1{|\tilde y_j-\tilde y_k|}.
\end{split}
\end{equation}
Moreover, $v_\nu$ is harmonic throughout each orbital ball, so radial averaging gives
\[
\int_{\RR^3}|\varphi_j(x)|^2v_\nu(x)\,\mathrm dx=v_\nu(\tilde y_j).
\]
Since $2D(\nu,f)=\int_{\RR^3}f\,v_\nu$ and $\rho_\Psi=\sum_j|\varphi_j|^2$, the ball--shell interaction is therefore
\begin{equation}\label{eq:completion_core_shell}
2D(\nu,\rho_\Psi-\rho_0\iden_{\mathcal A})=\sum_{j=1}^{M-N}v_\nu(\tilde y_j)-2D(\nu,\rho_0\iden_{\mathcal A}),
\end{equation}
exactly the ball--shell part of $\mathcal W(\widetilde Y)$, even when the density in $B_\ell$ is not radial.

We claim that replacing the point charges at $\tilde y_j$ by the orbital densities $|\varphi_j|^2$ in the attraction to the background $\rho_0\iden_{\mathcal A}$ increases the energy by at most $C(M-N)$, that is,
\begin{equation}\label{eq:completion_claim}
-2D(\rho_\Psi,\rho_0\iden_{\mathcal A})
\le-\rho_0\sum_{j=1}^{M-N}V_{\mathcal A}(\tilde y_j)+C(M-N),
\end{equation}
where $V_{\mathcal A}(x)=\int_{\mathcal A}|x-y|^{-1}\,\mathrm dy$, so that $\rho_0V_{\mathcal A}$ is the potential of the background, as in~\eqref{eq:LN_U}. To prove~\eqref{eq:completion_claim}, fix $j$ and consider the kernel difference
\[
\kappa_j(z)=\frac1{|z|}-\int_{\RR^3}\frac{|\varphi_j(\tilde y_j+u)|^2}{|z-u|}\,\mathrm du,\qquad z\ne0.
\]
The function $u\mapsto|\varphi_j(\tilde y_j+u)|^2$ is a radial probability density supported in $\overline{B_{d_j}}$. Hence $\kappa_j\ge0$ by Theorem~\ref{thm:basic_newton}(ii), and $\kappa_j(z)=0$ for $|z|\ge d_j$ by Theorem~\ref{thm:basic_newton}(i). Since the integral in the definition of $\kappa_j$ is nonnegative, we also have
\[
\int_{\RR^3}\kappa_j\le\int_{B_{d_j}}\frac{\mathrm dz}{|z|}=2\pi d_j^2.
\]
Now $2D(\rho_\Psi,\rho_0\iden_{\mathcal A})=\rho_0\int_{\RR^3}\rho_\Psi V_{\mathcal A}$ and $\rho_\Psi=\sum_j|\varphi_j|^2$. Exchanging the order of integration and substituting $x=\tilde y_j+u$ gives
\[
V_{\mathcal A}(\tilde y_j)-\int_{\RR^3}|\varphi_j(x)|^2V_{\mathcal A}(x)\,\mathrm dx
=\int_{\mathcal A}\kappa_j(y-\tilde y_j)\,\mathrm dy
\le2\pi d_j^2.
\]
Multiplying by $\rho_0$, summing over $j$ and using $d_j\le\frac12$ proves~\eqref{eq:completion_claim} with $C=\frac\pi2\rho_0$.

\medskip

\noindent\textbf{Step 5: Assembly.}
Let $\Gamma_M=\Gamma_N\wedge|\Psi\rangle\langle\Psi|$ be the antisymmetrized product given by Lemma~\ref{lem:basic_product}, with $A_1=B_\ell$ and $A_2=\mathcal A$. It is a normalized state of exactly $N+(M-N)=M$ particles supported in $B_{\ell'}$ satisfying homogeneous Dirichlet boundary conditions, and its background charge is $\rho_0|B_{\ell'}|=M$. Since $\rho_0\iden_{B_{\ell'}}=\rho_0\iden_{B_\ell}+\rho_0\iden_{\mathcal A}$ almost everywhere, \eqref{eq:basic_product_J} gives
\begin{equation}\label{eq:completion_product}
\mathcal J_{\rho_0\iden_{B_{\ell'}}}(\Gamma_M)
=\mathcal J_{\rho_0\iden_{B_\ell}}(\Gamma_N)
+\mathcal J_{\rho_0\iden_{\mathcal A}}\big(|\Psi\rangle\langle\Psi|\big)
+2D(\nu,\rho_\Psi-\rho_0\iden_{\mathcal A}).
\end{equation}
By the definition of $\mathcal J_\zeta(\vGamma)$ in Section~\ref{sec:basic},
\[
\mathcal J_{\rho_0\iden_{\mathcal A}}\big(|\Psi\rangle\langle\Psi|\big)
=\TC\big(|\Psi\rangle\langle\Psi|\big)+\Cc\big(|\Psi\rangle\langle\Psi|\big)
-2D(\rho_\Psi,\rho_0\iden_{\mathcal A})+D(\rho_0\iden_{\mathcal A}).
\]
Inserting~\eqref{eq:completion_kinetic}, \eqref{eq:completion_coulomb} and~\eqref{eq:completion_claim}, and comparing with~\eqref{eq:LN_U}, we obtain
\[
\mathcal J_{\rho_0\iden_{\mathcal A}}\big(|\Psi\rangle\langle\Psi|\big)
\le U^{\rho_0}_{\mathcal A}(\widetilde Y)+\sum_{1\le j<k\le M-N}w(\tilde y_j-\tilde y_k)+C(M-N).
\]
Adding~\eqref{eq:completion_core_shell} and recalling the definition of $\mathcal W(Y)$, the last two terms in~\eqref{eq:completion_product} are therefore at most
\[
\mathcal W(\widetilde Y)+\sum_{1\le j<k\le M-N}w(\tilde y_j-\tilde y_k)+C(M-N)
\le-\frac{Q^2}{2\ell'}+C\ell^2,
\]
where we used~\eqref{eq:completion_penalized_configuration} and $M-N=\mathcal O(\ell^2)$. Inserting this into~\eqref{eq:completion_product} and dropping the nonpositive term $-Q^2/(2\ell')$ gives
\[
\mathcal J_{\rho_0\iden_{B_{\ell'}}}(\Gamma_M)
\le\mathcal J_{\rho_0\iden_{B_\ell}}(\Gamma_N)+C_1\ell^2,
\]
which is~\eqref{eq:lower_completion}.

\end{document}